\documentclass{amsart}

\pdfoutput=1 

\usepackage[english]{babel}

\usepackage{arxiv}
\usepackage{array}
\usepackage{tabularx}

\usepackage{booktabs}
\usepackage{multirow}

\usepackage{graphicx}
\usepackage{geometry}
\geometry{left=1in, right=1in, top=1in, bottom=1in}

\usepackage{amsmath,amsthm,amssymb}
\usepackage{cancel}

\usepackage{thmtools}
\usepackage{thm-restate}
\usepackage{xcolor}

\theoremstyle{definition}
\theoremstyle{remark}

\theoremstyle{plain}
\newtheorem{theorem}{Theorem}

\newtheorem{lemma}[theorem]{Lemma}

\theoremstyle{definition}

\newtheorem{remark}[theorem]{Remark}
\newtheorem{definition}[theorem]{Definition}
\newtheorem{corollary}[theorem]{Corollary}

\raggedbottom

\usepackage{algorithm2e}

\title[Component twin-width as a parameter for BINARY-CSP]{Component twin-width as a parameter for BINARY-CSP and its semiring generalisations}
\author{Ambroise Baril}
\address[A. Baril]
   {Universit\'e de Lorraine, CNRS, LORIA \\
    F-54000 Nancy \\
    France
    }
\author{Miguel Couceiro}
\address[M. Couceiro]
   {Universit\'e de Lorraine, CNRS, LORIA \\
    F-54000 Nancy \\
    France
    }

\author{Victor Lagerkvist}
\address[V. Lagerkvist]
   {Dep. Computer and Information
      Science, \\Link\"opings  Universitet, Sweden}

\newcommand{\R}{{\mathcal{R}}}

\newcommand{\N}{\mathbb{N}}
\newcommand{\F}{\mathcal{F}}
\renewcommand{\P}{\mathcal{P}}
\renewcommand{\S}{\mathbb{S}}
\newcommand{\T}{\mathbb{T}}
\newcommand{\Rbar}{\overline{\mathbb{R}}}
\newcommand{\Nbar}{\overline{\mathbb{N}}}
\newcommand{\red}{\textcolor{red}{\textbf{e}}}

\usepackage{ifthen}

\begin{document}

\maketitle

\thispagestyle{empty}

\begin{abstract}
We investigate the fine-grained and the parameterized complexity of several generalizations of binary constraint satisfaction problems (BINARY-CSPs), that subsume variants of graph colouring problems. Our starting point is the observation that several algorithmic approaches that resulted in complexity upper bounds for these problems, share a common structure. 

We thus explore an algebraic approach relying on semirings that unifies different generalizations of BINARY-CSPs (such as the counting, the list, and the weighted versions), and that facilitates a general algorithmic approach to efficiently solving them. The latter is inspired by the (component) twin-width parameter introduced by Bonnet et al., which we generalize via edge-labelled graphs in order to formulate it to arbitrary binary constraints. We consider input instances with bounded component twin-width, as well as constraint templates of bounded component twin-width, and obtain an FPT algorithm as well as an improved, exponential-time algorithm, for broad classes of binary constraints.

We illustrate the advantages of this framework by instantiating our general algorithmic approach on several classes of problems (e.g., the $H$-coloring problem and its variants), and showing that it improves the best complexity upper bounds in the literature for several well-known problems.

\end{abstract}

\section{Introduction}

{\em Constraint satisfaction problems} (CSPs) are rooted in artificial intelligence and operations research and  provide a rich framework for encoding many different types of problems (for a wealth of applications, see e.g.\ the book by Rossi et al.~\cite{rossi2006handbook}).
A CSP example  of paramount importance is the {\em Boolean satisfiability problem}, SAT, which can be viewed as a constraint satisfaction problem over the Boolean domain. 
Another noteworthy CSP example is the $H$-COLORING problem~\cite{hell1990complexity} that asks whether there exists an homomorphism from an (undirected) input graph $G$ to the target graph $H$. Indeed, this problem can be formalized as a particular case of a CSP over binary constraints (BINARY-CSP): given a set of variables $V$ and a set of constraints of the form $R(u_1,u_2)$ for $(u_1,u_2) \in V^2$ and binary relations $R\subseteq D^2$, the objective is to map variables in $V$ to values in the domain $D$ such that every such constraint is satisfied. Such a mapping is then referred as a {\em solution} of the instance. When the relations belong to a fixed language $\Gamma$, this problem is usually denoted by BINARY-CSP$(\Gamma)$, and the decision problem is stated as the problem of deciding whether a solution exists. Clearly, the $H$-COLORING problems subsume the well known $q$-COLORING problem
with $H$ being the $q$-clique, which itself can be seen as the satisfaction of constraints expressed through the relation $\text{NEQ}_q=\{ (a,b)\in [q]^2 \mid a\neq b \}$.  Hell \& Ne{\v{s}}et{\v{r}}il~\cite{hell1990complexity} showed that the complexity of $H$-COLORING problems for a given graph $H$ is NP-complete whenever $H$ is not bipartite; otherwise, it is in P. This result can be seen as a particular case of the {\em CSP dichotomy theorem}
which states that every CSP problem (and, in particular, BINARY-CSP) is either in P or NP-complete~\cite{DBLP:journals/corr/abs-2007-09099,DBLP:journals/jacm/Zhuk20}.

For certain applications however, the basic decidability setting provides insufficient modelling power and one may be instead interested  in counting the number of solutions (we prefix the problem by \# to denote the corresponding counting problem). Also, counting problems may  be subjected to additional constraints such as lists, where a function $l:V\mapsto\P(D)$ is given and any solution $f$ has to satisfy $\forall u\in V,f(u)\in l(u)$, and costs, where sending $u\in V$ to $v\in D$ has a cost $C(u,v)$, with the goal of minimizing ${\scriptstyle \sum\limits_{u\in V}} C(u,f(u))$. This framework makes it possible to encode phase transition systems modelled  by partition functions, modeling problems such as counting $q$-particle Widom–Rowlinson configurations and counting Beach models, or the classical Ising model (for many more examples, see e.g.\  Dyer \& Greenhill~\cite{dyer2000complexity}). The complexity of these generalized problems has been the subject of intense research, starting with the complexity dichotomy for the \#$H$-COLORING problems by Dyer \& Greenhill~\cite{dyer2000complexity}, later extended to \#CSP by Bulatov~\cite{DBLP:journals/corr/abs-2007-09099}, and culminating into the dichotomy theorem by Cai \& Chen~\cite{cai2017} in the presence of weights.

In contrast to the classical CSP, which admits a rich tractability landscape~\cite{DBLP:journals/corr/abs-2007-09099,DBLP:journals/jacm/Zhuk20}, the associated counting problems are generally hard except for trivial cases. For example, \#$H$-COLORING is \#P-hard if $H$ is not a disjoint union of looped cliques and loopless complete bipartite graphs, and is in P otherwise~\cite{dyer2000complexity}. Thus, non-trivial templates $H$ of actual interest tend to result in NP-hard (\#)$H$-COLORING problems, necessitating tools and techniques from {\em parameterized} and {\em fine-grained} complexity.

Here, there is no lack of results for {\em specific} templates $H$. One of the most well-known results is likely Bj\"orklund et al.'s \cite{bjorklund2009set} dynamic programming algorithm which solves $q$-COLORING, as well as the  counting, summation, and optimization versions, in $O^*(2^n)$\footnote{The $O^*$ notation means that we ignore polynomial factors.} time. Another  result for $\#H$-COLORING from D{\'\i}az et al.~\cite{diaz2002counting} is the existence of an improved algorithm assuming that the input graph $G$ has tree-width $k$ and is given with a nice tree decomposition: for any graph $H$, $\#H$-COLORING can then be solved on $G$ in $O(|V_H|^{k+1}\min(k,n)\cdot n)$ time using only $O(|V_H|^{k+1}\log(n))$ additional space. 
The latter examples indicate that  bounding the class of input instances by parameters such as tree-width and the related  clique-width, can be applied more generally and with powerful algorithmic results.

More generally, two noteworthy extensions of $H$-COLORING can be defined  by restricting either the input or the target graphs with respect to a given parameter $k$. In the former, the goal is to obtain FPT algorithms, while one in the latter simply wishes to improve upon exhaustive search ($O^*(|D|^n)$), with a single-exponential running time $O^*(c^n)$ for a fixed $c$ depending only on $k$ representing the most desired outcome. Hence, restricting the class of input graphs yields a problem of interest in parameterized complexity while restricting the target graphs yields a problem with closer ties to fast exponential-time algorithms and fine-grained complexity. For instance, the algorithm by Fomin et al.~\cite{fomin2007exact} solves $H$-COLORING in  $O^*(( t+3 )^n)$ time if $H$ has tree-width at most $t$ on every graph $G$ with $|V_G|=n$, assuming that the tree-decomposition is given. Moreover, Wahlström \cite{wahlstrom2011new} designed an algorithm based on $k$-expressions and clique-width in order to solve the $\#H$-COLORING problem in time $O^*( (2k+1)^n )$, assuming that the clique-width of the graph $H$ is $k\geq 1$. The running time descends to $O^*((k+2)^n)$ if $H$ admits a linear $k$-expression, leading to a $O^*(6^n)$ complexity for $\#C_p$-COLORING (where $C_p$ is the $p$-cycle for $p\geq 5$, $C_p$ having a linear $4$-expression), and an $O^*(5^n)$ complexity for $\#H$-COLORING, for any cograph $H$ (the cographs with at least one edge being exactly the graphs of clique-width $2$). Recently, Okrasa and Rz\c ażewski gave in \cite{okrasa2020finegrained} an algorithm running in time $O^*( |V_H|^{tw(G)} )$ solving $H$-COLORING (assuming an optimal tree-decomposition of $G$ is given) and established its optimality, in the sense that $H$-COLORING can not be solved in time $O^*( (|V_H|-\varepsilon)^{tw(G)} )$ for all $\varepsilon>0$ (under the hypothesis that $H$ is a projective core), unless the {\em strong exponential-time hypothesis} (SETH) fails\footnote{I.e., that SAT can not be solved in time $(2-\varepsilon)^n\times (n+m)^{O(1)}$ on instances with $n$ variables and $m$ clauses. }. They also achieved similar results involving clique-width \cite{DBLP:conf/icalp/GanianHKOS22}, where they showed the existence and the optimality of their algorithm running in $O^*(s(H)^{cw(G)})$ (with $s(H)$ being a new structural parameter of $H$), in the sense that the running time $O^*((s(H)-\varepsilon)^{cw(G)})$ can not be reached under the SETH.

In this paper we pursue this line of research and study the fine-grained and parameterized complexities of $(\#)H$-COLORING problems and, more generally, of \#BINARY-CSP$(\Gamma)$ with a particular focus on the parameter {\em twin-width}, recently introduced and now widely investigated~\cite{DBLP:conf/icalp/BergeBD22,bonnet2022twin8,bonnet2022twin-exponential-treewidth,bonnet2020twin3,bonnet2021twin2,bonnet2022twin7,bonnet2022twin4,bonnet2022twin6,bonnet2022twin-poly-kernels,bonnet2020twin1,bonnet2022reduced,bonnet2021twin-permutations}. This parameter, as well as the underlying {\em contraction sequences}, give information on whether two vertices are ``similar'' (in the sense that they have almost the same neighborhoods) in order to treat them simultaneously and thus reduce the computation time. A major achievement of twin-width is that deciding whether a graph $G$ is a model of a  closed first-order formula $\varphi$ is FPT when parameterized by the twin-width of $G$ and the length of $\varphi$ \cite{bonnet2020twin1}. Also, the $k$-INDEPENDENT SET problem is FPT when  parameterized by $k$ and twin-width \cite{bonnet2020twin3}. In \cite{bonnet2022twin6} Bonnet et al. also gave a proof that the $q$-COLORING problem was FPT when  parameterized by {\em component twin-width}. Specifically, the parameter {\em component twin-width} appears to be the most relevant and natural parameter in the setting of homomorphism problems. In fact,  many of the algorithms in~\cite{bonnet2020twin3} that are FPT when parameterized by twin-width, such as the one solving $k$-IND-SET, can be seen as an optimization of a more natural FPT algorithm parameterized by component twin-width.

Despite their impressive success,  parameters based on twin-width
have not been used to tackle $\#H$-COLORING problems, except for the very restricted case of $q$-COLORING briefly discussed in \cite{bonnet2022twin6}.
To extend the applicability of such parameters to larger classes of graphs and problems, in Section \ref{sec:edge_labelled_graphs} we propose a representation of instances and templates of BINARY-CSPs as ``edge-labelled graphs''. However,  classical representations via Gaifman graphs are not adapted to algorithms dealing with component twin-width because the latter does not necessarily increase with higher number of edges.

We thus propose the algebraic notion of {\em pre-morphism into a semiring} in Section \ref{sec:semirings unification}, that enables a unified formalism for the list, the counting and the cost generalisations of BINARY-CSP. This formalism also subsumes the semiring based  generalisations of CSP (SCSP) by Bistarelly \cite{bistarelli1997semiring}, including in particular the counting version of CSP. Even though Wilson \cite{DBLP:conf/ijcai/Wilson05} proposed an equivalent generalisation through semirings, Wilson's formalism can be seen as a descendant approach, studying subsets of the set of solutions with less and less specifications over time. In contrast, our method builds the set of solutions from trivial cases and two basic operations $\uplus$ and $\Join$, that we will perform (through our pre-morphism) in the semiring instead. This approach is often a more common and prolific way to involve graph parameters; see, e.g.,  Wahlström \cite{wahlstrom2011new} in the context of clique-width.

Finally, in Section \ref{sec:tractability ctww} we use contraction sequences along with component twin-width to implement dynamic programming algorithms that  efficiently solve these generalisations of BINARY-CSPs. Here, we consider both the case where we bound the input graphs and the case where we bound the template by component twin-width, and we solve both of these questions by two novel algorithms: an FPT algorithm applicable for inputs of bounded component twin-width, and a superpolynomial but significantly improved algorithm applicable to templates of bounded component twin-width.
They strongly generalize and improve the results by, e.g., Wahlstr\"om \cite{wahlstrom2011new}, and are to the best of our knowledge the most general algorithms of their kind (for binary constraints). In fact, our two algorithms even solve combinations of generalisations of BINARY-CSP without impacting the running time. In Table \ref{tab:upper bounds} we summarize a few cases where our approach improves the upper bounds that we can derive from  \cite{wahlstrom2011new}, which uses $k$-expressions and clique-width, and which complements the $q$-COLORING problems untreated by the inclusion-exclusion method of Bj\"orklund et al.~\cite{bjorklund2009set}. Regarding the latter, we also observe that while it runs in $O^*(2^n)$ time and is able to solve many combinations of generalisations of $q$-COLORING, it does not cover all combinations of generalisations of BINARY-CSP solved by our algorithms, and is naturally restricted to the very specific case of complete graphs. 
For our FPT algorithms, component twin-width and clique-width are functionally equivalent \cite{bonnet2022twin6} on graphs, which allows us to derive FPT conditions for generalized $H$-COLORING problems with respect to clique-width.
This extends the corollary derived from Courcelle et al. \cite{courcelle1990monadic} that for every graph $H$, $H$-COLORING is FPT when parameterized by clique-width even if we add costs, since we now allow combinations of counting generalisations. We summarize the FPT results in Table \ref{tab:tractability results}.
These results raise several questions for future research, and we discuss some of them in Section~\ref{sec:conclusions}.

\begin{table}
    \centering
    \begin{tabular}{c c c c c c}
    \textbf{Generalisation$\setminus$Graph} & Clique & Cograph & Even  cycle & Odd cycle \\
    \hline
    $H$-COLORING & $\textcolor{blue}{O^*(3^n)}$ & $\textcolor{blue}{O^*(3^n)}$ & $\textcolor{green}{O^*(5^n)}$ & $\textcolor{violet}{O^*(5^n)}$ \\
    $\#H$-COLORING & $\textcolor{blue}{O^*(3^n)}$ & $\textcolor{red}{O^*(3^n)}$ & $\textcolor{red}{O^*(5^n)}$ & $\textcolor{red}{O^*(5^n)}$ \\
    list-$H$-COLORING & $\textcolor{blue}{O^*(3^n)}$ & $O^*(3^n)$ & $O^*(5^n)$ & $O^*(5^n)$  \\
    \#list-$H$-COLORING & $\textcolor{blue}{O^*(3^n)}$ & $O^*(3^n)$ & $O^*(5^n)$ & $O^*(5^n)$  \\
    cost-$H$-COLORING & $\textcolor{blue}{O^*(3^n)}$ & $O^*(3^n)$ & $O^*(5^n)$ & $O^*(5^n)$  \\
    \#cost-$H$-COLORING & $O^*(3^n)$ & $O^*(3^n)$ & $O^*(5^n)$ & $O^*(5^n)$  \\
    \#list-cost-$H$-COLORING & $O^*(3^n)$ & $O^*(3^n)$ & $O^*(5^n)$ & $O^*(5^n)$  \\
    weighted-$H$-COLORING & $\textcolor{blue}{O^*(3^n)}$ & $O^*(3^n)$ & $O^*(5^n)$ & $O^*(5^n)$  \\
    \#weighted-$H$-COLORING & $O^*(3^n)$ & $O^*(3^n)$ & $O^*(5^n)$ & $O^*(5^n)$  \\
    \#list-weighted-$H$-COLORING & $O^*(3^n)$ & $O^*(3^n)$ & $O^*(5^n)$ & $O^*(5^n)$  \\    
    restricted-$H$-COLORING & $\textcolor{blue}{O^*(3^n)}$ & $O^*(3^n)$ & $O^*(5^n)$ & $O^*(5^n)$  \\
    \#restricted-$H$-COLORING & $\textcolor{blue}{O^*(3^n)}$ & $O^*(3^n)$ & $O^*(5^n)$ & $O^*(5^n)$  \\
    \#list-restricted-$H$-COLORING & $\textcolor{blue}{O^*(3^n)}$ & $O^*(3^n)$ & $O^*(5^n)$ & $O^*(5^n)$  \\
    \#list-restricted-cost-$H$-COLORING & $O^*(3^n)$ & $O^*(3^n)$ & $O^*(5^n)$ & $O^*(5^n)$ \\
    \#list-restricted-weighted-$H$-COLORING & $O^*(3^n)$ & $O^*(3^n)$ & $O^*(5^n)$ & $O^*(5^n)$  \\
    \hline
    \end{tabular}
    \caption{Upper bounds derived from Theorem \ref{thm:Solving Omega H-(R-MORPHISM)} and Algorithm \ref{algo:main algorithm}. The $O^*(3^n)$ bounds in blue can be improved to $O^*(2^n)$ by the inclusion-exclusion method \cite{bjorklund2009set}. Even-cycle coloring is in P, (our algorithm solves it in $O^*(5^n)$, in green on this table). Also, $C_{2k+1}$-COLORING (with $C_{2k+1}$ the $(2k+1)$ odd-cycle) can be done in $O^*((\alpha_k)^n)$, with $(\alpha_k)_{k\geq 1}$ decreasing and tending to $1$, and with $\alpha_1\leq\sqrt{2}$ \cite{fomin2007exact} (our algorithm solves it in $O^*(5^n)$, in violet in this table). We improve the results of Wahlström \cite{wahlstrom2011new} of $\#$cograph-COLORING from $O^*(5^n)$ to $O^*(3^n)$, and of $\#$cycle-COLORING from $O^*(6^n)$ to $O^*(5^n)$ (in red), while also generalising them. To our knowledge, no fine-grained algorithm have been given in the literature for every other problem in this table.}
    \label{tab:upper bounds}
\end{table}

\section{Preliminaries}
\label{sec:prel}

In this section we recall the basic notation and terminology that will be used throughout the paper.

\subsection{Basic Notation}

A {\em graph} $H$ is a tuple $(V_H,E_H)$ where $V_H$ is a finite set referred as the {\em set of vertices of }$H$ and $E_H$ is a binary relation over $V_H$, called the {\em set of edges of }$H$\footnote{Any graph $H$ will always be denoted $H=(V_H,E_H)$.}. The graph $H$ is said to be {\em loopless} if $E_H$ is irreflexive and {\em non-oriented} if $E_H$ is symmetric.

Given an arbitrary set $A$, we denote by $\P(A)$ the set of all subsets of $A$. For $S\in \P(\P(A))$ (i.e. $S\subseteq \P(A)$), let $\cup S$ denote the set ${\scriptstyle \bigcup\limits_{s\in S}} s$. Note that $(\cup S\in \P(A))$ (i.e.\ $S\subseteq A$). We say that $S$ is a {\em proper partition of }$A$, if every $a\in A$ belong to a unique $s\in S$, and that $\emptyset\notin S$. For two proper partitions $S'$ and $S$ of $A$, we say that $S'$ {\em respects} $S$ if for all $s'\in S'$, there exists $s\in S$ such that $s'\subseteq s$. We occasionally relax the notation and allow a proper partition of $A$ to be family of pairwise disjointed non-empty subsets of $A$ whose union equals $A$. This will be useful in contexts where the order of the partition matters. If $B$ is an arbitrary set, we let $B^A$ be the set of functions from $A$ to $B$.
For $f\in B^A$ and $A'\subseteq A$, the restriction of $f$ to $A'$ is the function $f|_{A'}: A' \mapsto  B$. Also, for $B'\subseteq B$ and $f\in B^A$ with $f(A)\subseteq B'$, the corestriction of $f$ to $B'$ is the function $f|^{B'}: A \mapsto  B'$

Throughout, $p$ will only be used to denote an integer $\geq 1$, and $[p]$ is the set $\{1,\dots,p\}$. We let $\N$, $\mathbb{R}$, $\mathbb{R}_+$ denote the natural, real, and positive real numbers, respectively. We let $\overline{\N}=\N\cup\{+\infty\}$,  $\Rbar:=\mathbb{R}\cup\{+\infty\}$ and $\Rbar_+:=\mathbb{R}_+\cup\{+\infty\}$. If $A\subseteq \Rbar$, we denote its minimum by $\min A$, with the convention that $\min \emptyset = \infty$. Similarly, if $A\subseteq \Rbar_+$, we denote its maximum by $\max A$, with the convention that $\max \emptyset = 0$. Also, for $n,m\geq 1$, $a=(a_1,\dots,a_n)\in A^n$, and $a'=(a_{n+1},\dots,a_{n+m})\in A^m$, we denote the concatenation of $a$ and $a'$  by $a,a'$ and $(a,a')$ that is defined as the $(n+m)$-tuple $(a_1,\dots,a_n,a_{n+1},\dots,a_{n+m})\in A^{n+m}$. For $I=\{i_1,\dots i_{|I|}\}\subseteq [n]$ with $i_1<\dots<i_{|I|}$, we denote by $a_I$ and $(a)_I$  the tuple $(a_{i_1},\dots,a_{i_{|I|}})\in A^{|I|}$.

\subsection{Constraint Satisfaction Problems}

The {\em constraint satisfaction problem} asks whether it is possible to assign values to variables while satisfying all given constraint of the instance. Additionally, it is common to parameterize the problem by a set of relations $\Gamma$ and a domain $D$, and constraints are then only allowed to use relations from $\Gamma$.

\begin{algorithm}

\textbf{CSP}$(\Gamma)$:

\textbf{Input:} A set $V$ of variables and a set $C$ of constraints of the form $(R_i,(v^1_i,\dots,v^{\text{ar}(R_i)}_i))$ where $R_i\in\Gamma$ and $(v^1_i,\dots,v^{\text{ar}(R_i)}_i)\in V^{\text{ar}(R_i)}$ (where $\text{ar}(R_i)$ is the arity of the relation $R_i$).

\textbf{Output:} 1 if there exists a function $f:V\mapsto D$ such that for all constraints $(R_i,(v^1_i,\dots,v^{\text{ar}(R_i)}_i))\in C$, we have $(f(v^1_i),\dots,f(v^{\text{ar}(R_i)}_i))\in R_i$, 0 otherwise.

\end{algorithm}

If $\Gamma$ only contains binary relations, then we write BINARY-CSP$(\Gamma)$.
For example, when $\Gamma=\{ E_H \}$,  BINARY-CSP($\Gamma$) is equivalent to the well-known {\em (digraph) $H$-COLORING problem}, often referred as $H$-COLORING when $H$ is symmetric. Regarding the complexity of the latter, Hell and Ne{\v{s}}et{\v{r}}il \cite{hell1990complexity} proved that $H$-COLORING is in P when  $H$ is bipartite, and  it is NP-complete, otherwise. This result was later generalized to the CSP dichotomy theorem, proven independently by Zhuk \cite{DBLP:journals/jacm/Zhuk20} and Bulatov~\cite{DBLP:journals/corr/abs-2007-09099}, which states that every problem of the form CSP($\Gamma$) is either in P or it is NP-complete. In the sequel, we prefer to state our algorithmic results for the most general problem possible (typically BINARY-CSP$(\Gamma)$ since they imply the results of the specific problems (e.g., $H$-COLORING). Additionally, we consider the following generalized problems:

\begin{itemize}

\item The {\em counting} version (\#):  how many functions $f:V\mapsto D$ are solutions of the instances?

\item The {\em list} version:
given a matrix $L=\{0,1\}^{V\times D}$ and an instance $\mathcal{I}$, is there  a solution $f:V\mapsto D$ satisfying $\forall (u,d)\in V\times D, f(u)=d \implies L(u,d)=1$?

\item The {\em cost} version \cite{kobler2003edge}:
given a matrix $C\in\Rbar^{V\times D}$ and an instance $\mathcal{I}$, what is the minimum value of ${\scriptstyle\sum \limits_{u\in V}} C(u,f(u))$ for  a solution $f:V\mapsto D$?

\item The {\em weighted} version \cite{escoffier2006weighted}:
given a matrix $W\in\Rbar_+^{V\times D}$ and an instance $\mathcal{I}$, what is the minimum value of
${\scriptstyle\sum\limits_{d\in D}} \max\limits_{u\in f^{-1}(\{d\})} W(u,f(u))$ for a solution $f:V\mapsto D$? 

\item The {\em restrictive} version \cite{diaz2005restrictive}: given a tuple $R=\Nbar^D$ and and instance $\mathcal{I}$, does there exist a solution $f:V\mapsto D$ where $\forall d\in D, R(d)\neq +\infty \implies |f^{-1}(\{d\})|=R(d)$?
\end{itemize}

These generalisations can naturally also be combined together. For example, the counting-cost version is the task of counting solutions  of minimal cost.
We will see in Section~\ref{sec:semirings unification} that the semiring formulation of CSP makes it possible to subsume all of these generalizations into a single formalism.

\subsection{Parameterized Complexity}

For a computational problem $\mathcal{Q}$ we let $dom(\mathcal{Q})$ be the set of all possible instances of $\mathcal{Q}$. A function $\mu:dom(\mathcal{Q}) \mapsto \N$ is called a {\em parameter} of $\mathcal{Q}$. We say that $\mathcal{Q}$ is {\em fixed-parameter tractable w.r.t. $\mu$} if there exists an algorithm solving $\mathcal{Q}$ on every instance $x\in dom(\mathcal{Q})$ of size $\|x\|$ in time $O(f(\mu(x))\times \|x\|^{O(1)})$, where $f$ can be any computable function. For additional details, we refer to the textbook by Downey and Fellows~\cite{downey2012parameterized}.
For example, {\em treewidth} of graphs is a well-known parameter frequently resulting in fixed parameter tractability, including the (\#)$H$-COLORING problem~\cite{diaz2002counting}.
Unfortunately, computing treewidth is in general NP-hard, and the classes of graphs of bounded tree-width are rare. Hence,  finding additional examples of (graph width) parameters can in practice be very useful.
Another frequently occurring parameter is {\em clique-width}. This parameter can be seen as a generalisation of tree-width, in the sense that any treewidth-bounded class of graphs is also clique-width bounded, even though the reverse is not true. The use of clique-width has led to many positive results in graph algorithms, e.g., Bodlaender et al.~\cite{bodlaender2010faster} designed an algorithm solving Min-DOMINATING-SET in time $O^*(3^{\frac{\omega}{2}k})$ on graphs of clique width $\leq k$ ($\omega < 2.376$). Kobler et al. \cite{kobler2003edge} also solved a cost version of list-$q$-COLORING in time $O(2^{2qk}qk^3n)$ on graphs $G$ with $n$ vertices and clique-width $k$, and more generally Wahlstr\"om proved~\cite{wahlstrom2011new} that (\#)$H$-COLORING is FPT when parameterized by the clique-width of the input graph.
Recently, Bonnet et. al \cite{bonnet2020twin1} introduced {\it contraction sequences} in order to use dynamic programming on graphs. Contraction sequences allows one to derive a lot of additional parameters such as {\em twin-width}, oriented twin-width, total twin-width and component twin-width \cite{bonnet2022twin6}.
We do not formally define these parameters here since we in Section~\ref{sec:edge_labelled_graphs} define these in a slightly more general context, but remark that component twin-width and clique-width (and thus rank-width) have been proven to be functionally equivalent, allowing one to translate FPT results between these parameters. 

\subsection{Fine-Grained Complexity}

A related approach to tackle (NP-)hard problem is the construction of exponential time algorithms running in $O^*(c^n)$ time  for as small $c$ as possible, where $n$ is a complexity parameter. In this paper, $n$ will always refer to the number of variables (or vertices) of a CSP (or $H$-COLORING) instance. This approach, especially in the context of proving matching {\em lower bounds} under the {\em (strong) exponential-time hypothesis} ((S)ETH), is often called {\em fine-grained complexity}.

Notable, general results for $H$-COLORING problems include the $O^*((2k + 1)n)$ algorithm by Fomin et al.~\cite{fomin2007exact} for $H$-COLORING for graphs $H$ of treewidth $k$, and
Wahlstr\"om's~\cite{wahlstrom2011new} algorithm based on $k$-expressions and clique-width which solves $\#H$-COLORING problem in time $O^*((2k+1)^n )$  when $H$ has clique-width $k\geq 1$. The running time descends to $O^*((k+2)^n)$ if $H$ admits a linear $k$-expression, leading to a $O^*(6^n)$ complexity for $\#C_q$-COLORING (where $C_q$ is the $q$-cycle for $q\geq 5$, $C_q$ having a linear $4$-expression), and a $O^*(5^n)$ complexity for $\#H$-COLORING, for any cograph $H$. As we will see later, these bounds can in many cases be significantly improved by considering component twin-width instead of clique-width.

\section{Edge-Labelled Graphs and $\R$-morphisms} \label{sec:edge_labelled_graphs}

In this section we consider a generalization of graphs, {\em edge-labelled graphs}, where the corresponding morphism notion greatly increases the expressive power and leads to a rich computational problem. The main idea is that the morphism notion encodes the ``rules'' of the problems: for example, the ``rule'' respected by homomorphisms are that edges are sent to edges, whereas in the subgraph-isomorphism problem, the ``rules'' are that edges are sent to edges, non-edges to non-edges, and different vertices are sent to different vertices. 
We begin in Section~\ref{sec:defs} by extending the concept of graphs by allowing labels on every pair of vertices, and explicit the notion of morphism relatively to a binary relation over the sets of labels, in which every pair of vertices must be send to a pair of vertices whose label is prescribed by the relation. Then, in Section \ref{sec:hom_equivalence}, we prove that the associated morphism problems naturally encode exactly the problems of the form BINARY-CSP($\Gamma$) via a reduction which does not introduce any fresh variables, and where the sets of solutions does not change. Last, in Section~\ref{sec:contractions} we show how to formulate the component twin-width parameter in the context of edge-labelled graphs.

\subsection{Definitions} \label{sec:defs}

We first introduce the concept of ``edge-labelled graph''.

\begin{definition} \label{def:el_graph}
An {\em edge-labelled graph} $G$ is a structure $G=(V_G,l_G,X_G)$ where $V_G$ and $X_G$ are finite sets and $l_G:(V_G)^2\mapsto X_G$. The sets $V_G$ and $X_G$ will be referred to respectively the sets of {\em vertices} and {\em labels of edges} of $G$, and $l_G$ will be referred as the {\em label function} of edges of $G$.
\end{definition}

For any edge-labelled graph $G$, the set of vertices of $G$, the set labels of edges of $G$, and the label function of edges of $G$, will always be denoted $V_G$, $X_G$ and $l_G$ (as in Definition~\ref{def:el_graph}). For $S\subseteq V_G$, $G[S]$ denotes the edge-labelled graph induced by $S$ on $G$, i.e., $G[S]:=(S,l_G|_{S^2},X_G)$.

We now introduce a symbol $\red$ that will play a special role as a label for edge-labelled graph (see Section \ref{sec:contractions} for more details). We implicitly assume that this symbol does not occur in any other context.
An edge-labelled graph $G$ is said to be {\em $\red$-free} if $\red\notin X_G$.
We now have the necessary technical machinery to properly generalize the concept of a graph homomorphism to edge-labelled graphs. 

\begin{definition}

Let $G$ and $H$ be two $\red$-free edge-labelled graphs, and let $\R\subseteq X_G\times X_H$.
A function $f \colon V_G\mapsto V_H$ is said to be an $\R${\em -morphism} ($f \colon G\underset{\R}{\rightarrow} H$) if
$\forall (u,v)\in (V_G)^2, (l_G(u,v),l_H(f(u),f(v))) \in \R$.

\end{definition}

In this definition, the relation $\R$ encodes whether a pair of vertices $(u,v)$ of $G$ of label $x\in X_G$ is allowed to be sent to a pair of vertices $(a,b)$ of $H$ of a label $y\in X_H$: which it is if and only if $(x,y)\in\R$.
By a slight abuse of notation, viewing a graph as an edge-labelled graph whose edges are labelled by $1$, and every other pair of vertices is labelled by $0$, we notice that a homomorphism is exactly a HOM-morphism if we let HOM be the binary relation over $\{0,1\}$ defined as HOM$=\{(0,0),(0,1),(1,1)\}$.

Similarly to $H$-COLORING, we will mostly be interested in the version of this problem where the target edge-labelled graph $H$ is fixed. This allows us to model different types of problems simply by changing the template $H$.
Thus, let $H$ be a $\red$-free edge-labelled graph, $X$ a finite set, and $\mathcal{R}\subseteq X\times X_H$. We define the following computational problem.

\begin{algorithm}

\textbf{$H$-($\mathcal{R}$-MORPHISM):}

\textbf{Instance:} An $\red$-free edge-labelled graphs $G$ with $X_G=X$.

\textbf{Question:} Does there exist a function $f \colon G \underset{\mathcal{R}}{\rightarrow} H$ ?

\end{algorithm}

Note that $H$-COLORING is the same problem as $H$-(HOM-MORPHISM), i.e., $H$-($\mathcal{R}$-MORPHISM) is at least as expressive as $H$-COLORING. We will see Section \ref{sec:hom_equivalence} that these problems encode exactly the various BINARY-CSP($\Gamma$) problems, for any set of binary relations $\Gamma$. 

\subsection{Equivalence Between $H$-($\mathcal{R}$-MORPHISM) and BINARY-CSP$(\Gamma)$} \label{sec:hom_equivalence}

We now show that any BINARY-CSP($\Gamma$) problem can be reformulated as a $H$-($\mathcal{R}$-MORPHISM) problem. For any set $\Gamma$ of binary relation over a finite domain, we thus need to build an $\red$-free edge-labelled graph $H(\Gamma)$, a set $X$ and a relation $\R_{\Gamma}\subseteq X\times X_{H(\Gamma)}$, in such a way that BINARY-CSP($\Gamma$) and $H(\Gamma)$-($\mathcal{R}_{\Gamma}$-MORPHISM) are the same problems.

\begin{definition}

For a finite domain $D$ and a set of binary relations $\Gamma$ over $D$ we let:

\begin{itemize}

    \item $H(\Gamma):=(D,l_{\Gamma},\mathcal{P}(\Gamma))$ with $l_{\Gamma}$ being defined by: $\forall (a,b)\in D, l_{\Gamma}(a,b) = \{ R\in\Gamma\mid (a,b)\in R\}$, and
    
    \item $\mathcal{R}_{\Gamma}$ be defined by, for all $(Y,Z)\in\mathcal{P}(\Gamma)^2, (Y,Z)\in\mathcal{R}_{\Gamma} \iff Y\subseteq Z$.
    
\end{itemize}

\end{definition}

Next, we show how to build an equivalent instance of $H(\Gamma)$-($\R_{\Gamma}$-MORPHISM), given an instance of BINARY-CSP($\Gamma$).
Intuitively, for all $(u,v)\in V^2$, the label $l_{G(\mathcal{I})}(u,v)$ represent the constraints that must be respected for the tuple $(u,v)$, while, for $f \colon V\mapsto D$, the label $l_{\Gamma}((f(u),f(v)))$ represent the constraints of $\Gamma$ that are actually satisfied by $(f(u),f(v))$.
In the following theorem, for an instance $\mathcal{I} = (V,C)$ of BINARY-CSP($\Gamma$), we let $G(\mathcal{I}):=(V,l_{\mathcal{I}},\mathcal{P}(\Gamma))$ with $l_{\mathcal{I}}$ being defined by: $\forall (u,v)\in V, l_{\mathcal{I}}(u,v) := \{ R\in\Gamma\mid R(u,v) \in C\}$.

\begin{restatable}{theorem}{BinaryCSPtoRmorphism}\label{thm:Binary-CSP to R-morphism}
Let $\Gamma$ be a set of binary relations over a finite domain $D$, and let $\mathcal{I}$ be an instance of BINARY-CSP$(\Gamma)$ over a set of variables $V$.
Then, for every $f \colon V\mapsto D$, $f$ is a solution to $\mathcal{I}$ if and only if $f$ is a solution to the instance $G(\mathcal{I})$ of $H(\Gamma)$-$(\R_{\Gamma}$-MORPHISM$)$.
\end{restatable}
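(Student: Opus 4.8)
The plan is to unfold the definitions on both sides and verify that the membership conditions are literally the same. Since $G(\mathcal{I})$ and $H(\Gamma)$ share the same label set $\mathcal{P}(\Gamma)$, and the morphism relation in question is $\R_\Gamma$, the statement ``$f$ is an $\R_\Gamma$-morphism $G(\mathcal{I}) \to H(\Gamma)$'' means precisely that for all $(u,v) \in V^2$ we have $\bigl(l_{\mathcal{I}}(u,v), l_\Gamma(f(u),f(v))\bigr) \in \R_\Gamma$, which by the definition of $\R_\Gamma$ is equivalent to $l_{\mathcal{I}}(u,v) \subseteq l_\Gamma(f(u),f(v))$.

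First I would rewrite both sets explicitly: $l_{\mathcal{I}}(u,v) = \{R \in \Gamma \mid R(u,v) \in C\}$ and $l_\Gamma(f(u),f(v)) = \{R \in \Gamma \mid (f(u),f(v)) \in R\}$. Hence the inclusion $l_{\mathcal{I}}(u,v) \subseteq l_\Gamma(f(u),f(v))$ holds if and only if: for every $R \in \Gamma$, if $R(u,v) \in C$ then $(f(u),f(v)) \in R$. Quantifying over all $(u,v) \in V^2$, this is exactly the statement that every constraint of the instance $\mathcal{I}$ is satisfied by $f$, i.e.\ that $f$ is a solution to $\mathcal{I}$ in the sense of the CSP definition. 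So the proof is a short chain of ``if and only if'' steps: solution to $\mathcal{I}$ $\iff$ every constraint $R(u,v)\in C$ has $(f(u),f(v))\in R$ $\iff$ $\forall (u,v)\in V^2,\ l_{\mathcal{I}}(u,v)\subseteq l_\Gamma(f(u),f(v))$ $\iff$ $\forall (u,v)\in V^2,\ (l_{\mathcal{I}}(u,v), l_\Gamma(f(u),f(v))) \in \R_\Gamma$ $\iff$ $f$ is an $\R_\Gamma$-morphism, i.e.\ a solution to $G(\mathcal{I})$.

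There is essentially no hard part here; the statement is true by construction, and the only thing to be careful about is the bookkeeping of quantifiers — making sure that the universal quantifier over constraints in $C$ lines up with the universal quantifier over pairs $(u,v) \in V^2$ once we pass through the label sets, and noting that pairs $(u,v)$ for which no constraint $R(u,v)$ appears in $C$ contribute the trivial inclusion $\emptyset \subseteq l_\Gamma(f(u),f(v))$ and so impose no condition. I would also remark explicitly that $G(\mathcal{I})$ is a legitimate instance of $H(\Gamma)$-$(\R_\Gamma$-MORPHISM$)$, since $X_{G(\mathcal{I})} = \mathcal{P}(\Gamma)$ matches the required label set $X$ and $G(\mathcal{I})$ is $\red$-free because $\red$ does not occur in $\mathcal{P}(\Gamma)$. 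This immediately yields, as a corollary, that BINARY-CSP$(\Gamma)$ and $H(\Gamma)$-$(\R_\Gamma$-MORPHISM$)$ are the same decision problem, with identical solution sets and no fresh variables introduced.
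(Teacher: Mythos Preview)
Your proof is correct and follows essentially the same chain of equivalences as the paper's own proof: unfolding the definitions of $l_{\mathcal{I}}$, $l_\Gamma$, and $\R_\Gamma$ to reduce both sides to the inclusion $l_{\mathcal{I}}(u,v)\subseteq l_\Gamma(f(u),f(v))$ for all $(u,v)\in V^2$. Your additional remarks on quantifier bookkeeping and on $G(\mathcal{I})$ being a legitimate $\red$-free instance are welcome clarifications that the paper leaves implicit.
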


Conversely, we show that any $H$-($\mathcal{R}$-MORPHISM) problem can be reformulated as a BINARY-CSP via the following translation.

\begin{definition}
Let $H=(V_H,l_H,X_H)$ be an $\red$-free edge-labelled graph, $X$ a finite set and $\mathcal{R}\subseteq X\times X_H$.
For all $x\in X$, define the binary relation over $V_H$: $R_x:= \{ (a,b) \in (V_H)^2 \mid (x,l_H(a,b)) \in \mathcal{R} \}$, and let $\Gamma_{H,\R} := \{ R_x, x\in X \}$. 
\end{definition}

In the following theorem, for two edge-labelled graphs $G$ and $H$ and $\R\subseteq X_G\times X_H$, we let $\mathcal{I}(G)$ be the instance of BINARY-CSP$(\Gamma_{H,\R})$ with variables $V_G$ and constraints $\{R_{l_G(u,v)}(u,v) \mid (u,v) \in (V_G)^2)\}$.

\begin{restatable}{theorem}{RmorphismtoBinaryCSP}\label{thm:R-morphism to Binary-CSP}

Let $H=(V_H,l_H,X_H)$ be an $\red$-free edge-labelled graph, $X$ a finite set and $\mathcal{R}\subseteq X\times X_H$.
Let $G$ be an instance of $H$-($\mathcal{R}$-morphism), i.e. an $\red$-free edge-labelled graph with $X_G=X$.
Then, for any $f \colon V_G\mapsto V_H$, $f$ is a solution to the instance $\mathcal{I}$ of BINARY-CSP$(\Gamma)$ if and only if $f$ is a solution to the instance $G$ of $H$-($\mathcal{R}$-MORPHISM).

\end{restatable}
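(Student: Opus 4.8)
The plan is to unwind the definitions on both sides and observe that the condition ``$f$ is an $\mathcal{R}$-morphism'' is, constraint by constraint, literally the same as ``$f$ satisfies every constraint of $\mathcal{I}(G)$''. Fix an arbitrary $f\colon V_G\to V_H$. First I would recall that, by definition, the constraints of $\mathcal{I}(G)$ are exactly the pairs $R_{l_G(u,v)}(u,v)$ ranging over $(u,v)\in (V_G)^2$, so $f$ is a solution to $\mathcal{I}(G)$ if and only if for every $(u,v)\in (V_G)^2$ we have $(f(u),f(v))\in R_{l_G(u,v)}$. Now I would plug in the definition of the relation $R_x$ with $x=l_G(u,v)\in X_G=X$: by construction $R_{l_G(u,v)}=\{(a,b)\in (V_H)^2 \mid (l_G(u,v),l_H(a,b))\in\mathcal{R}\}$, so $(f(u),f(v))\in R_{l_G(u,v)}$ is equivalent to $(l_G(u,v),l_H(f(u),f(v)))\in\mathcal{R}$.

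Putting these two equivalences together, $f$ is a solution to $\mathcal{I}(G)$ if and only if for every $(u,v)\in (V_G)^2$ we have $(l_G(u,v),l_H(f(u),f(v)))\in\mathcal{R}$, which is exactly the defining condition for $f\colon G\xrightarrow[\mathcal{R}]{} H$, i.e.\ for $f$ being a solution to the instance $G$ of $H$-($\mathcal{R}$-MORPHISM). I would also make the small observation that $G$ is indeed a legitimate instance of $H$-($\mathcal{R}$-MORPHISM), since by hypothesis $G$ is $\red$-free with $X_G=X$, and that the relations $R_x$ are well-defined binary relations over the finite set $V_H$, so $\Gamma_{H,\mathcal{R}}$ is a legitimate finite set of binary relations over the finite domain $V_H$ and $\mathcal{I}(G)$ is a genuine BINARY-CSP$(\Gamma_{H,\mathcal{R}})$ instance; this guarantees both sides of the claimed equivalence make sense.

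There is essentially no obstacle here: the statement is a definitional unpacking, and the only thing requiring a modicum of care is matching the quantifier over pairs $(u,v)$ correctly and noting that the label $x$ fed into $R_x$ is precisely $l_G(u,v)$, so that the indexing of constraints in $\mathcal{I}(G)$ lines up exactly with the pairs over which the $\mathcal{R}$-morphism condition is quantified. (One might remark in passing that, unlike the reduction of Theorem~\ref{thm:Binary-CSP to R-morphism}, this translation is completely canonical and introduces neither fresh variables nor any change in the variable set, the set of solutions being preserved on the nose.) Accordingly I would present the proof as a short chain of ``if and only if'' steps applied to a fixed $f$, with the definitions of $R_x$, $\Gamma_{H,\mathcal{R}}$, and $\mathcal{I}(G)$ substituted in turn, and conclude.
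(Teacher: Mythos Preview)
Your proposal is correct and follows essentially the same chain of equivalences as the paper's own proof: unfold the definition of $\mathcal{I}(G)$ to get the condition $(f(u),f(v))\in R_{l_G(u,v)}$ for all $(u,v)$, then unfold $R_{l_G(u,v)}$ to reach the $\mathcal{R}$-morphism condition. The extra well-definedness remarks you add are harmless and not present in the paper, but the core argument is identical.
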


We summarize the translation between the vocabulary of BINARY-CSP and the equivalent concepts of $H$-($\R$-MORPHISM) in Table~\ref{tab:translation CSP,MORPHISM}. We again remind the reader that this formulation of BINARY-CSP in terms of edge-labelled graphs makes it easier to employ graph parameters such as the aforementioned component twin-width parameter~\cite{bonnet2022twin6}, which we will explicitly show in the forthcoming section.

\begin{table}
\centering
\begin{tabular}{ c  c  c }
    \textbf{BINARY-CSP($\Gamma$)} & \textbf{$H$-($\R$-MORPHISM)} \\
    \hline
    $V$ (set of variables) & $V_G$ (vertices of $G$) \\
    $D$ (domain) & $V_H$ (vertices of $H$) \\
    $\Gamma$ & $\R$, $X$ and $H$ \\
    Constraint(s) required for $(u,v)\in V^2$ & $l_G(u,v)\in X$ \\
    Constraint(s) respected by $(a,b)\in D^2$ & $l_H(a,b)\in X_H$ \\
    $f:V\mapsto D$ respects every contraint & $f:V_G\mapsto V_H$ is an $\R$-morphism \\
    Graph homomorphism & $\text{HOM}$-morphism with $\text{HOM}:=\{(0,0),(0,1),(1,1)\}$ \\
    $H$-COLORING & $H$-($\text{HOM}$-MORPHISM) \\
    \hline
\end{tabular}
\label{tab:translation CSP,MORPHISM}
\caption{Translations between BINARY-CSP and the edge-labelled graph formalism. We view graphs as edge-labelled graphs where the edges are labelled by $1$, and every other pair of vertices is labelled by $0$.}
\end{table}

\subsection{Contractions of Edge-Labelled Graphs and Component Twin-Width}
\label{sec:contractions}

We now generalize the notion of graph contraction defined by Bonnet et al.~\cite{bonnet2020twin1} to edge-labelled graphs, and define the key notions of contraction sequences and component twin-width.
We first define a suitable notion of vertex merging in an edge-labelled graph.

\begin{definition}\label{def:Meaning Contraction}
Let $H=(V_H,l_H,X_H)$ be an edge-labelled graph, and let $\S$ be a proper partition of $V_H$.
The {\em contraction of $H$ relative to $\S$} is the edge-labelled graph $H_{\S}=(V_{H_{\S}},l_{H_{\S}},X_{H_{\S}})$ with $V_{H_{\S}} = \S$, $X_{H_{\S}} = X_H\cup \{\red\}$ and for all $(S_1,S_2)\in \S^2$, if there exists $x\in X_H$ such that $\forall (u,v)\in S_1\times S_2$, $l_G(u,v) = x$, then $l_{H_{\S}}(S_1,S_2) = x$, and $l_H(S_1, S_2) = \red$ otherwise.

\end{definition}

If we view a graph as an edge-labelled graph whose edges are labelled by $1$, and every other pair of vertices is labelled by $0$, this notion coincides with the notion of (iterated) contraction(s) defined by Bonnet et al.~\cite{bonnet2020twin1} for graphs, up to identifying the set of red edges (defined in \cite{bonnet2020twin1}) with the set of pairs of $\S$ labelled with $\red$.
With a slight abuse of notation, we see a proper partition $\S_1$ of a proper partition $\S_2$ of a set $V$ as a proper partition of $V$, by seeing any set $S\in\P(\P(V))$ of $\S_1$ as $\cup S \in\P(V)$. Under this viewpoint, the proper partition $\S_1$ of $V$ respects the proper partition $\S_2$ of $V$, and the contraction relation is then transitive. This naturally leads to the following definition of a {\em contraction sequence}.

\begin{definition}

Let $H$ be an $\red$-free edge-labelled graph on $n\geq 1$ vertices.
Then, a {\em contraction sequence} of $H$ is a sequence of edge-labelled graph $(H_n,\dots,H_1)$ such that $H_n=H$, and for all $k\in [n-1]$, $H_k$ is a contraction of $H_{k+1}$ with $|V_{H_k}|=k$.

\end{definition}

In particular, $H_1$ is an edge-labelled graph with 1 vertex, and $V_{H_1} = \{V_H\}$.
Again, one may notice that the notion of contraction sequence of edge-labelled graph coincides with the particular case of binary loopless non-oriented graphs defined in~\cite{bonnet2020twin1}.

\begin{definition}
Let $H$ be an edge-labelled graph. Define the {\em $\red$-connected components} of $H$ as the connected components of the unoriented graph $H(\red)=(V_H, \{ (u,v)\in (V_H)^2 \mid \red\in \{l_H(u,v),l_H(v,u)\}\})$.
\end{definition}

The edges of $H(\red)$ encode a loss of information in the contraction, and are intended to play the role of the red edges of contraction sequences defined in~\cite{bonnet2020twin1}.
We can now introduce the notion of component (twin-)width of a contraction sequence

\begin{definition}
Let $H$ be an $\red$-free edge-labelled graph on $n\geq 1$ vertices. Let $(H_n,\dots,H_1)$ be a contraction sequence of $H$. The {\em component-width} of this contraction sequence is denoted  by $ctw((H_n,\dots H_1))$ and is the maximal size of the $\red$-connected component of the graphs $H_{n-1},\dots,H_1$. The {\em component twin-width} of $H$ is denoted by $ctww(H)$, and it  is the minimal component-width of all its contraction sequences. Any contraction sequence of $H$ whose component-width equals the component twin-width of $H$ is called an {\em optimal contraction sequence} of $H$.
\end{definition}

Note that the graphs of component twin-width $1$ are exactly the cographs, and that cycles of length $\geq 5$ have component twin-width $3$.
In order to extend the notion of component twin-width to BINARY-CSP$(\Gamma)$ compatible with the reduction involved in Theorem \ref{thm:Binary-CSP to R-morphism}, we define the {\em component twin-width} of the template $\Gamma$ as the component twin-width of $H(\Gamma)$, and the instance $\mathcal{I}$ as the component twin-width of $G(\mathcal{I})$.
Note that while the primary focus of~\cite{bonnet2020twin1} was simply the {\em twin-width} (the minimum over all contraction sequences of the maximal $\red$-degree of the graphs that appear in the contraction sequence), the latter seems a less useful parameter for BINARY-CSP$(\Gamma)$. Intuitively,  one could argue that component twin-width is a more ``natural'' parameter to consider, but that improved algorithms using twin-width are sometimes feasible. For instance, Bonnet et al.~\cite{bonnet2020twin3} solves $k$-IND-SET problem in FPT when parameterized by twin-width, by  bounding the number of red-connected subgraphs by a function depending only on $k$ and $d$, where $d$ is the twin-width of the input graph. This trick transforms an algorithm parameterized by component twin-width into an algorithm parameterized by twin-width. Unfortunately, it does not seem  applicable here. Note also that to generalize algorithm of Bonnet et al.~\cite{bonnet2020twin3}   to IND-SET, we have to give up the parameterization by twin-width and instead consider component twin-width.

\section{Semirings and Generalisations of CSP}
\label{sec:semirings unification}

In this section we define and extend the {\em semiring} framework of CSP by Bisterelli~\cite{bistarelli1997semiring} and Wilson~\cite{DBLP:conf/ijcai/Wilson05} in the context of $H$-($\R$-MORPHISM) problems. In particular we will see that all the extensions of the basic CSP problem defined in Section~\ref{sec:prel} (e.g., counting, finding a solution of minimal cost) can be expressed within this framework, allowing all extended problems to be expressed within a single algebraic framework. 

\subsection{Semirings and Pre-Morphisms}

In this section, we will define a new algebraic notion in order to encompass the many generalisations of CSP evoked so far.

\begin{definition}
    A {\em semiring} is a structure $(A,+,\times,0_A,1_A)$ such that  $(A,+,0_A)$ is a commutative monoid, $(A,\times,1_A)$ is a monoid, $\times$ is distributive over $+$, and
        $0_A$ is absorbing for $\times$. Moreover, if $A$ is ordered by the binary relation $\leq_A$ over $A$ defined by $\forall (a,b)\in A^2, a\leq_A b\iff \exists c\in A, a+c=b$, then $(A,+,\times,0_A,1_A)$ is said to be a {\em dioid}.
\end{definition}

Note that rings and dioids are both particular cases of semirings.

\begin{definition}
    
    Let $S_1$ and $S_2$ be two disjointed sets, and $T_1$ and $T_2$ be two sets. Let $f_1\in (T_1)^{S_1}$ and $f_2\in (T_2)^{S_2}$.
    We define the {\em join} of $f_1$ and $f_2$ as $(f_1\Join f_2) \in (T_1\cup T_2)^{S_1\uplus S_2}$ defined by $(f_1\Join f_2)|_{S_1}=f_1$ and $(f_1\Join f_2)|_{S_2}=f_2$.
    Also, for $\F_1 \in \P((T_1)^{S_1})$ and $\F_2\in \P((T_2)^{S_2})$, we define the {\em join} of $\F_1$ and $\F_2$ by $\F_1\Join \F_2 = \{ (f_1\Join f_2), (f_1,f_2)\in\F_1\times\F_2 \} \in \P((T_1\cup T_2)^{S_1\uplus S_2}).$
    
\end{definition}

The join operation will be used by our algorithm, guided by the contraction sequences,  to iteratively extend the domain and codomain of the sets of functions considered, with the set of solutions being the final achievement of the algorithm. Similarly, we need disjointed union in order to extend the sets of functions considered. 
The basic idea behind the semiring framework is then to consider a set $A$ representing the sets of possible output of a function $\Omega$ applied to the set of solutions (of a given CSP instance). In the following definition, we will also take into account a weight matrix $W$, the weights being elements of a set $B$.

\begin{definition}
    Let $(A,+,\times,0_A,1_A)$ be a semiring, and $B$ a set.
    Let $\Omega$ be a function that maps, for $G$ and $H$ any two edge-labelled graphs, any couple $(\F,W)$ with $\F$ an element of $\P(T^S)$ (with $S\subseteq V_G$ and $T\subseteq V_H$), and $W\in B^{V_G\times V_H}$ a weight matrix, to an element of $A$, denoted by $\Omega_W(\F)$.\footnote{In fact, the value of $\Omega_W(\F)$ also depends on the graphs $G$ and $H$ considered, but to ease the notation they are omitted.}
        We say that $\Omega$ is a {\em $A$-pre-morphism with weights in $B$}\footnote{If the value of $\Omega_W(\F)$ does not depend on $W$, the precision the set $B$ is irrelevant, and we say that $\Omega$ is a $A$-{\em pre-morphism ignoring weights}.} if:
    
    \begin{itemize}
    
        \item For all edge-labelled graphs $G$ and $H$, $W\in B^{V_G\times V_H}$, $S\subseteq V_G$ and $T\subseteq V_H$:

        $\forall (\F_1,\F_2) \in (\P(T^S))^2, \F_1\cap \F_2 = \emptyset \implies \Omega_W(\F_1\uplus \F_2) = \Omega_W(\F_1) + \Omega_W(\F_2)$.

        \item For all edge-labelled graphs $G$ and $H$, and $W\in B^{V_G\times V_H}$, for all $S_1,S_2$ being two disjointed subsets of $V_G$ and $T_1,T_2$ being two disjointed subsets of $V_H$:
        
        $\forall (\F_1,\F_2) \in \P( (T_1)^{S_1} )\times \P( (T_2)^{S_2})$:
    $\Omega_W(\F_1\Join \F_2) = \Omega_W(\F_1) \times \Omega_W(\F_2)$.

        \item For all edge-labelled graphs $G$ and $H$ and $W\in B^{V_G\times V_H}$, $\Omega_W(\emptyset)=0_A$.

    \end{itemize}
    
\end{definition}

Additionally, if we can remove the assumption that $T_1$ and $T_2$ are disjointed in the second axiom then we say that the $A$-pre-morphism $\Omega$ is {\em strong}.
If the functions $+$ and $\times$ can be computed in constant time, and if $(G,H,W,S,a)\mapsto \Omega_W(\{f^S_{\{a\}}\})$ (with $f^S_{\{a\}}$ being the constant function of domain $S$ and codomain $\{a\}$) are polynomial time computable, we will say that $\Omega$ is a {\em poly-time computable} $A$-pre-morphism with weights in $B$.
Let us also remark that $\emptyset$ is neutral for $\uplus$, meaning that the third axiom stating that $\forall W, \Omega_W(\emptyset)=0_A$ can often be seen as a consequence of the first axiom in many practical cases. The presence of this axiom is only necessary to avoid pathological cases built especially to contradict this axiom that never occur in practice.
Similarly, denoting $f^{\emptyset}_{\emptyset}$ the unique function with an empty domain and an empty codomain, it is interesting to see that, in the cases that we consider, we will have that, for all weight matrix $W$, $\Omega_W(\{f^{\emptyset}_{\emptyset}\})=1_A$ as a consequence that $\{f^{\emptyset}_{\emptyset}\}$ is neutral for $\Join$ and of the second axiom.

In order to generalize the list, the counting, and the various weighted versions of $H$-($\R$-MORPHISM), consider the following problem, where 
$H$ is an edge-labelled graph $\R\subseteq X\times X_H$ with $X$ finite.

\begin{algorithm}
$\Omega$($H$-($\R$-\textbf{MORPHISM})):

\textbf{Input:} An instance $G$ of $H$-($\R$-\textbf{MORPHISM}), and $W\in B^{V_G\times V_H}$.

\textbf{Output:} The value of $\Omega_W(\{ f:G\underset{\R}{\rightarrow} H\})$.

\end{algorithm}

The CSP formulation ($\Omega$(CSP($\Gamma$))) is defined analogously, i.e., we ask for the output of $\Omega_W$ applied to the set of solutions to the input instance. As usual, we write ($\Omega$(BINARY-CSP($\Gamma$))) when $\Gamma$ is a set of binary relations. The main advantage of the $\Omega$($H$-($\R$-MORPHISM)) formulation is that it easily permits our algorithms to compute values of $\Omega$ of larger and larger sets of partial solutions of the given $H$-($\R$-MORPHISM) instance, via the two operations $\uplus$ and $\Join$.

\begin{remark}
The semiring generalisations of BINARY-CSP($\Gamma$), for  $\Gamma$ over a finite $D$,  subsume BINARY-CSP$(\Gamma)$, \#BINARY-CSP$(\Gamma)$, \#list-BINARY-CSP$(\Gamma)$ \#cost-list-BINARY-CSP$(\Gamma)$, \#weighted-list-BINARY-CSP$(\Gamma)$, and \#restricted-list-BINARY-CSP$(\Gamma)$, among others.  See Table~\ref{tab:semiring generalisations} for a summary of these problems and Appendix \ref{app:examples of pre-morphisms} for the precise definitions of the associated semirings and pre-morphisms.
Also, for a graph $H$, \#restrictive-list(BINARY-CSP($\{E_H\}$)) is the same problem as the counting version of the restrictive-list-$H$-COLORING problem defined by Díaz et. al in \cite{diaz2005restrictive}. Similarly, we could have implemented a weighted version of restrictive-list $H$-COLORING as a generalisation of $H$-COLORING, by taking a weight matrix $W$ with coefficient in $\overline{\N}\times\Rbar$, instead of simply $\Nbar\times\{0,1\}$.
\end{remark}

\begin{table}
\begin{tabular}{c c c c c}
     \textbf{Function $\Omega$} & $A$ & $B$ & $\Omega(H\text{-COLORING})$  & Target for Theorem \ref{thm:Solving Omega H-(R-MORPHISM) parameterized}\\ \hline
     $\mathbf{1}_{\neq\emptyset}$ & $\mathbf{2}$ & Unused & $H$-COLORING & YES \\
     list & $\mathbf{2}$ & $\mathbf{2}$ & list-$H$-COLORING & YES \\
     $\#$ & $\N$ & Unused & \#$H$-COLORING & YES \\
     $\#_{\text{list}}$ & $\N$ & $\mathbf{2}$ & $\#$list-$H$-COLORING & YES \\
     $\begin{matrix} \text{MinCost}, \\ \#\text{ArgMinCost} \end{matrix}$ & $\Rbar\times\N$ & $\Rbar$ & \#cost-list-$H$-COLORING \cite{kobler2003edge}  & YES \\
     $\begin{matrix} \text{MinWeight}, \\ \#\text{ArgMinWeight} \end{matrix}$ & $\Rbar\times\N$ & $\Rbar$ &  $\#$weighted-list-$H$-COLORING \cite{escoffier2006weighted}  & NO \\
     \#restricted-list & $\N$ & $\overline{\N}\times\mathbf{2}$  & $\#$restricted-list-$H$-COLORING \cite{diaz2005restrictive} & NO \\ \hline
\end{tabular}
\caption{Examples of $H$-COLORING problems through semirings (with the notation $\mathbf{2}=\{0,1\}$).}
\label{tab:semiring generalisations}
\end{table}

We remark that the approach of generalizing the CSPs using semirings has already been studied by Bistarelli et al.\ \cite{bistarelli1997semiring}, introducing the structure of {\em $c$-semirings}, and the computational problem {\em SCSP}. However, Bistarelli et al.\ focused  on  generalisations of CSP  involving an optimisation process, requiring a relation $a \leq b$ stating that $a$ is ``preferable'' to $b$, achieved by defining  $a\leq b \iff a+b=b$ and requiring the $+$ operation to be idempotent.
Wilson~\cite{DBLP:conf/ijcai/Wilson05} also defined an equivalent framework that generalizes CSP with semirings. Our contribution with respect to these alternative frameworks is the introduction of the notion of a pre-morphism and its clear link to the operations $\uplus$ and $\Join$, which are heavily used in the algorithms in Section~\ref{sec:tractability ctww}.

\section{Complexity of $\Omega$($H$-($\R$-MORPHISM)) via Component Twin-Width}\label{sec:tractability ctww}

In this section we analyze the complexity of $\Omega$($H$-($\R$-MORPHISM)), and thus of $\Omega$(BINARY-CSP($\Gamma$)), with respect to component twin-width. In Subsection~\ref{sec:fpt} we consider input graphs with bounded component twin-width, and  in Subsection~\ref{sec:fine_grained} we consider target templates $H$ with bounded component-twin width. In Subsection~\ref{sec:consequences} illustrate our tractability results on examples from the literature.

\subsection{Parameterized Complexity and Fixed-Parameter Tractability}
\label{sec:fpt}
We begin by proposing a dynamic programming algorithm  applicable to BINARY-CSP and its generalizations in the semiring framework.
To simplify the statement of Theorem~\ref{thm:Solving Omega H-(R-MORPHISM) parameterized}, say that a pre-morphism $\Omega$ is {\em  corestriction independent} if for all $V_1$ and $V_2$, $W\in B^{V_1\times V_2}$, and for all subsets $S\subseteq V_1$ and $T\subseteq V_2$, $f\in T^S$, and $T'\subseteq T$ with $f(S)\subseteq T'$, the corestriction $f|^{T'}\in (T')^S$ satisfies $\Omega_W(\{f\})=\Omega_W(\{f|^{T'}\})$. This rather weak assumption is satisfied by every strong pre-morphism we considered: essentially, it only requires that the singleton values of a  function $f$, do not depend on the vertices of the target graph that are not in the image of $f$.

\begin{restatable}{theorem}{parametrized}
\label{thm:Solving Omega H-(R-MORPHISM) parameterized}

Let $(A,+,\times,0_A,1_A)$ be a semiring, $\Omega$ a poly-time computable strong $A$-pre-morphism with weights in a set $B$, $H$ an $\red$-free edge-labelled graph, $X$ a finite set,  and $\R\subseteq X\times X_H$. Assume that $\Omega$ is corestriction independent. Then, for every instance $G$ on $n\geq 1$ vertices, Algorithm \ref{algo:main algorithm parameterized} with fixed, $H$, $\R$ and $\Omega$ solves $\Omega$($H$-($\R$-MORPHISM)) in $O(((2^{|V_H|}-1)^{ctww(G)})\times|V_G|^2)$ time, provided that an optimal contraction sequence $\mathbb{G}$ of $G$ is given.

\end{restatable}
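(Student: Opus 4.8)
The plan is to perform dynamic programming along the given optimal contraction sequence $\mathbb{G} = (G_n, \dots, G_1)$ of $G$. The idea is that at each stage $G_k$ of the sequence, the vertices of $G_k$ are blocks (subsets of $V_G$ partitioning $V_G$), and a partial solution can be thought of as an assignment of each block to a subset of $V_H$: informally, for each block $S \in V_{G_k}$ and each subset $U \subseteq V_H$ we want to record (via $\Omega_W$) the collection of all partial $\R$-morphisms $f \colon S \to V_H$ whose image is exactly $U$ and which are ``compatible so far'' with the constraints inside $S$ and along $\red$-free pairs already resolved. The key structural observation, exactly as in the graph case of Bonnet et al., is that outside the $\red$-connected components of $G_k$ the label function is already determined, so the only nontrivial bookkeeping is \emph{inside} each $\red$-connected component; since $\mathbb{G}$ is optimal, each such component has at most $ctww(G)$ blocks, and for each block we store one value of $\Omega_W$ per nonempty subset of $V_H$, giving at most $(2^{|V_H|}-1)^{ctww(G)}$ table entries per component.

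First I would set up the precise invariant: define, for a $\red$-connected component $\mathcal C$ of $G_k$ with blocks $S_1, \dots, S_m$ ($m \le ctww(G)$) and for each tuple $(U_1, \dots, U_m)$ of nonempty subsets of $V_H$, the family $\F_{\mathcal C}(U_1, \dots, U_m)$ of all functions $f \colon S_1 \cup \dots \cup S_m \to V_H$ such that $f(S_i) \subseteq U_i$ (using corestriction independence to pass freely between ``$\subseteq U_i$'' and ``image exactly $U_i$'' via inclusion--exclusion, or just carrying the $\subseteq$ version throughout), $f$ restricted to each block already witnesses consistency, and the $\R$-morphism condition holds on every pair $(u,v)$ whose label in $G_k$ is a genuine element of $X$ (not $\red$). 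Store $\Omega_W(\F_{\mathcal C}(U_1, \dots, U_m))$. For components that are singletons the value is computed from the poly-time computable singleton data $\Omega_W(\{f^S_{\{a\}}\})$ combined via $+$ over $a \in U$ and via $\times$ (the $\Join$ axiom) over the elements of the block; here I use that $\Omega$ is a \emph{strong} pre-morphism so that the $\Join$ axiom applies even when the target subsets $U_i$ are not disjoint, which is essential since different blocks may legitimately map into overlapping parts of $V_H$.

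Next I would describe the transition from $G_{k+1}$ to $G_k$. Going from $G_{k+1}$ to $G_k$ merges two blocks $S, S'$ into $S \cup S'$. This affects only the $\red$-connected component(s) containing $S$ and $S'$; all other components' tables are copied unchanged. Within the merged component, a new table entry for the block $S \cup S'$ assigned a subset $U$ is obtained by summing, over all pairs $(U', U'')$ with $U' \cup U'' = U$, the product of the corresponding old entries for $S \mapsto U'$ and $S' \mapsto U''$ — this is the $\Join$ axiom applied to the two disjoint variable sets $S$ and $S'$ — and then discarding those contributions that violate the $\R$-morphism condition on pairs $(u,v) \in S \times S'$ whose $G_k$-label just became a fixed $x \in X$ (if it became $\red$, nothing new is enforced). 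One also has to account for pairs between the merged component and other components whose label, previously $\red$ in $G_{k+1}$, may have collapsed to a fixed label in $G_k$, merging those components together; handling this is a matter of taking, for the newly unified component, the $\Join$ over the previously independent tables and filtering by the newly fixed labels. Correctness follows by induction on $k$ from the three pre-morphism axioms: $\uplus$-additivity justifies the summation over subset-decompositions $U' \cup U'' = U$ and over $a \in U$, the $\Join$-multiplicativity (strong version) justifies the products across blocks, and $\Omega_W(\emptyset)=0_A$ handles inconsistent entries. At $G_1$ there is a single block $V_G$ forming one component with one entry per $U \subseteq V_H$; summing over all $U$ gives $\Omega_W(\{f \colon V_G \to V_H\}$ restricted to genuine $\R$-morphisms$)$, because by then every pair of $V_G$ has a fixed label in $X_H$ and the $\R$-condition has been enforced on all of them — this is the output.

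For the running time: there are $n-1$ contraction steps; at each step only the $\red$-connected components touched by the merge are recomputed, each has $\le ctww(G)$ blocks, hence $\le (2^{|V_H|}-1)^{ctww(G)}$ entries, and computing one new entry costs a sum over subset-decompositions of products, i.e.\ $O^*\!\big((2^{|V_H|}-1)^{ctww(G)}\big)$ arithmetic operations, each $O(1)$ by hypothesis; the polynomial overhead of maintaining components, iterating over the at most $|V_G|$ cross-pairs whose labels can change, and over the at most $|V_G|$ vertices of a block, is absorbed into the $|V_G|^2$ factor, yielding the claimed $O\!\big((2^{|V_H|}-1)^{ctww(G)} \cdot |V_G|^2\big)$ bound.

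\textbf{Main obstacle.} The hard part will be getting the cross-component bookkeeping exactly right: when a contraction causes a formerly $\red$ pair between two distinct $\red$-components to acquire a fixed label, those components fuse, and one must argue that the fused table — obtained by $\Join$-combining the two old tables and then filtering by all newly fixed cross-labels — still represents the intended family $\F$, and that this never causes the component size to exceed $ctww(G)$ (which is exactly the content of $\mathbb{G}$ being an optimal contraction sequence, since the component-width bounds the size of every $\red$-connected component appearing in $G_{n-1}, \dots, G_1$). Making this precise, and verifying that the total work per step stays polynomial times $(2^{|V_H|}-1)^{ctww(G)}$ rather than blowing up over the whole instance, is the technical heart of the argument; the reliance on $\Omega$ being \emph{strong} (to multiply over non-disjoint target subsets) and \emph{corestriction independent} (to initialize singleton entries and to move between ``image $\subseteq U$'' and ``image $= U$'') are the two places where the hypotheses are genuinely used.
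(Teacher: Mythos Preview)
Your plan is essentially the paper's approach: dynamic programming along the contraction sequence of $G$, maintaining for each $\red$-connected component a table indexed by tuples of nonempty subsets of $V_H$, updating via a sum--of--products recurrence, and relying on the strong pre-morphism axiom (for non-disjoint targets) and corestriction independence (for the final summation over $T\subseteq V_H$). The invariant, the transition, and the complexity analysis all match.

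Two points of imprecision are worth fixing before you write this out in full. First, your description of label evolution is backwards: in a contraction $G_{k+1}\to G_k$, a pair never goes from $\red$ to a fixed label; rather, merging two blocks that disagree on their label toward some third block creates a \emph{new} $\red$ edge, and this is what can fuse previously separate $\red$-components into the single component you must recompute. So your sentence about ``a formerly $\red$ pair \dots\ acquir[ing] a fixed label'' should read the other way. Second, you hedge between storing the family with $f(S_i)\subseteq U_i$ versus $f(S_i)=U_i$. The paper carries the equality version $\overline{\R}^{\S}_{\T}$ throughout; this is what makes the union over $T_p\cup T_{p+1}=T_0$ genuinely disjoint (so the $\uplus$-axiom applies cleanly) and avoids overcounting. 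Corestriction independence is then used only once, at the very end, to identify $\sum_{T\subseteq V_H}\Omega_W(\overline{\R}^{V_G}_{T})$ with $\Omega_W$ of the full solution set. If you carry the $\subseteq$ version instead you will need an inclusion--exclusion step to avoid double-counting, which is unnecessary.
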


Algorithm~\ref{algo:main algorithm parameterized} can be found in Appendix~\ref{sec:main algorithm parameterized}, and uses dynamic programming along with optimal contraction sequences to achieve the desired solution(s). We are able to deal with semiring generalisations since, for an $\red$-connected component $C$  in the contraction sequence, instead of just keeping in memory whether a function $\gamma: C\mapsto \P(V_H)\setminus\emptyset$ is a profile (as it is done  in~\cite{bonnet2022twin6}), we store the value by $\Omega$ of the set of partial solutions that induce this profile (i.e., the set of $\R$-morphisms $f:C\mapsto V_H$ such that for all $S\in C, f(S) = \gamma(S)$). We see that the algorithm by Bonnet et al.~\cite{bonnet2022twin6} solving the $q$-COLORING problem in FPT time parameterized by component twin-width is the particular case where the pre-morphism considered is over the semirings of Booleans, and maps $\emptyset$ to $0$ and any other set to $1$ (see Lemma \ref{lemma:decision dioid}). Also, we are able to deal with arbitrary edge-labelled graphs $H$ instead of only the $q$-clique $K_q$ by replacing the update of sets of profiles by Bonnet et al. \cite{bonnet2022twin6} that checks the absence of a black edge, by the more general test of {\em feasibility} (see Appendix \ref{app:feasibility}). Thus, Algorithm~\ref{algo:main algorithm parameterized} is much more general than the aforementioned algorithm by Bonnet et al.\ since it applies to arbitrary binary constraints rather than the specific template $K_q$, and is applicable to generalized problems described by the pre-morphism $\Omega$ (counting, with weights, and so on).

It may also be interesting to remark note that the complexity of Algorithm~\ref{algo:main algorithm parameterized} depends neither on the semiring $A$, nor on the pre-morphism $\Omega$, i.e., the generalized problems do not impact the running time. 

\subsection{Upper Bounds on Fine-Grained Complexity}
\label{sec:fine_grained}

In Section~\ref{sec:fpt} we exploited the contraction sequence of an instance $\mathcal{I}$ to obtain an FPT algorithm with respect to the component twin-width of $G(\mathcal{I})$. We now turn to the dual question of constructing an improved (exponential time) algorithm with respect to the component twin-width of $H(\Gamma)$.
Note that the computation of an optimal contraction sequence of $H(\Gamma)$ can be seen as a form of pre-computation since it is independent of the instance. Also, when working with $H$ instead of $G$, our algorithm will have to guess preimages of subsets $T$ of $V_H$ instead of images of subsets $S$ of $V_G$. Since preimages of pairwise disjoint subsets are  pairwise disjointed, our algorithm also applies to semiring pre-morphisms that are not strong.

\begin{restatable}{theorem}{finegrained}\label{thm:Solving Omega H-(R-MORPHISM)}

Let $(A,+,\times,0_A,1_A)$ be a semiring, and $\Omega$ a $A$-pre-morphism with weights in a set $B$, $H$ an $\red$-free edge-labelled graph, $\mathbb{H}$ an optimal contraction sequence of $H$, and $X$ a finite set, $\R\subseteq X\times X_H$. Then, Algorithm \ref{algo:main algorithm} with fixed $\mathbb{H}$, $\R$ and $\Omega$ solves $\Omega$($H$-($\R$-MORPHISM)) in time $O( (ctww(H)+2)^{|V_G|}\times |V_G|^2)$ every intance $G$ on $n\geq 1$ vertices.

\end{restatable}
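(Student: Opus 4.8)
<br>

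The plan is to design a dynamic programming algorithm that processes the optimal contraction sequence $\mathbb{H} = (H_{|V_H|}, \dots, H_1)$ of the \emph{target} graph $H$ from the bottom up, maintaining at each stage a table indexed by functions that assign to each vertex of $G$ a label describing ``which $\red$-connected component of $H_k$ that vertex is currently mapped into.'' The key combinatorial observation, mirroring the approach in Section~\ref{sec:fpt} but dualized, is that when we look at $H_k$, the vertex set $V_{H_k} = \S_k$ is a partition of $V_H$, and since $H_k$ is obtained by contracting $H$, the $\red$-connected components of $H_k$ partition $\S_k$ into blocks. For a partial solution candidate $f : V_G \to \S_k$ (i.e.\ a choice, for each variable, of a block of the partition $\S_k$), we only need to track, for each $\red$-connected component $C$ of $H_k$, which sub-bag of $C$ the variables assigned to $C$ land in. Crucially, the preimages $f^{-1}(S)$ for $S \in \S_k$ are pairwise disjoint, which is exactly why the $\Join$ axiom applies without needing $\Omega$ to be \emph{strong} — this is the advantage mentioned just before the theorem statement.

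The steps I would carry out, in order, are as follows. First, I would set up the invariant: after processing $H_k$, for each function $\gamma$ from $V_G$ into the set of $\red$-connected components of $H_k$ that is ``feasible'' (in the sense of Appendix~\ref{app:feasibility} — every black edge constraint is consistent so far), the table stores $\Omega_W$ of the set of $\R$-morphisms $f : G \to H$ such that each $v \in V_G$ is mapped into the union of the bag $\gamma(v)$. Second, I would verify the base case $k = 1$: $H_1$ has one vertex $\{V_H\}$, one trivial $\red$-connected component, and the table has a single entry whose value is $\Omega_W$ of the set of \emph{all} functions $V_G \to V_H$ that are not yet ruled out — computed via the $\uplus$ and $\Join$ axioms starting from the singleton values $\Omega_W(\{f^S_{\{a\}}\})$ (this uses poly-time computability of $\Omega$, but note the statement of Theorem~\ref{thm:Solving Omega H-(R-MORPHISM)} does not actually demand poly-time computability, so I would need to be careful: either this is implicitly assumed, or the running time is stated relative to an oracle for $\Omega$; I would follow whatever convention Algorithm~\ref{algo:main algorithm} uses). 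Third, the transition step: going from $H_{k}$ to $H_{k+1}$ corresponds to \emph{splitting} one bag $S \in \S_k$ into two bags $S_1, S_2 \in \S_{k+1}$. Each variable previously assigned to $S$ must now be reassigned to $S_1$ or $S_2$; this refines the $\red$-connected component structure, possibly splitting one $\red$-component into several, and possibly introducing new $\red$-edges. I would express the new table entry as a sum (using $+$, i.e.\ the $\uplus$ axiom, over all ways to distribute variables) of products (using $\times$, i.e.\ the $\Join$ axiom, over the now-independent refined $\red$-components), and discard entries that violate feasibility of any black-edge constraint newly determined at level $k+1$. Fourth, at the top ($k = |V_H|$), $H_{|V_H|} = H$ has no $\red$-edges at all, so every $\red$-connected component is a single vertex of $H$; the feasible functions $\gamma : V_G \to V_H$ are exactly the $\R$-morphisms, and summing the table over all of them yields $\Omega_W(\{f : G \xrightarrow{\R} H\})$, as required.

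For the running time: at each of the $|V_H| - 1$ contraction steps, the table is indexed by feasible functions $\gamma$ from $V_G$ into the $\red$-connected components of $H_k$. The number of such functions is bounded by the number of ways to assign each variable a ``state'' within its $\red$-component, and since each $\red$-connected component of $H_k$ has at most $ctww(H)$ vertices of $H_k$, plus we can always additionally map ``outside'' or handle the not-yet-constrained cases, the relevant count per variable is at most $ctww(H) + 2$; hence at most $(ctww(H)+2)^{|V_G|}$ entries. Multiplying by $|V_H| - 1$ steps (a constant, since $H$ is fixed) and $O(|V_G|^2)$ work per step for the merging/feasibility bookkeeping over pairs of variables gives the claimed $O((ctww(H)+2)^{|V_G|} \times |V_G|^2)$ bound.

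The main obstacle I anticipate is getting the exact base of the exponent right, i.e.\ pinning down why the per-variable state space has size exactly $ctww(H) + 2$ rather than, say, $ctww(H) + 1$ or $2^{ctww(H)}$. The $+2$ presumably accounts for one state per vertex in the size-$\le ctww(H)$ $\red$-component the variable currently lives in, plus two extra ``bookkeeping'' states (e.g.\ one encoding that the variable's image is already pinned down to a specific $H$-vertex and decoupled from the red structure, and one sentinel/``dead'' state, or perhaps distinguishing whether the black-edge constraints have already forced the value). Making this count tight — and in particular proving it is an \emph{upper} bound in the worst case, so that no table ever exceeds the claimed size — is the delicate part; everything else is a fairly mechanical induction using the two pre-morphism axioms and the transitivity of the contraction relation established in Section~\ref{sec:contractions}. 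A secondary subtlety is correctly handling the newly-created $\red$-edges at each step so that the feasibility test and the $\Join$-decomposition remain consistent — but this is exactly the ``feasibility'' machinery deferred to Appendix~\ref{app:feasibility}, which I would cite rather than redo.
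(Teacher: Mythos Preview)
Your high-level strategy matches the paper's: dynamic programming along the contraction sequence of $H$, pruning by feasibility and using the pre-morphism axioms ($\uplus \to +$, $\Join \to \times$) to combine; and you correctly identify why a non-strong pre-morphism suffices (preimages of disjoint bags are disjoint).

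However, you have the \emph{direction} of the DP reversed, and this is a genuine gap rather than a notational slip. The paper's Algorithm~\ref{algo:main algorithm} initialises at $H_m = H$, where every $\red$-component is a singleton $\{a\}$ and the base entries $\text{OMEGA}[S,\{a\}]$ are either $0_A$ or the directly computable $\Omega_W(\{f^S_{\{a\}}\})$. It then proceeds by \emph{contractions} from $k = m{-}1$ down to $k = 1$: when $T_p, T_{p+1}$ of $H_{k+1}$ merge into $T_0$ in $H_k$, the new entry for the $\red$-component $\{T_0, T_1, \ldots, T_{p-1}\}$ of $H_k$ is obtained by summing (the $\uplus$ axiom) over all splits $S_p \uplus S_{p+1} = S_0$ of a product (the $\Join$ axiom) over the $\red$-components of $H_{k+1}$ contained in $\{T_1,\ldots,T_{p+1}\}$. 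The answer is then the single entry $\text{OMEGA}[V_G, V_H]$ at $k=1$. Your proposal instead starts at $H_1$ and moves by \emph{splits} towards $H_m$, with a final summation at $k=m$. But at $H_1$ the single table entry \emph{is} the unknown quantity $\Omega_W(\R^{V_G}_{V_H})$, so there is nothing to initialise; and passing from level $k$ to $k{+}1$ would require \emph{decomposing} an $\Omega_W$-value under refinement, which the pre-morphism axioms do not support --- they let you combine, not split. Your own description of the ``base case $k=1$'' as being ``computed \ldots\ starting from the singleton values $\Omega_W(\{f^S_{\{a\}}\})$'' is a symptom of this: those singleton values live at $k=m$, which is where the recursion must begin.

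Your account of the $+2$ is also off. In the paper, at step $k$ one enumerates pairwise disjoint $S_0, S_1, \ldots, S_{p-1} \subseteq V_G$ together with a partition $S_p \uplus S_{p+1} = S_0$. Each vertex of $V_G$ therefore lies in one of $S_1, \ldots, S_{p-1}$ (that is $p{-}1$ choices), or in $S_p$, or in $S_{p+1}$, or outside the component: $(p{-}1)+2+1 = p{+}2$ states, with $p \le ctww(H)$. The two ``extra'' states are simply the \emph{outside} state and the second half of the split bag --- not sentinel or ``already-pinned'' bookkeeping.
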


Algorithm~\ref{algo:main algorithm}  has similarities with the algorithm given by Wahlstr\"om \cite{wahlstrom2011new} which solves $\#H$-COLORING in time $O^*((2cw(H)+1)^n)$ on input graphs $G$ on $n$ vertices (with $cw(H)$ the clique-width of $H$). However, our algorithm has two significant advantages. First, it is applicable to arbitrary BINARY-CSP problems, and its generalized problems in the semiring framework (without impacting the running time). Second, even though clique-width and component twin-width are functionally equivalent, even a minor increase of the width parameter can significantly change the run time of the algorithm. For example, the problem of counting the number of homomorphisms into a cycle of length $k$, \#$C_k$-COLORING, can be solved in $O^*(5^n)$ time by Algorithm~\ref{algo:main algorithm} but requires $O^*(6^n)$ time by the clique-width algorithm. Additional examples are provided in Section~\ref{sec:consequences}.

\subsection{Consequences} 
\label{sec:consequences}

The tractability of many semiring generalisations of $H$-COLORING and BINARY-CSP$(\Gamma)$ easily follow from Theorem~\ref{thm:Solving Omega H-(R-MORPHISM) parameterized}.

\begin{corollary}

Let $\Gamma$ be a set of binary relations over a finite domain. Then, BINARY-CSP($\Gamma$), $\#$BINARY-CSP($\Gamma$), $\#$list-BINARY-CSP($\Gamma$), $\#$cost-list-BINARY-CSP($\Gamma$) with weights are FPT parameterised by the component twin-width of the instance.

\end{corollary}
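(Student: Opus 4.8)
The plan is to derive the corollary directly from Theorem~\ref{thm:Solving Omega H-(R-MORPHISM) parameterized} by checking that each of the listed problems fits the hypotheses of that theorem, and that the complexity bound in the theorem is indeed FPT in the stated parameter. First I would recall that, by Theorem~\ref{thm:Binary-CSP to R-morphism}, an instance $\mathcal{I}$ of $\Omega$(BINARY-CSP$(\Gamma)$) is, solution-for-solution, an instance $G(\mathcal{I})$ of $\Omega$($H(\Gamma)$-($\R_\Gamma$-MORPHISM)), and that the component twin-width of $\mathcal{I}$ is defined to be $ctww(G(\mathcal{I}))$; also $H(\Gamma)$, $X = \P(\Gamma)$, and $\R_\Gamma$ depend only on $\Gamma$, hence are fixed once $\Gamma$ is fixed. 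So it suffices to exhibit, for each generalisation in the statement, a semiring $(A,+,\times,0_A,1_A)$ and an $A$-pre-morphism $\Omega$ (with weights in the appropriate $B$) that is poly-time computable, strong, and corestriction independent, and such that $\Omega$ applied to the solution set returns the desired output.

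Next I would instantiate the four cases, referring to Appendix~\ref{app:examples of pre-morphisms} for the detailed constructions and to Table~\ref{tab:semiring generalisations}: for BINARY-CSP$(\Gamma)$ one takes the Boolean semiring $\mathbf{2}$ with $\Omega = \mathbf{1}_{\neq\emptyset}$ (ignoring weights); for $\#$BINARY-CSP$(\Gamma)$ the semiring $\N$ with $\Omega = \#$ (cardinality), again ignoring weights; for $\#$list-BINARY-CSP$(\Gamma)$ the semiring $\N$ with $\Omega = \#_{\text{list}}$ and $B = \mathbf{2}$ encoding the list matrix $L$; and for $\#$cost-list-BINARY-CSP$(\Gamma)$ with weights the semiring $\Rbar\times\N$ with the (MinCost, $\#$ArgMinCost) pre-morphism and $B = \Rbar$ encoding the cost matrix $C$ (and, if one wants lists simultaneously, a product with $\mathbf{2}$, which does not change the picture). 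For each of these I would note that $+$ and $\times$ are computed in constant time, that the singleton value $\Omega_W(\{f^S_{\{a\}}\})$ is computed in polynomial time (a product/sum over $|S|$ entries of $W$ in the cost case, a trivial lookup otherwise), that strongness holds — disjointness of the target sets $T_1,T_2$ is not actually needed for any of these $\Omega$ since their value on a join is a pointwise combination — and that corestriction independence holds because the singleton values depend only on the pairs $(u,f(u))$ and never on vertices of $V_H$ outside the image; these are exactly the "weak assumptions" the theorem's preamble advertises. Given these verifications, Theorem~\ref{thm:Solving Omega H-(R-MORPHISM) parameterized} yields an algorithm running in $O\bigl((2^{|V_H(\Gamma)|}-1)^{ctww(G(\mathcal{I}))}\times|V_{G(\mathcal{I})}|^2\bigr)$ time; since $|V_{H(\Gamma)}| = |D|$ is a constant fixed with $\Gamma$, the factor $(2^{|D|}-1)^{ctww(\mathcal{I})}$ is of the form $f(\mu(\mathcal{I}))$ with $\mu = ctww$ and $f$ computable, and the remaining factor is polynomial in the instance size, which is precisely the definition of fixed-parameter tractability w.r.t.\ the component twin-width of the instance.

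The only genuine subtlety — and the step I expect to require the most care — is the cost (and potential weighted/list) combination: one must be careful that the pre-morphism is over a semiring and not merely a pre-semiring, i.e.\ that the chosen structure $\Rbar\times\N$ (with $+$ the "take-the-min and sum the argmin-counts" operation and $\times$ coordinatewise addition/multiplication) really satisfies distributivity and absorption of $0_A$, and that with the $+\infty$ conventions fixed in the Preliminaries the pre-morphism axioms $\Omega_W(\F_1\uplus\F_2) = \Omega_W(\F_1)+\Omega_W(\F_2)$, $\Omega_W(\F_1\Join\F_2) = \Omega_W(\F_1)\times\Omega_W(\F_2)$, and $\Omega_W(\emptyset) = 0_A$ all hold. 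Once these are confirmed (they are recorded in Appendix~\ref{app:examples of pre-morphisms}), the corollary follows immediately, and I would simply state that the four problems are special cases of $\Omega$(BINARY-CSP$(\Gamma)$) to which Theorem~\ref{thm:Solving Omega H-(R-MORPHISM) parameterized} applies, so all are FPT parameterised by $ctww$ of the instance.
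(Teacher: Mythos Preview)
Your proposal is correct and matches the paper's approach exactly: the paper states the corollary without a written proof, treating it as an immediate consequence of Theorem~\ref{thm:Solving Omega H-(R-MORPHISM) parameterized} together with the pre-morphisms recorded in Table~\ref{tab:semiring generalisations} and Appendix~\ref{app:examples of pre-morphisms} (Lemmas~\ref{lemma:decision dioid}, \ref{lemma:counting dioid}, \ref{lemma:list counting dioid}, \ref{lemma:MinCost dioid}). Your write-up is simply a careful expansion of that implicit argument, including the translation via Theorem~\ref{thm:Binary-CSP to R-morphism} and the observation that $|V_{H(\Gamma)}|=|D|$ is fixed with $\Gamma$, so nothing substantive differs.
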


This strongly generalizes many results in the literature. For example, Kobler and Rotics \cite{kobler2003edge} proved that costs-list-$q$-COLORING is FPT when parameterized by clique-width (recall that clique-width and component twin-width are functionally equivalent on graphs~\cite{bonnet2022twin6}). Similarly, we strongly generalize Wahlstr\"om's FPT algorithm for \#$H$-COLORING (with respect to clique-width) since we can handle arbitrary binary constraints as well as the extended problems. This also supplements to the well known result that, for every graph $H$, $H$-COLORING is FPT when parameterized by clique-width, by solving also counting versions. This is a corollary derived from Courcelle et al. \cite{courcelle1990monadic} algorithm that solves an optimization version of the problem of checking whether a valuation over the vertices of structure is a model of a fixed monadic second-order logic formula in FPT time (parameterized by clique-width), see Table \ref{tab:tractability results}.

\begin{table}
    \centering
    \begin{tabular}{c c c}
    \textbf{Generalisation} & Target for Courcelle \cite{DBLP:journals/mst/CourcelleMR00} & Target for Theorem \ref{thm:Solving Omega H-(R-MORPHISM) parameterized} \\
    \hline
        $H$-COLORING & YES & YES \\
        $\#H$-COLORING &  & YES \\
        list-$H$-COLORING & YES & YES \\
        \#list-$H$-COLORING &  & YES \\
        cost-$H$-COLORING & YES & YES \\
        \#cost-$H$-COLORING &  & YES \\
        \#list-cost-$H$-COLORING &  & YES \\
        weighted-$H$-COLORING & YES & see discussion in Section \ref{sec:conclusions} \\
        \#weighted-$H$-COLORING &  & see discussion in Section \ref{sec:conclusions} \\
        \#list-weighted-$H$-COLORING &  & see discussion in Section \ref{sec:conclusions} \\
        restricted-$H$-COLORING & YES & see discussion in Section \ref{sec:conclusions} \\
        \#restricted-$H$-COLORING &  & see discussion in Section \ref{sec:conclusions} \\
        \#list-restricted-$H$-COLORING &  & see discussion in Section \ref{sec:conclusions} \\
        \#list-restricted-cost-$H$-COLORING &  & see discussion in Section \ref{sec:conclusions} \\
        \#list-restricted-weighted-$H$-COLORING &  & see discussion in Section \ref{sec:conclusions} \\
        \hline
    \end{tabular}
    \caption{Tractability results parameterized by the (functionally equivalent) parameters clique-width/component twin-width.}
    \label{tab:tractability results}
\end{table}

We can also use Theorem \ref{thm:Solving Omega H-(R-MORPHISM)} to derive upper bounds on the complexities of several generalisations of $H$-COLORING problems (for some specific values of $H$) through semirings that improve previously know results. These results are summarized in Table \ref{tab:upper bounds} and are straightforward consequences of Theorem \ref{thm:Solving Omega H-(R-MORPHISM) parameterized} and Theorem \ref{thm:Solving Omega H-(R-MORPHISM)} (but explicitly demonstrated in in Appendix \ref{app:examples of pre-morphisms}).

\section{Conclusions and Perspectives}
\label{sec:conclusions}

We investigated the complexity of binary constraint satisfaction problems under the lens of the component twin-width parameter. In order to obtain as general results as possible, we considered several frequently occurring problem extensions, e.g., counting, allowing weights, cost and list constraints, which we formulated in a unifying semiring framework which greatly simplified the algorithmic results. Importantly, we obtained two novel algorithms by bounding either the class of input instances, or the constraint template, and presented several instances where our approach beats both the best known upper bound (e.g., counting homomorphisms to cycles) as well as improving upon earlier algorithms making use of tree-width and clique-width.
These results raise several questions for future work:

{\flushleft\bf Generalized problems.} Even though Theorem \ref{thm:Solving Omega H-(R-MORPHISM) parameterized} and Theorem \ref{thm:Solving Omega H-(R-MORPHISM)} are very general algorithms applicable to broad classes of binary constraints it is still tempting to generalize them to even wider classes of problems. For instance, even though Theorem \ref{thm:Solving Omega H-(R-MORPHISM) parameterized} does not apply to every generalisation of BINARY-CSP presented in Table \ref{tab:tractability results} (the one involving $weighted$ and $restricted$), it is still possible to prove that these problems are FPT parametrized by component twin-width. It is sufficient to modify Algorithm \ref{algo:main algorithm parameterized} by adding to the tabular $\overline{\text{OMEGA}}$ $m$ entries corresponding to the $m$ vertices of $H$, corresponding to the weights/cardinal (when considering the weighted/restricted generalisations) of preimages reached by every vertex of $H$, which allows dynamic programming. We expect that implementing an algebraic structure over the set of weights $B$ would make possible to reformulate Theorem \ref{thm:Solving Omega H-(R-MORPHISM) parameterized} and Algorithm \ref{algo:main algorithm parameterized} in order to include these kind of algorithms and tractability results as well. To go even further, it would be interesting to generalize Theorem \ref{thm:Solving Omega H-(R-MORPHISM) parameterized} and Theorem \ref{thm:Solving Omega H-(R-MORPHISM)} to constraints of arbitrary arity. One possibility is to express such CSPs by generalising the concept of edge-labelled graphs to ``edge-labelled hypergraphs'', labelling arbitrary large tuples over $V_G$. Are the usual parameters on hypergraphs, and in particular generalizations of (component) twin-width, applicable to edge-labelled hypergraphs? Variants of CSP could also be considered. For instance, can we solve the PROMISE-BINARY-CSP problems (see Barto et al. \cite{barto2021algebraic} for details) similarly? A promise constraint satisfaction problem (PCSPs) requires two finite similar structures $\mathcal{A}$ and $\mathcal{B}$ with an homomorphism $h:\mathcal{A}\rightarrow \mathcal{B}$, and asks, given a structure $\mathcal{I}$ whether $\mathcal{I}\rightarrow \mathcal{A}$ or $\mathcal{I}\cancel{\rightarrow} \mathcal{B}$, with the promise that these two statements are not both false (we know that they are not both true from the existence of $h$). The algebraic approach proposed by Barto et al. \cite{barto2021algebraic} already led to interesting hardness results, such as the NP-hardness of the distinguishment of the $q$-colourable graphs from those that are not $(2q- 1)$-colourable. As a complete dichotomy theorem still eludes us, it would be interesting to check whether techniques from parameterized and fine-grained complexity can be applied in the promise setting.

{\flushleft\bf Comparison with clique-width and other parameters.} Other graph parameters can probably be efficiently extended to edge-labelled graphs and thus to BINARY-CSP, while preserving the soundness of the algorithms working on this parameters. We believe that clique-width should be a good candidate since it is functionally equivalent on graphs~\cite{bonnet2022twin6}, with the main structural subtlety  being the creation of labelled edges in $k$-expressions. Assuming that we have been able to extend clique-width to edge-labelled graphs, does it stay functionally equivalent to component twin-width on edge-labelled graphs? We also believe that investigating the parameterized complexity of semirings generalisations of (BINARY-)CSP with other parameters than component twin-width could lead to similar interesting results. However, we have been unable to produce any meaningful results with twin-width rather than component twin-width. Can this difficulty be formalized into a concrete lower bound?
{\flushleft\bf Algebraic developments.} The algebraic generalisations of BINARY-CSP through semiring pre-morphism are inconvenient in certain aspects. For instance, as discussed earlier, it  would be interesting to describe convenient structures over the sets of weights. Most importantly, we still lack algebraic operations that combine semiring pre-morphisms together, in order to automatically handle combinations of semiring generalisations without redefining a new semiring pre-morphism each time. For example, it would be desirable to be able to build the $\N$-pre-morphism ``$\#_{\text{list}}$'' (which leads to the counting-list generalisation, see Lemma \ref{lemma:list counting dioid}) using the $\N$-pre-morphism ``$\#$'' (which leads to the counting generalisation, see Lemma \ref{lemma:counting dioid}) and the $\mathbf{2}$-pre-morphism ``$\text{list}$'' (which leads to the list generalisation, see Lemma \ref{lemma:list dioid}). Moreover, we noticed that every semiring used in this paper is even a dioid. Can we take advantage of the additional properties of dioids? More generally, can we extend BINARY-CSP via other algebraic structures?

\newpage

\appendix

\section{Proofs of Section \ref{sec:hom_equivalence}}\label{sec:proof of hom_equivalence}

We give here a proof of Theorem \ref{thm:Binary-CSP to R-morphism}. It follows naturally from the constructions $H(\Gamma)$, $\R_{\Gamma}$ and $G(\mathcal{I})$.

\BinaryCSPtoRmorphism*

\begin{proof}\label{proof:Binary-CSP to R-morphism}

$f$ is a solution of the instance $\mathcal{I}$ of BINARY-CSP$(\Gamma)$ $\underset{\text{By definition of BINARY-CSP}(\Gamma)}{\iff}$

For all $(u,v)\in V^2$, for all constraint $R(u,v)$ of $\mathcal{I}$, $(f(u),f(v))\in R$
$\underset{\text{By definition of }l_{\mathcal{I}(u,v)}}{\iff}$

For all $(u,v)\in V^2$, for all $R\in l_{\mathcal{I}}(u,v)$, $(f(u),f(v))\in R$
$\underset{\text{By definition of }l_{\Gamma}(f(u),f(v))}{\iff}$

For all $(u,v)\in V^2$, for all $R\in l_{\mathcal{I}}(u,v)$, $R\in l_{\Gamma}(f(u),f(v))$
$\underset{\text{By definition of inclusion}}{\iff}$

For all $(u,v)\in V^2$, $l_{\mathcal{I}}(u,v) \subseteq l_{\Gamma}(f(u),f(v))$
$\underset{\text{By definition of }\R_{\Gamma}}{\iff}$

For all $(u,v)\in V^2$, $(l_{\mathcal{I}}(u,v),l_{\Gamma}(f(u),f(v))) \in \R_{\Gamma}$
$\underset{\text{By definition of }H(\Gamma)\text{-(}\R_{\Gamma}\text{-MORPHISM)}}{\iff}$

$f$ is a solution of the instance $G(\mathcal{I})$ of $H(\Gamma)$-($\R_{\Gamma}$-MORPHISM)

\end{proof}

Similarly, we give a proof to Theorem \ref{thm:R-morphism to Binary-CSP} which essentially follows from the definitions  of $\Gamma_{H,\R}$ and $\mathcal{I}(G)$.

\RmorphismtoBinaryCSP*

\begin{proof}\label{proof:R-morphism to Binary-CSP}

$f$ is a solution of the instance $\mathcal{I}(G)$ of BINARY-CSP($\Gamma_{H,\R}$)
$\underset{\text{By definitions of BINARY-CSP}(\Gamma_{H,\R})\text{ and }\mathcal{I}(G)}{\iff}$

For all $(u,v)\in (V_G)^2$, $(f(u),f(v))\in R_{l_G(u,v)}$
$\underset{\text{By definition of }R_{l_G(u,v)}}{\iff}$

For all $(u,v)\in (V_G)^2$, $(l_G(u,v),l_H((f(u),f(v))))\in \R$
$\underset{\text{By definition of }H\text{-}(\R\text{-MORPHISM})}{\iff}$

$f$ is a solution of the instance $G$ of $H$-($\R$-MORPHISM).

\end{proof}

\section{Feasibility}\label{app:feasibility}

Looking at Definition \ref{def:Meaning Contraction}, we can interpret the $\red$-edges of $H'$ as a loss of information. Since we want to study the $\R$-morphisms of every subgraph of $G$ to some wisely chosen subgraphs of $H$, it seems natural to choose the subgraphs of $H$ of the form $H[T_1\uplus\dots\uplus T_p]$, when $\{T_1,\dots, T_p\}$ is a $\red$-connected component of a graph $H_k$ ($k\in [m]$) of the contraction. This is the main reason why component twin-width affects the complexity of the algorithm.

\begin{definition}\label{def:feasibility}

Let $G$ and $H$ be two $\red$-free edge-labelled graphs, $\R\subseteq X_G \times X_H$, $H'$ a contraction of $H$, $\T=(T_1,\dots,T_p)$ a tuple of $p$ different vertices of $H'$, and $\S=(S_1,\dots,S_p)$ a tuple of $p$ pairwise disjointed subsets of $V_G$ (possibly empty).
We say that $\S$ is $\R$-{\em feasible} with respect to $\T$ and we denote $\S\preceq_{\R}\T$ if for all $(i,i')\in [p]^2$, we have:
\[l_{H'}(T_i,T_{i'})\neq \red \implies \forall (u,v)\in S_i\times S_{i'}, (l_G(u,v),l_{H'}(T_i,T_{i'})) \in \R.\]

\end{definition}

We also define a symmetrical notion when a contraction of $G$ is considered instead.

\begin{definition}\label{def:making feasibility}

Let $G$ and $H$ be two $\red$-free edge-labelled graphs, $\R\subseteq X_G \times X_H$, $G'$ a contraction of $G$, $\S=(S_1,\dots,S_p)$ a tuple of $p$ different vertices of $G'$, and $\T=(T_1,\dots,T_p)$ a tuple of $p$ non-empty subsets of $V_H$ (not necessarily pairwise disjointed).
We say that {\em $\T$ makes $\S$ $\R$-feasible}, and we denote $\T \succeq_{\R} \S$ if for all $(i,i')\in [p]^2$, we have:

\[l_{G'}(S_i,S_{i'})\neq \red \implies \forall (a,b)\in T_i\times T_{i'}, (l_G(S_i,S_{i'}),l_H(a,b)) \in \R.\]
    
\end{definition}

In the Definition \ref{def:feasibility}, we ask for the elements of $\S$ to be pairwise disjointed, whereas in Definition \ref{def:making feasibility}, we require the elements of $\T$ to be non-empty. The reason is that when employing Definition \ref{def:feasibility}, the elements of $\S$ will play the role of the preimages of the elements of $\T$, and will therefore be pairwise disjointed (since the elements of $\T$ are non-empty and pairwise disjointed), whereas in Definition \ref{def:making feasibility}, the elements of $\T$ will play the role of the images of the elements of $\S$, and will therefore be non-empty (since the elements of $\S$ are non-empty and pairwise disjointed).

\begin{remark}

Using Definition \ref{def:Meaning Contraction}, we notice the truthness of the proposition ``$\S\preceq_{\R}\T$'' and ``$\T\succeq_{\R}\S$'' does not depend on the contraction $H'$ or $G'$.

\end{remark}

\section{Soundness of Algorithm \ref{algo:main algorithm}}

\begin{algorithm}
\caption{Solving $\Omega$($H$-($\R$-MORPHISM)) ie. $\Omega$(BINARY-CSP($\Gamma$)) (fine-grained version)}
\label{algo:main algorithm}
Create a tabular OMEGA filled with $0_A$\

\For{$S\subseteq V_G$, $a\in V_H$}
{
\If{$\forall (u,v)\in S^2, (l_G(u,v),l_H(a,a))\in\R$}
{OMEGA$\begin{bmatrix} S & \{a\} \end{bmatrix}\gets \Omega_W(\{f^S_{\{a\}}\})$ (with $f^S_{\{a\}} : \begin{matrix} S & \mapsto & \{a\} \\ u &\mapsto & a \end{matrix}$)}
}

\For{$k=m-1$ downto $1$}
{
$(T_p,T_{p+1})\leftarrow$ contracted pair in the contraction $H_{k+1} \rightarrow H_k$ of $\mathbb{H}$

$T_0 \leftarrow$ contraction of $T_p$ and $T_{p+1}$ in $H_k$

$C=\{T_0,T_1,\ldots T_{p-1}\}$ the $\red$-connected components of $H_k$ containing $T_0$

$C_1 \uplus \ldots\uplus C_q = \{T_1,\ldots,T_{p-1},T_p,T_{p+1}\} $ be the partitionning of $(C\setminus \{T_0\})\cup \{T_p,T_{p+1}\}$ into $\red$-connected components in $H_{k+1}$

\For{$j=1$ to $q$}
{
Define $I_j\subseteq [p]$ such that $C_j=\{T_i, i\in I_j \}$

$I_j=:\{ I_j[1],\dots,I_j[p_j] \}$ with $p_j=|I_j|$
}

\For{$S_0,S_1,\ldots,S_{p-1}\subseteq V_G$ pairwise disjointed}
{
\For{$S_p\uplus S_{p+1}$ partitionning $S_0$}
{
\If{$(S_1,\dots,S_{p-1},S_p,S_{p+1}) \preceq_{\R}(T_1,\dots,T_{p-1},T_p,T_{p+1})$}
{OMEGA$\begin{bmatrix} S_0 & T_0 \\  S_1 & T_1 \\ \vdots & \vdots  \\ S_{p-1} & T_{p-1}\end{bmatrix}+=$ OMEGA$\begin{bmatrix} S_{I_1[1]} & T_{I_1[1]} \\  \vdots & \vdots  \\ S_{I_1[p_1]} & T_{I_1[p_1]}\end{bmatrix}\times\dots\times$ OMEGA$\begin{bmatrix} S_{I_q[1]} & T_{I_q[1]} \\  \vdots & \vdots  \\ S_{I_q[p_q]} & T_{I_q[p_q]}\end{bmatrix}$}
}
}
}
\Return{{\em OMEGA}$\begin{bmatrix} V_G & V_H \end{bmatrix}$}
\end{algorithm}

\begin{definition}

Let $G$ and $H$ two $\red$-free edge-labelled graphs on $n$ and $m$ vertices respectively. Let $\R\subseteq X_G \times X_H$.
Let $\T=(T_1,\dots,T_p)$ be a tuple of $p$ pairwise disjointed subsets of $V_H$ (in particular, $T_1,\dots,T_p$ can be different vertices of a contraction $H'$ of $H$) and $T=T_1\uplus\dots\uplus T_p$.
Let $\S=(S_1,\dots,S_p)$ be a tuple of $p$ pairwise disjointed subsets of $V_G$, and $S=\cup \S=S_1\uplus\dots\uplus S_p$. We denote by $\R^{\S}_{\T}$ the set

\[\R^{\S}_{\T}:=\{ f:G[S]\underset{\R}{\rightarrow} H[T] \mid f(S_1)\subseteq T_1,\dots,f(S_p)\subseteq T_p \}\]

\end{definition}

\begin{lemma}\label{lemma:empty set,join lemma}

Let $G$ and $H$ two $\red$-free edge-labelled graphs, on $n$ and $m$ vertices respectively. Let $H'$ be a contraction of $H$, $\R\subseteq X_G\times X_H$, $C_1,\dots,C_q$ be $q$ different $\red$-connected components of $H'$ ($q\geq 1$), $\{T_1,\dots,T_p\} = C_1\uplus\dots\uplus C_q$ and let $\T=(T_1,\dots,T_p)$, $\S=(S_1,\dots,S_p)$ be a tuple of $p$ pairwise disjointed subsets of $V_G$. Let for all $j\in q$, $I_j\subseteq [p]$ be such that $C_j=\{T_i,i\in I_j\}$. Then 

\[\R^{\S}_{\T} = \left\{
    \begin{array}{ll}
        \emptyset & \text{ if } \S\cancel{\preceq}_{\R}\T
        \\
        \R^{\S_{I_1}}_{\T_{I_1}}\Join \dots\Join \R^{\S_{I_q}}_{\T_{I_q}} & \text{ if } \S\preceq_{\R}\T
    \end{array}\right.\]

\end{lemma}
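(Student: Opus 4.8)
The statement splits into two disjoint cases according to whether $\S \preceq_{\R} \T$, so the plan is to handle each separately. First I would dispose of the case $\S \cancel{\preceq}_{\R} \T$. By Definition~\ref{def:feasibility}, $\S \cancel{\preceq}_{\R} \T$ means there is a pair $(i,i') \in [p]^2$ with $l_{H'}(T_i, T_{i'}) \neq \red$ and some $(u,v) \in S_i \times S_{i'}$ with $(l_G(u,v), l_{H'}(T_i, T_{i'})) \notin \R$. The observation from the Remark in Appendix~\ref{app:feasibility} is that $l_{H'}(T_i,T_{i'}) \neq \red$ forces, by Definition~\ref{def:Meaning Contraction}, that $l_H$ is constant equal to $l_{H'}(T_i, T_{i'})$ on all of $T_i \times T_{i'}$. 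So for any candidate $f \in \R^{\S}_{\T}$ we would have $f(u) \in T_i$, $f(v) \in T_{i'}$, hence $l_H(f(u), f(v)) = l_{H'}(T_i, T_{i'})$, and then $(l_G(u,v), l_H(f(u),f(v))) \notin \R$ contradicts $f$ being an $\R$-morphism $G[S] \to H[T]$. Therefore $\R^{\S}_{\T} = \emptyset$.

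Next, the case $\S \preceq_{\R} \T$. Here I would prove the set equality $\R^{\S}_{\T} = \R^{\S_{I_1}}_{\T_{I_1}} \Join \dots \Join \R^{\S_{I_q}}_{\T_{I_q}}$ by mutual inclusion (noting first that the $\S_{I_j}$ have pairwise disjoint unions and likewise the $\T_{I_j}$, so the iterated join is well defined). For the inclusion $\subseteq$: given $f \in \R^{\S}_{\T}$, set $f_j := f|_{\cup \S_{I_j}}$. Each $f_j$ maps $\cup \S_{I_j}$ into $\cup \T_{I_j}$ respecting the block structure $S_i \mapsto T_i$ for $i \in I_j$, and it is an $\R$-morphism $G[\cup \S_{I_j}] \to H[\cup \T_{I_j}]$ because this condition is a restriction of the condition defining $f$; hence $f_j \in \R^{\S_{I_j}}_{\T_{I_j}}$, and $f = f_1 \Join \dots \Join f_q$ by construction. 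For the inclusion $\supseteq$: given $f_j \in \R^{\S_{I_j}}_{\T_{I_j}}$ for each $j$, let $f = f_1 \Join \dots \Join f_q$, a function from $S$ to $T$ respecting $S_i \mapsto T_i$. To check $f$ is an $\R$-morphism I would take an arbitrary pair $(u,v) \in S \times S$, say $u \in S_i$, $v \in S_{i'}$. If $i$ and $i'$ lie in the same index set $I_j$, then $(l_G(u,v), l_H(f(u),f(v))) \in \R$ because $f$ agrees with $f_j$ there and $f_j$ is an $\R$-morphism. If $i \in I_j$ and $i' \in I_{j'}$ with $j \neq j'$, then $T_i$ and $T_{i'}$ lie in different $\red$-connected components of $H'$, so $l_{H'}(T_i, T_{i'}) \neq \red$; since $\S \preceq_{\R} \T$ holds, Definition~\ref{def:feasibility} directly gives $(l_G(u,v), l_{H'}(T_i,T_{i'})) \in \R$, and as above $l_{H'}(T_i,T_{i'}) = l_H(f(u),f(v))$, so $(l_G(u,v), l_H(f(u),f(v))) \in \R$. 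Thus $f \in \R^{\S}_{\T}$.

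\textbf{Main obstacle.} The only real subtlety is the cross-component case in the $\supseteq$ direction: one must argue that pairs $(u,v)$ straddling two different $C_j$'s are automatically fine, which is exactly where the $\red$-connectedness of the components and the feasibility hypothesis are used — if $T_i, T_{i'}$ were in the same $\red$-component one could not conclude $l_{H'}(T_i,T_{i'}) \neq \red$, and indeed the decomposition into a join would fail. I would make sure to spell out that ``different $\red$-connected components'' means there is no $\red$-edge between $T_i$ and $T_{i'}$ in $H'$, hence $l_{H'}(T_i, T_{i'}) = l_{H'}(T_{i'}, T_i) \neq \red$, which (via Definition~\ref{def:Meaning Contraction}) pins down $l_H$ on $T_i \times T_{i'}$ to a single label and lets the feasibility condition do its work. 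Everything else is bookkeeping about restrictions and joins of functions.
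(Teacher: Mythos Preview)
Your proposal is correct and follows essentially the same argument as the paper's own proof: both handle the infeasible case by contradiction via Definition~\ref{def:Meaning Contraction}, and in the feasible case prove the double inclusion by restricting $f$ to each $\cup\S_{I_j}$ for $\subseteq$, and for $\supseteq$ checking the $\R$-morphism condition on $f_1\Join\dots\Join f_q$ with the same two-case split (same component vs.\ different components, the latter using $l_{H'}(T_i,T_{i'})\neq\red$ together with feasibility). Your identification of the cross-component case as the only real content matches exactly where the paper spends its effort.
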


\begin{proof}

First, assume that $\S\cancel{\preceq}_{\R}\T$:
Assume by contradiction that there exists $f\in \R^{\S}_{\T}$, i.e.\ $f$ is an $\R$-morphism from the edge-labeled graph $G[S_1\uplus\dots\uplus S_p]$ to the edge-labeled graph $H$ and $\forall i\in [p], f(S_i)\subseteq T_i$. Since $\S$ is not $\R$-feasible with respect to $\T$, there exists a pair $(i,i')\in [p]^2$ such that:

\begin{itemize}
    \item $l_{H'}(T_i,T_{i'})\neq \red$,
    \item there exists $u\in S_i$ and $v\in S_{i'}$ with $(l_G(u,v),l_{H'}(T_i,T_{i'}))\notin \R$.
\end{itemize}

By Definition \ref{def:Meaning Contraction}, $\forall (a,b)\in T_i\times T_{i'}, l_H(a,b)= l_{H'}(T_i,T_{i'})$. Notice that, by definition of $f\in\R^{\S}_{\T}$, $(f(\underbrace{u}_{\in S_i}),f(\underbrace{v}_{\in S_{i'}})) \in T_i \times T_{i'}$. Thus, $l_H(f(u),f(v))=l_{H'}(T_i,T_{i'})$. Since $f$ is an $\R$-momorphism, we have $(l_G(u,v),\underbrace{l_H( f(u),f(v) )}_{=l_{H'}(T_i,T_{i'})} ) \in \R$, which contradicts the definition of $(u,v)$. We have a contradiction, which proves that $\R^{\S}_{\T} = \emptyset$.

Second, assume that $\S\preceq_{\R}\T$:

Take $f\in \R^{\S}_{\T}$ and $j\in [q]$. Since restrictions of $\R$-morphisms are $\R$-morphisms, $f|_{\cup\S_{I_j}}$ is an $\R$-morphism. Moreover, since $f\in \R^{\S}_{\T}$, for all $j\in I_j$, $f(S_j)\subseteq T_j$. We deduce that $f|_{\cup\S_{I_j}}\in \R^{\S_{I_j}}_{\T_{I_j}}$. This proves that $\R^{\S}_{\T} \subseteq \R^{\S_{I_1}}_{\T_{I_1}}\Join \dots\Join \R^{\S_{I_q}}_{\T_{I_q}}$.

We now prove the reverse. Let $(f_1,\dots,f_q)\in \R^{\S_{I_1}}_{\T_{I_1}}\times \dots\times \R^{\S_{I_q}}_{\T_{I_q}}$, and let $f=f_1\Join\dots\Join f_q$. We will prove that $f\in \R^{\S}_{\T}$. Clearly, by definition of the $\R^{\S_{I_j}}_{\T_{I_j}}$ for $j\in [q]$, we have, for all $i\in [p], f(S_i)\subseteq T_i$ (knowing that $(I_1,\dots,I_q)$ is a partition of $[p]$). There only remains to prove that $f$ is an $\R$-morphism.
Let $S=S_1\uplus\dots\uplus S_p$ and take $(u,v)\in S^2$. We will prove that $(l_G(u,v),l_H(f(u),f(v)))\in \R$. Let $(i,i')\in [p]^2$ be such that $u\in S_i$ and $v\in S_{i'}$.

\begin{enumerate}

\item If there exists $j\in [q]$ such that $(i,i')\in (I_j)^2$ (ie. if $T_i$ and $T_{i'}$ belong to the same $\red$-connected component $C_j$), then, $(u,v)\in (S_{I_j})^2$. It follows by definition of $f$ that $(f(u),f(v))=(f_j(u),f_j(v))$, and then $(l_G(u,v),l_H(f(u),f(v)))\in\R$ because $f_j$ is an $\R$-morphism.

\item Else, by definitions of $I_j$ and $C_j$ for $j\in [q]$, $T_i$ and $T_{i'}$ are not $\red$-connected in $H'$. We deduce that, in particular, $l_{H'}(T_i,T_{i'})\neq \red$. Using Definition \ref{def:Meaning Contraction}: $\forall (a,b)\in T_i\times T_{i'}, l_H(a,b)=l_{H'}(T_i,T_{i'})$. Then, $(f(u),f(v))=(f_i(u),f_{i'}(v))\in T_i\times T_{i'}$, thus $l_H(f(u),f(v))=l_{H'}(T_i,T_{i'})$. Using $\S\preceq_{\R}\T$, we have $(l_G(u,v),\underbrace{l_{H'}(T_i,T_{i'})}_{=l_H(f(u),f(v))})\in \R$.
\end{enumerate}

We have proven that $(l_G(u,v),l_H(f(u),f(v)))\in\R$. Hence, $f$ is indeed an $\R$-morphism, which concludes the proof.

\end{proof}

\begin{lemma}\label{lemma:split}

Let $G$ and $H$ be two $\red$-free edge-labelled graphs, and let $\R\subseteq X_G\times X_H$, $T_0,T_1,\dots,T_{p-1}$ be $p$ pairwise disjointed subsets of $V_H$, and let $\T=(T_1,\dots,T_{p-1})$, $S_0,S_1,\dots,S_{p-1}$ be $p$ pairwise disjointed subsets of $V_G$, and let $\S=(S_1,\dots,S_{p-1})$, $T_p\uplus T_{p+1}=T_0$ be a partition of $T_0$.Then

\[\R^{\S,S_0}_{\T,T_0}
    =
\underset{\begin{matrix} S_p\uplus S_{p+1} = S_0 \\ (\S,S_p,S_{p+1}) \preceq_{\R} (\T,T_p,T_{p+1}) \end{matrix}}{\uplus}
\R^{\S,S_p,S_{p+1}}_{\T,T_p,T_{p+1}}
\]

\end{lemma}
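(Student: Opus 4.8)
\textbf{Proof plan for Lemma~\ref{lemma:split}.}
The plan is to prove the claimed identity by establishing the two directions of a set equality: every $\R$-morphism $f$ counted on the left lies in exactly one of the pieces on the right, and conversely each piece on the right is contained in the left-hand set. First I would unpack the notation: an element $f\in\R^{\S,S_0}_{\T,T_0}$ is an $\R$-morphism $f\colon G[S_0\uplus S_1\uplus\dots\uplus S_{p-1}]\to H[T_0\uplus T_1\uplus\dots\uplus T_{p-1}]$ with $f(S_i)\subseteq T_i$ for $i\in\{0,1,\dots,p-1\}$. Given such an $f$, set $S_p := f^{-1}(T_p)\cap S_0$ and $S_{p+1} := f^{-1}(T_{p+1})\cap S_0$. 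Since $T_p\uplus T_{p+1}=T_0$ and $f(S_0)\subseteq T_0$, these two sets partition $S_0$, and by construction $f(S_p)\subseteq T_p$, $f(S_{p+1})\subseteq T_{p+1}$, so $f\in\R^{\S,S_p,S_{p+1}}_{\T,T_p,T_{p+1}}$ once we check feasibility. The key observation is that this partition of $S_0$ is \emph{forced} by $f$, which is exactly what makes the union on the right disjoint.

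The next step is to verify that the index $(S_p,S_{p+1})$ actually appears in the union, i.e.\ that $(\S,S_p,S_{p+1})\preceq_{\R}(\T,T_p,T_{p+1})$. Here I would invoke the characterization of feasibility from Definition~\ref{def:feasibility}: for each pair of indices whose labels in the relevant contraction are not $\red$, every pair of vertices drawn from the corresponding blocks must satisfy the constraint of $\R$. But any such pair of vertices lies in blocks among $S_0,\dots,S_{p-1}$ refined by splitting $S_0$, and $f$ maps them into the corresponding $T$-blocks; since $f$ is an $\R$-morphism and labels within the contracted blocks are constant (Definition~\ref{def:Meaning Contraction}), the constraint is automatically satisfied. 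This is essentially the same argument already used in the proof of Lemma~\ref{lemma:empty set,join lemma}, so I would either mirror it or, more economically, note that $\R^{\S,S_p,S_{p+1}}_{\T,T_p,T_{p+1}}$ being nonempty (it contains $f$) forces $\S$-feasibility by the first part of that lemma. Thus the map $f\mapsto f$ (viewed in the appropriate indexed piece) exhibits the inclusion $\subseteq$.

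For the reverse inclusion, I would take any partition $S_p\uplus S_{p+1}=S_0$ with $(\S,S_p,S_{p+1})\preceq_{\R}(\T,T_p,T_{p+1})$ and any $g\in\R^{\S,S_p,S_{p+1}}_{\T,T_p,T_{p+1}}$. Then $g$ is an $\R$-morphism on $G[S_0\uplus S_1\uplus\dots\uplus S_{p-1}]$ with $g(S_p)\subseteq T_p\subseteq T_0$ and $g(S_{p+1})\subseteq T_{p+1}\subseteq T_0$, hence $g(S_0)\subseteq T_0$, and $g(S_i)\subseteq T_i$ for $i\geq 1$; so $g\in\R^{\S,S_0}_{\T,T_0}$. (The domain and codomain are literally the same underlying graphs $G[S_0\uplus\cdots\uplus S_{p-1}]$ and a subgraph of $H$; only the prescribed block-refinement changes, and refining blocks only restricts membership, so no morphism condition is lost or gained.) Finally, for disjointness of the union: if $g$ lay in two pieces indexed by distinct partitions $(S_p,S_{p+1})\neq(S_p',S_{p+1}')$ of $S_0$, then $g(S_p)\subseteq T_p$ and $g(S_p')\subseteq T_p$ while $g(S_{p+1})\subseteq T_{p+1}$, $g(S_{p+1}')\subseteq T_{p+1}$, and $T_p\cap T_{p+1}=\emptyset$ forces $S_p=g^{-1}(T_p)\cap S_0=S_p'$, a contradiction. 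Combining the two inclusions with disjointness yields the $\uplus$-identity.

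\textbf{Main obstacle.} I expect the only genuinely delicate point to be bookkeeping around feasibility: making sure that the condition $(\S,S_p,S_{p+1})\preceq_{\R}(\T,T_p,T_{p+1})$ is neither too strong (so that some $f$ from the left is excluded) nor too weak (so that the right-hand pieces overcount). Both are handled by the fact, used already in Lemma~\ref{lemma:empty set,join lemma}, that a nonempty $\R^{(\cdot)}_{(\cdot)}$ set forces feasibility of its index, together with the observation that refining $T_0$ into $T_p\uplus T_{p+1}$ does not change which pairs of blocks carry a $\red$ label (the remark following Definition~\ref{def:making feasibility}, and Definition~\ref{def:Meaning Contraction}). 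Everything else is a routine unwinding of definitions.
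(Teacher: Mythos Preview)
Your proposal is correct and follows essentially the same route as the paper. The paper phrases the argument slightly more tersely---it partitions $\R^{\S,S_0}_{\T,T_0}$ into equivalence classes under $f\sim f'\iff (f^{-1}(T_p),f^{-1}(T_{p+1}))=(f'^{-1}(T_p),f'^{-1}(T_{p+1}))$ to obtain the disjoint union over \emph{all} partitions $S_p\uplus S_{p+1}=S_0$, and then invokes Lemma~\ref{lemma:empty set,join lemma} to discard the infeasible pieces as empty---which is exactly the content of your two inclusions, your disjointness argument, and your appeal to ``nonemptiness forces feasibility''.
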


\begin{proof}

Partitioning the set
$\R^{\S,S_0}_{\T,T_0}$ in equivalence classes with respect to the equivalence relation $\sim$ defined by:

\hfill

\[ \forall (f,f')\in
(\R^{\S,S_0}_{\T,T_0})^2,
f\sim f' \iff \underbrace{(f^{-1}(T_p),f^{-1}(T_{p+1}))}_{\text{ partition of } S_0} = \underbrace{(f'^{-1}(T_p),f'^{-1}(T_{p+1}))}_{\text{ partition of } S_0} \]

we obtain:

\[\R^{\S,S_0}_{\T,T_0}
    =
\underset{S_p\uplus S_{p+1} = S_0}{\uplus}
\R^{\S,S_p,S_{p+1}}_{\T,T_p,T_{p+1}}
\]

and then, the conclusion follows from Lemma \ref{lemma:empty set,join lemma}.

\end{proof}

\begin{theorem}\label{thm:main cardinal theorem}

Let $\Omega$ a semiring pre-morphism and $W$ a weight matrix for $\Omega$, $G$ and $H$ be two $\red$-free edge-labelled graphs, $(H_m,\dots,H_1)$ a contraction sequence of $H$ (with $m:=|V_H|$), and $k\in [m-1]$, $C_1,\dots,C_q$ be $q$ ($q\geq 1$) different $\red$-connected components of $H_{k+1}$ ($q\geq 1$). Let $\{T_1,\dots,T_{p-1},T_p,T_{p+1}\} = C_1\uplus\dots\uplus C_q$ and $\T=(T_1,\dots,T_{p-1})$, $S_0,S_1,\dots,S_{p-1}$ be $p$ pairwise disjointed subsets of $V_G$. Let $\S=(S_1,\dots,S_{p-1})$. Denote, for all $j\in q$, $I_j\subseteq [p+1]$ such that $C_j=\{T_i,i\in I_j\}$. Assume that $H_k$ is obtained from $H_{k+1}$ by contracting the two different vertices $T_p$ and $T_{p+1}$ to the vertex $T_0=T_p\uplus T_{p+1}$. Then, we have:

\[\Omega_W(\R^{\S,S_0}_{\T,T_0}) = \underset{\begin{matrix} S_p\uplus S_{p+1} = S_0 \\ (\S,S_p,S_{p+1})  \preceq_{\R}  (\T,T_p,T_{p+1}) \end{matrix}}{\sum}
    (\overset{q}{\underset{j=1}{\prod}} \Omega_W (\R^{(\S,S_p,S_{p+1})_{I_j}}_{(\T,T_p,T_{p+1})_{I_j}}))\]

(the notations $\sum$ and $\prod$ reffer to the sum and product of the semiring).

\end{theorem}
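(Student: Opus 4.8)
The plan is to combine the two preceding lemmas with the pre-morphism axioms. First I would observe that Lemma~\ref{lemma:split} gives the set-theoretic decomposition
\[\R^{\S,S_0}_{\T,T_0}
=
\underset{\begin{matrix} S_p\uplus S_{p+1} = S_0 \\ (\S,S_p,S_{p+1}) \preceq_{\R} (\T,T_p,T_{p+1}) \end{matrix}}{\uplus}
\R^{\S,S_p,S_{p+1}}_{\T,T_p,T_{p+1}},\]
where the union is genuinely disjoint because distinct choices of $(S_p,S_{p+1})$ partitioning $S_0$ force the functions to have distinct preimages of $T_p$ and $T_{p+1}$ (here I use that $T_p,T_{p+1}$ are disjoint and nonempty, so the preimage of $T_p$ under any $f\in\R^{\S,S_p,S_{p+1}}_{\T,T_p,T_{p+1}}$ is exactly $S_p$). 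Applying the first (additivity) axiom of a pre-morphism repeatedly over this finite disjoint union yields
\[\Omega_W(\R^{\S,S_0}_{\T,T_0}) = \underset{\begin{matrix} S_p\uplus S_{p+1} = S_0 \\ (\S,S_p,S_{p+1})  \preceq_{\R}  (\T,T_p,T_{p+1}) \end{matrix}}{\sum} \Omega_W(\R^{\S,S_p,S_{p+1}}_{\T,T_p,T_{p+1}}).\]

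Next I would handle each summand. Fix a pair $(S_p,S_{p+1})$ with $(\S,S_p,S_{p+1})\preceq_{\R}(\T,T_p,T_{p+1})$. The crucial point is that $\{T_1,\dots,T_{p-1},T_p,T_{p+1}\} = C_1\uplus\cdots\uplus C_q$ with each $C_j$ a $\red$-connected component of $H_{k+1}$, so the ``$\S\preceq_{\R}\T$'' case of Lemma~\ref{lemma:empty set,join lemma} (applied in $H_{k+1}$, noting by the Remark in Appendix~\ref{app:feasibility} that feasibility does not depend on which contraction we use) gives
\[\R^{\S,S_p,S_{p+1}}_{\T,T_p,T_{p+1}} = \R^{(\S,S_p,S_{p+1})_{I_1}}_{(\T,T_p,T_{p+1})_{I_1}}\Join\cdots\Join\R^{(\S,S_p,S_{p+1})_{I_q}}_{(\T,T_p,T_{p+1})_{I_q}}.\]
The index sets $(S_p,S_{p+1})_{I_j}$ of the domains are pairwise disjoint subtuples of the pairwise disjoint tuple $(S_1,\dots,S_{p-1},S_p,S_{p+1})$, so the $\Join$ here is the legitimate join of functions with disjoint domains, and the preimages live in the (pairwise disjoint) subtuples $(\T,T_p,T_{p+1})_{I_j}$. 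Hence the second (multiplicativity) axiom of a pre-morphism applies and gives
\[\Omega_W(\R^{\S,S_p,S_{p+1}}_{\T,T_p,T_{p+1}}) = \prod_{j=1}^{q}\Omega_W(\R^{(\S,S_p,S_{p+1})_{I_j}}_{(\T,T_p,T_{p+1})_{I_j}}).\]
Substituting this into the previous display yields exactly the claimed identity.

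The main obstacle, and the thing I would be most careful about, is verifying that the second pre-morphism axiom is genuinely applicable: it requires the domains $S_i$ to be pairwise disjoint (which holds by hypothesis, since $S_0 = S_p\uplus S_{p+1}$ and $S_0,S_1,\dots,S_{p-1}$ are pairwise disjoint), and—in the non-strong version—it requires the codomains, i.e.\ the unions $\cup(\T,T_p,T_{p+1})_{I_j}$, to be pairwise disjoint as well. This is precisely where we use that the $C_j$ are \emph{distinct} $\red$-connected components: the $I_j$ partition $[p+1]$ and the $T_i$ are pairwise disjoint, so the blocks $\cup(\T,T_p,T_{p+1})_{I_j}$ are pairwise disjoint subsets of $V_H$. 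A secondary subtlety is the boundary behaviour when some $S_i$ is empty: then $\R^{(\cdot)}_{(\cdot)}$ may reduce to the singleton $\{f^\emptyset_\emptyset\}$, on which $\Omega_W$ takes the value $1_A$ (as noted after the definition of pre-morphism), which is the neutral element for the product, so such factors are harmless and the identity still holds. The remaining steps are routine bookkeeping with the disjoint-union and join notation.
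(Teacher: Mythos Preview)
Your proposal is correct and follows essentially the same approach as the paper's own proof: invoke Lemma~\ref{lemma:split} for the disjoint-union decomposition, apply the first pre-morphism axiom to turn $\uplus$ into $\sum$, then for each summand use the feasible case of Lemma~\ref{lemma:empty set,join lemma} to write it as a $\Join$ over the $\red$-components $C_j$ and apply the second axiom to obtain the product. Your additional remarks (explicit verification that the codomain blocks $\cup(\T,T_p,T_{p+1})_{I_j}$ are pairwise disjoint so that the non-strong second axiom applies, and the harmless role of empty $S_i$ via $\Omega_W(\{f^\emptyset_\emptyset\})=1_A$) go slightly beyond what the paper spells out, but they are consistent with it.
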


\begin{proof}

Using Lemma \ref{lemma:split}:

$\Omega_W(\R^{\S,S_0}_{\T,T_0})
    =
\Omega_W(\underset{\begin{matrix} S_p\uplus S_{p+1} = S_0 \\ (\S,S_p,S_{p+1})  \preceq_{\R}  (\T,T_p,T_{p+1}) \end{matrix}}{\uplus}
\R^{\S,S_p,S_{p+1}}_{\T,T_p,T_{p+1}}\,\,\,\,\,)
$

\hfill

using the first axiom of pre-morphisms:

\hfill

$\Omega_W(\R^{\S,S_0}_{\T,T_0})
    =
\underset{\begin{matrix} S_p\uplus S_{p+1} = S_0 \\ (\S,S_p,S_{p+1})  \preceq_{\R}  (\T,T_p,T_{p+1}) \end{matrix}}{\sum} \Omega_W(\R^{\S,S_p,S_{p+1}}_{\T,T_p,T_{p+1}})
$

\hfill

and we can apply iteratively the second axiom of pre-morphisms with $\F:=\R^{\S,S_p,S_{p+1}}_{\T,T_p,T_{p+1}}$ and $\forall j\in [q], \F_j:=\R^{(\S,S_p,S_{p+1})_{I_j}}_{(\T,T_p,T_{p+1})_{I_j}}$, thanks to Lemma \ref{lemma:empty set,join lemma} (since $(\S,S_p,S_{p+1})  \preceq_{\R}  (\T,T_p,T_{p+1})$) in order to write:

\hfill

$\Omega_W(\R^{\S,S_0}_{\T,T_0}) = \underset{\begin{matrix} S_p\uplus S_{p+1} = S_0 \\ (\S,S_p,S_{p+1})  \preceq_{\R}  (\T,T_p,T_{p+1}) \end{matrix}}{\sum}
    (\overset{q}{\underset{j=1}{\prod}} \Omega_W (\R^{(\S,S_p,S_{p+1})_{I_j}}_{(\T,T_p,T_{p+1})_{I_j}}))$

\hfill

Which was what we wanted to prove.

\end{proof}

\begin{lemma}\label{lemma:loop invariant}

Let $(A,+,\times,0_A,1_A)$ be a semiring, $B$ a set, $\Omega$ a $A$-pre-morphism with weights in $B$, $W\in B^{V_G\times V_H}$, $G$ and $H$ be two $\red$-free edge labelled graphs, and $(H_m,\dots,H_1)$ be a contraction sequence of $H$.
    
For all $k\in [m]$, for all $\red$-connected component $\{T_1,\dots,T_p\}$ of $H_k$, and for all $S_1,\dots,S_p$ pairwise disjointed subsets of $V_G$, after the iteration of index $k$ (and before the iteration of index $k-1$) of the loop ``for $k=m-1$ downto 1'' of Algorithm \ref{algo:main algorithm}, OMEGA$\begin{bmatrix} S_1 & T_1 \\  \vdots & \vdots  \\ S_p & T_p\end{bmatrix}$ contains $\Omega_W(\R^{S_1,\ldots,S_p}_{T_1,\ldots,T_p})$.

\end{lemma}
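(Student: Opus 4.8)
The statement is a loop invariant for Algorithm~\ref{algo:main algorithm}, so the natural approach is induction on the iteration index $k$, running from $k=m-1$ down to $1$, with the base case being the state of the table \textsc{OMEGA} right after the two initialization loops (i.e.\ ``before'' the first iteration, which we may think of as the case $k=m$). First I would establish the base case: for $k=m$, the $\red$-connected components of $H_m=H$ are singletons $\{a\}$ with $a\in V_H$ (since $H$ is $\red$-free), and for a single subset $S\subseteq V_G$ we must check that $\textsc{OMEGA}[S,\{a\}]$ holds $\Omega_W(\R^{S}_{\{a\}})$. Here $\R^{S}_{\{a\}}$ is the set of $\R$-morphisms $f\colon G[S]\to H[\{a\}]$ with $f(S)\subseteq\{a\}$, so this set is either empty (if the feasibility test $\forall(u,v)\in S^2,(l_G(u,v),l_H(a,a))\in\R$ fails) or the singleton $\{f^S_{\{a\}}\}$ (if it holds). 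In the first case the table entry is the initializing value $0_A=\Omega_W(\emptyset)$ by the third axiom of pre-morphisms; in the second case the second initialization loop writes exactly $\Omega_W(\{f^S_{\{a\}}\})$. This matches, provided the feasibility condition tested in the \textbf{if} is precisely the condition for $\R^S_{\{a\}}$ to be nonempty, which is immediate from the definition of $\R$-morphism.

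For the inductive step, I would fix $k\in[m-1]$ and assume the invariant holds after iteration $k+1$ (or after initialization, if $k+1=m$). The key point is to check that the $k$-th iteration correctly updates every table entry $\textsc{OMEGA}[\S,S_0;\ldots]$ indexed by a $\red$-connected component of $H_k$. I would split into two cases. If the $\red$-connected component $\{T_1,\dots,T_p\}$ of $H_k$ does \emph{not} contain the newly formed vertex $T_0$, then it is also a $\red$-connected component of $H_{k+1}$ (contracting $T_p,T_{p+1}$ into $T_0$ cannot affect $\red$-components disjoint from $T_0$), and no line of the $k$-th iteration touches these entries, so the invariant is inherited verbatim from the inductive hypothesis. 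If the component \emph{does} contain $T_0$, then it is exactly the set $C=\{T_0,T_1,\dots,T_{p-1}\}$ identified in the algorithm, and the relevant entry $\textsc{OMEGA}[S_0,T_0;S_1,T_1;\dots;S_{p-1},T_{p-1}]$ gets updated by the innermost \textbf{for}/\textbf{if} block. Here I would invoke Theorem~\ref{thm:main cardinal theorem}: its right-hand side is
\[\sum_{\substack{S_p\uplus S_{p+1}=S_0\\ (\S,S_p,S_{p+1})\preceq_{\R}(\T,T_p,T_{p+1})}}\ \prod_{j=1}^{q}\Omega_W\big(\R^{(\S,S_p,S_{p+1})_{I_j}}_{(\T,T_p,T_{p+1})_{I_j}}\big),\]
which is precisely what the loop accumulates into $\textsc{OMEGA}[S_0,T_0;\S,\T]$ --- each iteration over a valid split $S_p\uplus S_{p+1}=S_0$ passing the feasibility test $\preceq_\R$ adds one product term, and the $j$-th factor $\textsc{OMEGA}[(\S,S_p,S_{p+1})_{I_j};(\T,T_p,T_{p+1})_{I_j}]$ equals $\Omega_W(\R^{(\S,S_p,S_{p+1})_{I_j}}_{(\T,T_p,T_{p+1})_{I_j}})$ by the inductive hypothesis, since each $C_j$ is a $\red$-connected component of $H_{k+1}$.

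There are two technical points to get right. First, I must verify that the components $C_1,\dots,C_q$ that the algorithm partitions $(C\setminus\{T_0\})\cup\{T_p,T_{p+1}\}$ into are genuinely $\red$-connected components of $H_{k+1}$, so that the inductive hypothesis applies to each factor --- this follows because the only structural change from $H_{k+1}$ to $H_k$ is the merge of $T_p$ and $T_{p+1}$, so splitting off $T_0$ and re-expanding into $T_p,T_{p+1}$ exactly recovers the $H_{k+1}$ picture on the vertex set involved, and $\red$-components not meeting $\{T_p,T_{p+1}\}$ are unchanged. Second, I should make sure the ordering conventions $I_j=\{I_j[1]<\dots<I_j[p_j]\}$ and the table-indexing-by-tuples are consistent between the algorithm, Theorem~\ref{thm:main cardinal theorem}, and the statement being proved, so that the factors are fed the right arguments. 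The main obstacle, then, is not any deep argument but rather the bookkeeping: carefully matching the algorithm's index sets $I_j$, the partition $C_1\uplus\dots\uplus C_q$, and the table lookups against the hypotheses of Theorem~\ref{thm:main cardinal theorem}, and confirming that entries outside the updated component are left untouched (and thus remain correct) while entries inside it receive exactly the sum-of-products prescribed by that theorem. Once these correspondences are nailed down, the induction closes and the invariant follows; the final \textbf{return} of $\textsc{OMEGA}[V_G,V_H]$ then yields $\Omega_W(\R^{V_G}_{V_H})=\Omega_W(\{f\colon G\xrightarrow{\R}H\})$ as the component twin-width $k=1$ case, which is how Theorem~\ref{thm:Solving Omega H-(R-MORPHISM)} will use this lemma.
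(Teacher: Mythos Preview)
Your proposal is correct and follows essentially the same approach as the paper: downward induction on $k$ with the base case $k=m$ handled by the initialization loop (singleton $\red$-components of $H_m$, the feasibility test distinguishing $\R^S_{\{a\}}=\{f^S_{\{a\}}\}$ from $\emptyset$, and the third pre-morphism axiom), and the inductive step handled by observing that only the $\red$-component containing the merged vertex $T_0$ changes and then invoking Theorem~\ref{thm:main cardinal theorem} together with the inductive hypothesis on the $H_{k+1}$-components $C_1,\dots,C_q$. Your explicit case split on whether the component contains $T_0$ and your remarks on the bookkeeping of the index sets $I_j$ are slightly more detailed than the paper's write-up, but the argument is the same.
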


\begin{proof}

We proceed by induction over $k=m$ down to $1$.

\underline{Initialisation:} For $k=m$, it comes the loop: ``for $S\subseteq V_G$'', noticing that the $\red$-connected components of $H_m$ are exactly its singletons of vertices, ie, the $\{a\}$ for $a\in V_H$, and noticing that, for all $S\subseteq V_G$, $\R^S_{\{a\}} = \{f^S_{\{a\}}\}$ if $\forall (u,v)\in S^2, (l_G(u,v),l_H(a,a))\in \R$, and $\R^S_{\{a\}}=\emptyset$ otherwise, recalling that $\Omega_W(\emptyset)=0_A$ (third axiom of pre-morphisms).

\underline{Hereditary:} Assume the lemma is true for $k+1$. Assume that the vertices merged in the contraction $H_{k+1}\rightarrow H_k$ are called $T_p$ and $T_{p+1}$ and that the merged vertex is called $T_0$. Notice that the only $\red$-connected component of $H_k$ that is not also a $\red$ connected component of $H_{k+1}$ is the one that contains $T_0$. Call it $C=\{T_0,T_1,\dots,T_{p-1}\}$. Let $C_1 \uplus \ldots\uplus C_q := \{T_1,\ldots,T_{p-1},T_p,T_{p+1}\} $ be the partitioning of $(C\setminus \{T_0\})\cup \{T_p,T_{p+1}\}$ into $\red$-connected component in $H_{k+1}$. For every $j\in [q]$, recalling that $C_j=\{ T_{I_j[1]},\dots,T_{I_j[p_j]}\}$ (by definition of $I_j$, and denoting $p_j:=|I_j|$), we have by the induction hypothesis: OMEGA$\begin{bmatrix} S_{I_j[1]} & T_{I_j[1]} \\  \vdots & \vdots  \\ S_{I_j[p_j]} & T_{I_j[p_j]}\end{bmatrix}$ contains $\Omega_W(\R^{\S_{I_j}}_{\T_{I_j}})$, where, we recall, $\R^{\S_{I_j}}_{\T_{I_j}}$ is a notation for $\R^{S_{I_j[1]},\dots,S_{I_j[p_j]}}_{T_{I_j[1]},\dots,T_{I_j[p_j]}}$. Then, we see that, at the end of the loop of index $k$ (and thus at the beginning of the loop of index $k-1$), ``for all $S_0,\dots,S_{p-1}$ subsets of $V_G$ pairwise disjointed'': OMEGA$\begin{bmatrix} S_0 & T_0 \\  S_1 & T_1 \\ \vdots & \vdots  \\ S_{p-1} & T_{p-1}\end{bmatrix}$ contains $\underset{\begin{matrix} S_p\uplus S_{p+1} = S_0 \\ (\S,S_p,S_{p+1})  \preceq_{\R}  (\T,T_p,T_{p+1}) \end{matrix}}{\sum}
    (\overset{q}{\underset{j=1}{\prod}} \Omega_W (\R^{(\S,S_p,S_{p+1})_{I_j}}_{(\T,T_p,T_{p+1})_{I_j}}))$ which, by Theorem \ref{thm:main cardinal theorem}, equals $\Omega_W(\R^{\S,S_0}_{\T,T_0})$. OMEGA$\begin{bmatrix} S_0 & T_0 \\  S_1 & T_1 \\ \vdots & \vdots  \\ S_{p-1} & T_{p-1}\end{bmatrix}$ contains $\Omega_W(\R^{\S,S_0}_{\T,T_0})$ at the end of the loop of index $k$. Recall that $\R^{\S,S_0}_{\T,T_0}$ is a notation for $\R^{S_1,\dots,S_{p-1},S_0}_{T_1,\dots,T_{p-1},T_0} = \R^{S_0,S_1,\dots,S_{p-1}}_{T_0,T_1,\dots,T_{p-1}}$. This concludes the proof.

\end{proof}

\finegrained*

\begin{proof}

The soundness of Algorithm \ref{algo:main algorithm} is a consequence of Lemma \ref{lemma:loop invariant}, noticing that the set of solutions of the instance $G$ of $H$-($\R$-MORPHISM) is exactly $\R^{V_G}_{V_H}$, and that $\{V_H\}$ is a connected component of $H_1$.

For the complexity, note that there are $(p+2)^{|V_G|}$ ways to choose $S_0,S_1,\dots,S_{p-1}$ and $S_p,S_{p+1}$ such that $S_0,S_1,\dots,S_{p-1}$ are pairwise disjointed subsets of $V_G$ and $(S_p,S_{p+1})$ partitions $S_0$, and that the maximum $p$ that will occur in the execution of Algorithm \ref{algo:main algorithm} is exactly $ctw(\mathbb{H})=ctww(H)$. Also, checking if $(S_1,\dots,S_{p-1},S_p,S_{p+1}) \preceq_{\R} (T_1,\dots,T_{p-1},T_p,T_{p+1})$ can be performed in $O(|V_G|^2)$ time.

\end{proof}

\section{Soundness of Algorithm \ref{algo:main algorithm parameterized}}
\label{sec:main algorithm parameterized}

\begin{algorithm}
\caption{Solving $\Omega$($H$-($\R$-MORPHISM)) ie. $\Omega$(BINARY-CSP($\Gamma$)) (parameterized version)}
\label{algo:main algorithm parameterized}
Create a tabular $\overline{\text{OMEGA}}$ filled with $0_A$

\For{$s\in V_G$, $t\in V_H$}
{
\If{$(l_G(s,s),l_H(t,t))\in\R$}
{$\overline{\text{OMEGA}}\begin{bmatrix} \{s\} & \{t\} \end{bmatrix}\gets \Omega_W(\{f^{\{s\}}_{\{t\}}\})$ (with $f^{\{s\}}_{\{a\}} : \begin{matrix} \{s\} & \mapsto & \{a\} \\ s &\mapsto & a \end{matrix}$)}
}
\For{$k=n-1$ downto $1$}
{
$(S_p,S_{p+1})\leftarrow$ contracted pair in the contraction $G_{k+1} \rightarrow G_k$ of $\mathbb{G}$
$S_0 \leftarrow$ contraction of $S_p$ and $S_{p+1}$ in $G_k$
$C=\{S_0,S_1,\ldots S_{p-1}\}$ the $\red$-connected component of $G_k$ containing $S_0$
$C_1 \uplus \ldots\uplus C_q = \{S_1,\ldots,S_{p-1},S_p,S_{p+1}\} $ be the partitionning of $(C\setminus \{S_0\})\cup \{S_p,S_{p+1}\}$ into $\red$-connected component in $G_{k+1}$

\For{$j=1$ to $q$}
{
Define $I_j\subseteq [p]$ such that $C_j=\{S_i, i\in I_j \}$

$I_j=:\{ I_j[1],\dots,I_j[p_j] \}$ with $p_j=|I_j|$
}
\For{$T_0,T_1,\ldots,T_{p-1}\subseteq V_H$ with $\emptyset \notin \{T_0,T_1,\ldots,T_{p-1}\} $}
{
\For{$T_p\cup T_{p+1}=T_0$ with $\emptyset\notin \{T_p,T_{p+1}\}$}
{
\If{$(T_1,\dots,T_{p-1},T_p,T_{p+1}) \succeq_{\R} (S_1,\dots,S_{p-1},S_p,S_{p+1})$}
{
$\overline{\text{OMEGA}}\begin{bmatrix} S_0 & T_0 \\  S_1 & T_1 \\ \vdots & \vdots  \\ S_{p-1} & T_{p-1}\end{bmatrix}+=$ $\overline{\text{OMEGA}}\begin{bmatrix} S_{I_1[1]} & T_{I_1[1]} \\  \vdots & \vdots  \\ S_{I_1[p_1]} & T_{I_1[p_1]}\end{bmatrix}\times\dots\times$ $\overline{\text{OMEGA}}\begin{bmatrix} S_{I_q[1]} & T_{I_q[1]} \\  \vdots & \vdots  \\ S_{I_q[p_q]} & T_{I_q[p_q]}\end{bmatrix}$
}
}
}
\Return{$\sum\limits_{T\subseteq V_H} \overline{\text{{\em OMEGA}}}\begin{bmatrix} V_G & T \end{bmatrix}$}
}
\end{algorithm}

\begin{definition}

Let $G$ and $H$ two $\red$-free edge-labelled graphs on $n$ and $m$ vertices respectively. Let $\R\subseteq X_G \times X_H$.
Let $\S=(S_1,\dots,S_p)$ be a tuple of $p$ pairwise disjointed subsets of $V_G$ (in particular, $S_1,\dots,S_p$ can be different vertices of a contraction $G'$ of $G$) and $S=S_1\uplus\dots\uplus S_p$.
Let $\T=(T_1,\dots,T_p)$ be a tuple of $p$ non-empty subsets of $V_G$ and $T=T_1\uplus\dots\uplus T_p$.

We then define the set $\overline{\R}^{\S}_{\T}$ by \[\overline{\R}^{\S}_{\T} = \{ f:G[S]\underset{\R}{\rightarrow} H[T] \mid f(S_1)= T_1,\dots,f(S_p)= T_p \}.\]

\end{definition}

\begin{lemma}\label{lemma:empty set,join lemma parameterized}

Let $G$ and $H$ two $\red$-free edge-labelled graphs, on $n$ and $m$ vertices respectively, $G'$ be a contraction of $G$, $\R\subseteq X_G\times X_H$, $C_1,\dots,C_q$ be $q$ different $\red$-connected components of $G'$ ($q\geq 1$), $\{S_1,\dots,S_p\} = C_1\uplus\dots\uplus C_q$ let $\S=(S_1,\dots,S_p)$, $\T=(T_1,\dots,T_p)$ be a tuple of $p$ non-empty subsets of $V_H$. Let for all $j\in q$, $I_j\subseteq [p]$ be such that $C_j=\{S_i,i\in I_j\}$. Then

\[\overline{\R}^{\S}_{\T} = \left\{
    \begin{array}{ll}
        \emptyset  & \text{ if } \T\cancel{\succeq}_{\R}\S
        \\
        \overline{\R}^{\S_{I_1}}_{\T_{I_1}}\Join \dots\Join \overline{\R}^{\S_{I_q}}_{\T_{I_q}} & \text{ if } \T\succeq_{\R}\S
    \end{array}\right.\]

\end{lemma}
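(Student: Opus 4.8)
The plan is to follow the proof of Lemma \ref{lemma:empty set,join lemma} almost verbatim, but with the roles of the source and the target graph interchanged. In that lemma the relevant contraction lived on the target $H$, so Definition \ref{def:Meaning Contraction} guaranteed that $l_H$ is constant on each product $T_i\times T_{i'}$ whose label is not $\red$; here the contraction $G'$ is on the source, so the same principle instead says that $l_G$ is constant on each product $S_i\times S_{i'}$ with $l_{G'}(S_i,S_{i'})\neq\red$. This swap is precisely why the barred set $\overline{\R}^{\S}_{\T}$ demands $f(S_i)=T_i$ (equality, not mere inclusion) and why ``$\T$ makes $\S$ feasible'' (Definition \ref{def:making feasibility}) quantifies over \emph{all} $(a,b)\in T_i\times T_{i'}$: both features are needed to run the contradiction below.

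First I would treat the case $\T\cancel{\succeq}_{\R}\S$. Suppose for contradiction that $f\in\overline{\R}^{\S}_{\T}$. By Definition \ref{def:making feasibility} there is a pair $(i,i')\in[p]^2$ with $l_{G'}(S_i,S_{i'})\neq\red$ and some $(a,b)\in T_i\times T_{i'}$ with $(l_{G'}(S_i,S_{i'}),l_H(a,b))\notin\R$. Since $f(S_i)=T_i$ and $f(S_{i'})=T_{i'}$, pick $u\in S_i$ and $v\in S_{i'}$ with $f(u)=a$ and $f(v)=b$; because $l_{G'}(S_i,S_{i'})\neq\red$, Definition \ref{def:Meaning Contraction} forces $l_G(u,v)=l_{G'}(S_i,S_{i'})$, so $(l_G(u,v),l_H(f(u),f(v)))=(l_{G'}(S_i,S_{i'}),l_H(a,b))\notin\R$, contradicting that $f$ is an $\R$-morphism. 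Hence $\overline{\R}^{\S}_{\T}=\emptyset$. Next, in the case $\T\succeq_{\R}\S$, I would prove the two inclusions of $\overline{\R}^{\S}_{\T}=\overline{\R}^{\S_{I_1}}_{\T_{I_1}}\Join\dots\Join\overline{\R}^{\S_{I_q}}_{\T_{I_q}}$. For $\subseteq$: given $f\in\overline{\R}^{\S}_{\T}$, each restriction $f|_{\cup\S_{I_j}}$ is again an $\R$-morphism and still satisfies $f(S_i)=T_i$ for $i\in I_j$, hence lies in $\overline{\R}^{\S_{I_j}}_{\T_{I_j}}$; since $(I_1,\dots,I_q)$ partitions $[p]$, $f$ is the join of these restrictions. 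For $\supseteq$: given $(f_1,\dots,f_q)$ in the product and $f=f_1\Join\dots\Join f_q$, the equalities $f(S_i)=T_i$ are immediate, and to see that $f$ is an $\R$-morphism take $(u,v)\in S^2$ with $S=S_1\uplus\dots\uplus S_p$, say $u\in S_i$ and $v\in S_{i'}$; if $i,i'$ lie in the same $I_j$ then $(f(u),f(v))=(f_j(u),f_j(v))$ and $f_j$ being an $\R$-morphism gives the claim, and otherwise $S_i$ and $S_{i'}$ lie in distinct $\red$-connected components of $G'$, so $l_{G'}(S_i,S_{i'})\neq\red$, whence $l_G(u,v)=l_{G'}(S_i,S_{i'})$ by Definition \ref{def:Meaning Contraction}; since $f(u)\in T_i$, $f(v)\in T_{i'}$ and $\T\succeq_{\R}\S$, Definition \ref{def:making feasibility} then yields $(l_G(u,v),l_H(f(u),f(v)))\in\R$. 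Thus $f\in\overline{\R}^{\S}_{\T}$.

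I expect the only delicate point to be keeping the quantifiers straight: because the label-constancy now sits on the $G$-side, the ``bad pair'' witnessing infeasibility must be located through a \emph{surjective} image, which is exactly the extra strength of the barred sets over the plain $\R^{\S}_{\T}$ and the reason the argument cannot be copied symbol-for-symbol from Lemma \ref{lemma:empty set,join lemma}. Everything else is routine bookkeeping; in particular one should note, as in the remark following Definition \ref{def:making feasibility}, that the truth of ``$\T\succeq_{\R}\S$'' and the identity being proved do not depend on which contraction $G'$ realizes the vertices $S_1,\dots,S_p$, so the statement is well posed.
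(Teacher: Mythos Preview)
Your proof is correct and follows essentially the same approach as the paper: the same contradiction via surjectivity of $f(S_i)=T_i$ in the infeasible case, and the same two-inclusion argument with the identical case split (same $I_j$ versus different $\red$-components) in the feasible case. Your added commentary on why the equality $f(S_i)=T_i$ (rather than inclusion) is exactly what makes the contradiction go through is a nice clarification that the paper leaves implicit.
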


\begin{proof}

First, assume that $\T\cancel{\succeq}_{\R}\S$:

Assume by contradiction that there exists $f\in \overline{\R}^{\S}_{\T}$, ie. $f$ is an $\R$-morphism from the edge-labeled graph $G[S_1\uplus\dots\uplus S_p]$ to the edge-labeled graph $H$ and $\forall i\in [p], f(S_i) = T_i$. Since $\T$ does not make $\S$ $\R$-feasible, there exists a pair $(i,i')\in [p]^2$ such that:

\begin{itemize}
    \item $l_{G'}(S_i,S_{i'})\neq \red$,
    \item there exists $a\in T_i$ and $b\in T_{i'}$ with $(l_{G'}(S_i,S_{i'}),l_H(a,b))\notin \R$.
\end{itemize}

By Definition \ref{def:Meaning Contraction}, $\forall (u,v)\in S_i\times S_{i'}, l_G(u,v)= l_{G'}(S_i,S_{i'})$. Notice that, by definition of $f\in\overline{\R}^{\S}_{\T}$, since $f(S_i)=T_i$ and $f(S_{i'})=T_{i'}$, there exists $(u,v)\in S_i\times S_{i'}$ such that $(f(u),f(v))=(a,b)$. Since $f$ is an $\R$-morphism, we have $(l_G(u,v),l_H( f(u),f(v) ) ) = (l_{G'}(S_i,S_{i'}) , l_H(a,b) ) \in \R$, which contradicts the definition of $(a,b)$. We have a contradiction, which proves that $\overline{\R}^{\S}_{\T} = \emptyset$.

Second, assume that $\T\succeq_{\R}\S$:

Take $f\in \overline{\R}^{\S}_{\T}$ and $j\in [q]$. Since restrictions of $\R$-morphisms are $\R$-morphisms, $f|_{\cup\S_{I_j}}$ is an $\R$-morphism. Moreover, since $f\in \overline{\R}^{\S}_{\T}$, for all $j\in I_j$, $f(S_j) = T_j$. We deduce that $f|_{\cup\S_{I_j}}\in \overline{\R}^{\S_{I_j}}_{\T_{I_j}}$, which proves that $\overline{\R}^{\S}_{\T} \subseteq \overline{\R}^{\S_{I_1}}_{\T_{I_1}}\Join \dots\Join \overline{\R}^{\S_{I_q}}_{\T_{I_q}}$.

We now prove the reverse. Let $(f_1,\dots,f_q)\in \overline{\R}^{\S_{I_1}}_{\T_{I_1}}\times \dots\times \overline{\R}^{\S_{I_q}}_{\T_{I_q}}$, and let $f=f_1\Join\dots\Join f_q$. We will prove that $f\in \overline{\R}^{\S}_{\T}$. Clearly, by definition of the $\overline{\R}^{\S_{I_j}}_{\T_{I_j}}$ for $j\in [q]$, we have, for all $i\in [p], f(S_i) = T_i$ (knowing that $(I_1,\dots,I_q)$ is a partition of $[p]$). There only remains to prove that $f$ is an $\R$-morphism.

Let $S=S_1\uplus\dots\uplus S_p$, and $(u,v)\in S^2$. We will prove that $(l_G(u,v),l_H(f(u),f(v)))\in \R$. Let $(i,i')\in [p]^2$ be such that $u\in S_i$ and $v\in S_{i'}$.

\begin{enumerate}

\item If there exists $j\in [q]$ such that $(i,i')\in (I_j)^2$ (i.e.\ if $S_i$ and $S_{i'}$ belong to the same $\red$-connected component $C_j$), then, $(u,v)\in (S_{I_j})^2$. It follows by definition of $f$ that $(f(u),f(v))=(f_j(u),f_j(v))$, and then $(l_G(u,v),l_H(f(u),f(v)))\in\R$ since $f_j$ is an $\R$-morphism.

\item Else, by definitions of $I_j$ and $C_j$ for $j\in [q]$, $S_i$ and $S_{i'}$ are not $\red$-connected in $H'$. We deduce that, in particular, $l_{G'}(S_i,S_{i'})\neq \red$. Using Definition \ref{def:Meaning Contraction}: $\forall (u,v)\in S_i\times S_{i'}, l_G(u,v)=l_{G'}(S_i,S_{i'})$. Then, $(f(u),f(v))=(f_i(u),f_{i'}(v))\in T_i\times T_{i'}$, thus using the hypothesis of $\T\succeq_{\R}\S$, we have $(l_G(u,v),l_H(f(u),f(v))) = (l_{G'}(S_i,S_{i'}),l_H(f(u),f(v)))\in \R$.

\end{enumerate}

We have proven that $(l_G(u,v),l_H(f(u),f(v)))\in\R$. Hence, $f$ is indeed an $\R$-morphism, which concludes the proof.

\end{proof}

\begin{lemma}\label{lemma:split parameterized}

Let $G$ and $H$ be two $\red$-free edge-labelled graphs, $\R\subseteq X_G\times X_H$, $T_0,T_1,\dots,T_{p-1}$ be $p$ non-empty subsets of $V_H$, $\T=(T_1,\dots,T_{p-1})$, $S_0,S_1,\dots,S_{p-1}$ be $p$ pairwise disjointed subsets of $V_G$, $\S=(S_1,\dots,S_{p-1})$, and $S_p\uplus S_{p+1}=S_0$ be a partition of $S_0$. Then

\[\overline{\R}^{\S,S_0}_{\T,T_0}
    =
\underset{\begin{matrix} T_p\cup T_{p+1} = T_0 \\ (\T,T_p,T_{p+1}) \succeq_{\R} (\S,S_p,S_{p+1}) \end{matrix}}{\uplus}
\overline{\R}^{\S,S_p,S_{p+1}}_{\T,T_p,T_{p+1}}
\]

\end{lemma}

\begin{proof}

Partitioning the set
$\overline{\R}^{\S,S_0}_{\T,T_0}$ in equivalence classes with respect to the equivalence relation $\sim$ defined by:

\hfill

\[\forall (f,f')\in
(\R^{\S,S_0}_{\T,T_0})^2,
f\sim f' \iff (f(S_p),f(S_{p+1})) = (f'(S_p),f'(S_{p+1}))\]

we obtain:

\[\overline{\R}^{\S,S_0}_{\T,T_0}
    =
\underset{T_p\cup T_{p+1} = T_0}{\uplus}
\overline{\R}^{\S,S_p,S_{p+1}}_{\T,T_p,T_{p+1}}
\]

and then, the conclusion follows from Lemma \ref{lemma:empty set,join lemma parameterized}.

\end{proof}

\begin{theorem}\label{thm:main cardinal theorem parameterized}

Let $\Omega$ be a semiring pre-morphism and $W$ a weight matrix. Let $G$ and $H$ be two $\red$-free edge-labelled graphs, $(G_n,\dots,G_1)$ a contraction sequence of $G$ (with $n:=|V_G|$), $k\in [n-1]$, $C_1,\dots,C_q$ be $q$ ($q\geq 1$) different $\red$-connected components of $G_{k+1}$ ($q\geq 1$), $\{S_1,\dots,S_{p-1},S_p,S_{p+1}\} = C_1\uplus\dots\uplus C_q$, $\S=(S_1,\dots,S_{p-1})$, and $T_0,T_1,\dots,T_{p-1}$ be $p$ non-empty subsets of $V_H$, and let $\T=(T_1,\dots,T_{p-1})$. Denote for all $j\in q$, $I_j\subseteq [p+1]$ such that $C_j=\{S_i,i\in I_j\}$. Assume that $G_k$ is obtained from $G_{k+1}$ by merging the two different vertices $S_p$ and $S_{p+1}$ to the vertex $S_0=S_p\uplus S_{p+1}$. Then, we have:

\[\Omega_W(\overline{\R}^{\S,S_0}_{\T,T_0})
    =
\underset{\begin{matrix} T_p\cup T_{p+1} = T_0 \\ (\T,T_p,T_{p+1}) \succeq_{\R} (\S,S_p,S_{p+1}) \end{matrix}}{\sum} \underset{j=1}{\overset{q}{\prod}}
(\Omega_W(\overline{\R}^{(\S,S_p,S_{p+1})_{I_j}}_{(\T,T_p,T_{p+1})_{I_j}})\]

(where $\sum$ and $\prod$ refers to the sum and product of the semiring).

\end{theorem}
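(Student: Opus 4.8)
The plan is to mirror the proof of Theorem~\ref{thm:main cardinal theorem}, the fine-grained analogue, replacing each ingredient by its ``preimage-side'' counterpart. The first step is to invoke Lemma~\ref{lemma:split parameterized} to write
\[
\overline{\R}^{\S,S_0}_{\T,T_0}
=
\underset{\begin{matrix} T_p\cup T_{p+1} = T_0 \\ (\T,T_p,T_{p+1}) \succeq_{\R} (\S,S_p,S_{p+1}) \end{matrix}}{\uplus}
\overline{\R}^{\S,S_p,S_{p+1}}_{\T,T_p,T_{p+1}} .
\]
Every set on the right-hand side consists of functions with domain $S_1\uplus\dots\uplus S_{p-1}\uplus S_0$ and with image contained in $T_1\cup\dots\cup T_{p-1}\cup T_0$ (using $T_p\cup T_{p+1}=T_0$), so they all lie in a common $\P(T^S)$, and they are pairwise disjoint, being indexed by the value $(f(S_p),f(S_{p+1}))$. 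Hence the first axiom of pre-morphisms turns $\Omega_W$ of this disjoint union into the sum $\sum \Omega_W(\overline{\R}^{\S,S_p,S_{p+1}}_{\T,T_p,T_{p+1}})$ over the same index set.

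Next, fix one admissible choice of $T_p\cup T_{p+1}=T_0$ (so that $(\T,T_p,T_{p+1}) \succeq_{\R} (\S,S_p,S_{p+1})$) and apply Lemma~\ref{lemma:empty set,join lemma parameterized} with $G'=G_{k+1}$ and the $\red$-connected components $C_1,\dots,C_q$: since the feasibility condition holds, the lemma provides the set identity
\[
\overline{\R}^{\S,S_p,S_{p+1}}_{\T,T_p,T_{p+1}}
=
\overline{\R}^{(\S,S_p,S_{p+1})_{I_1}}_{(\T,T_p,T_{p+1})_{I_1}}\Join\dots\Join\overline{\R}^{(\S,S_p,S_{p+1})_{I_q}}_{(\T,T_p,T_{p+1})_{I_q}} .
\]
Applying the second axiom of pre-morphisms $q-1$ times converts this join into the product $\prod_{j=1}^{q}\Omega_W(\overline{\R}^{(\S,S_p,S_{p+1})_{I_j}}_{(\T,T_p,T_{p+1})_{I_j}})$; substituting back into the sum yields the claimed formula. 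The non-feasible choices of $T_p\cup T_{p+1}=T_0$ contribute nothing, since for them $\overline{\R}^{\S,S_p,S_{p+1}}_{\T,T_p,T_{p+1}}=\emptyset$ (again by Lemma~\ref{lemma:empty set,join lemma parameterized}) and $\Omega_W(\emptyset)=0_A$ by the third axiom, which is why the sum ranges only over feasible partitions; and $q\ge 1$ ensures the product is never empty.

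The delicate point is the iterated use of the second axiom. In the fine-grained setting the blocks $\T_{I_j}$ are unions of vertices of a contraction of $H$, hence automatically pairwise disjoint, and the plain second axiom suffices. Here, by contrast, the blocks $(\T,T_p,T_{p+1})_{I_j}$ are \emph{images} (under a common $\R$-morphism) of the pairwise disjoint sets $(\S,S_p,S_{p+1})_{I_j}$, and images of disjoint sets need not be disjoint; only the \emph{domains} $(\S,S_p,S_{p+1})_{I_j}$ are certainly pairwise disjoint, being distinct vertices of a contraction of $G$ with $I_1,\dots,I_q$ partitioning the index set. So one must apply the second axiom in its strong form, i.e.\ without the disjointness requirement on the codomains, which is exactly why $\Omega$ is assumed strong in Theorem~\ref{thm:Solving Omega H-(R-MORPHISM) parameterized}. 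Checking that strongness is the only extra hypothesis needed, and that no unnoticed disjointness is silently being used, is the crux; the rest is the same bookkeeping as in the proof of Theorem~\ref{thm:main cardinal theorem}.
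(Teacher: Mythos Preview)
Your proof is correct and follows essentially the same approach as the paper: apply Lemma~\ref{lemma:split parameterized}, use the first axiom to turn the disjoint union into a sum, then use Lemma~\ref{lemma:empty set,join lemma parameterized} together with the second axiom to turn each term into a product. Your final paragraph is a genuine addition: the paper's own proof applies the second axiom without comment, but you correctly point out that the codomain blocks $\cup_{i\in I_j} T_i$ need not be pairwise disjoint (only the domain blocks are), so the \emph{strong} form of the second axiom is required---a hypothesis that is indeed present in Theorem~\ref{thm:Solving Omega H-(R-MORPHISM) parameterized} but is not stated in Theorem~\ref{thm:main cardinal theorem parameterized} itself.
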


\begin{proof}

Using Lemma \ref{lemma:split parameterized}:

$\Omega_W(\overline{\R}^{\S,S_0}_{\T,T_0})
    =
\Omega_W(\underset{\begin{matrix} T_p\cup T_{p+1} = T_0 \\ (\T,T_p,T_{p+1}) \succeq_{\R} (\S,S_p,S_{p+1}) \end{matrix}}{\uplus}
\overline{\R}^{\S,S_p,S_{p+1}}_{\T,T_p,T_{p+1}}\,\,\,\,\,)
$

\hfill

using the first axiom of pre-morphisms:

\hfill

$\Omega_W(\overline{\R}^{\S,S_0}_{\T,T_0})
    =
\underset{\begin{matrix} T_p\cup T_{p+1} = T_0 \\ (\T,T_p,T_{p+1}) \succeq_{\R} (\S,S_p,S_{p+1}) \end{matrix}}{\sum} \Omega_W(\overline{\R}^{\S,S_p,S_{p+1}}_{\T,T_p,T_{p+1}})
$

\hfill

and we can iteratively apply the second axiom of pre-morphisms with $\F:=\overline{\R}^{\S,S_p,S_{p+1}}_{\T,T_p,T_{p+1}}$ and $\forall j\in [q], \F_j:=\overline{\R}^{(\S,S_p,S_{p+1})_{I_j}}_{(\T,T_p,T_{p+1})_{I_j}}$, thanks to Lemma \ref{lemma:empty set,join lemma} (since $(\T,T_p,T_{p+1})  \succeq_{\R}  (\S,S_p,S_{p+1})$) in order to write:

\hfill

$\Omega_W(\overline{\R}^{\S,S_0}_{\T,T_0})
    =
\underset{\begin{matrix} T_p\cup T_{p+1} = T_0 \\ (\T,T_p,T_{p+1}) \succeq_{\R} (\S,S_p,S_{p+1}) \end{matrix}}{\sum} \underset{j=1}{\overset{q}{\prod}}
\Omega_W(\overline{\R}^{(\S,S_p,S_{p+1})_{I_j}}_{(\T,T_p,T_{p+1})_{I_j}})$,

\hfill

which was what we wanted to prove.

\end{proof}

\hfill

\begin{lemma}\label{lemma:loop invariant parameterized}

Let $(A,+,\times,0_A,1_A)$ be a semiring, $B$ a set, and $\Omega$ a $A$-pre-morphism with weights in $B$ (with $A$ and $B$ sets), $W\in B^{V_G\times V_H}$, $G$ and $H$ be two $\red$-free edge labelled graphs, $(G_n,\dots,G_1)$ be a contraction sequence of $H$.
    
Then, for all $k\in [m]$, for all $\red$-connected component $\{S_1,\dots,S_p\}$ of $G_k$, and for all $T_1,\dots,T_p$ non-empty subsets of $V_H$, after the iteration of index $k$ (and before the iteration of index $k-1$) of the loop ``for $k=m-1$ downto 1'' of Algorithm \ref{algo:main algorithm parameterized}, $\overline{\text{OMEGA}}\begin{bmatrix} S_1 & T_1 \\  \vdots & \vdots  \\ S_p & T_p\end{bmatrix}$ contains $\Omega_W(\overline{\R}^{S_1,\ldots,S_p}_{T_1,\ldots,T_p})$.

\end{lemma}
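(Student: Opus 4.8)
The plan is to establish this loop invariant by downward induction on $k$, from $k=n$ (where $n=|V_G|$; the statement's ``$m$'', ``contraction sequence of $H$'' and ``for $k=m-1$ downto $1$'' should read ``$n$'', ``contraction sequence of $G$'' and ``for $k=n-1$ downto $1$'') down to $k=1$, mirroring the proof of Lemma~\ref{lemma:loop invariant} but with the roles of $G$ and $H$, and of Definitions~\ref{def:feasibility} and~\ref{def:making feasibility}, interchanged. All of the analytic content will be delegated to Theorem~\ref{thm:main cardinal theorem parameterized}; the remaining work is to verify that Algorithm~\ref{algo:main algorithm parameterized} faithfully implements the recurrence in that theorem.

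For the base case $k=n$ (the state reached after the initialisation loop and before the main loop), I would note that the $\red$-connected components of $G_n=G$ are exactly the singletons $\{s\}$, $s\in V_G$. For a non-empty $T\subseteq V_H$, the set $\overline{\R}^{\{s\}}_{T}$ consists of the $\R$-morphisms $f\colon G[\{s\}]\rightarrow H[T]$ with $f(\{s\})=T$; since the domain is a singleton, this forces $T=\{t\}$ for some $t\in V_H$, and then $\overline{\R}^{\{s\}}_{\{t\}}=\{f^{\{s\}}_{\{t\}}\}$ if $(l_G(s,s),l_H(t,t))\in\R$ and $\overline{\R}^{\{s\}}_{\{t\}}=\emptyset$ otherwise. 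This is precisely what the initialisation loop writes, the third axiom of pre-morphisms ($\Omega_W(\emptyset)=0_A$) covering both the infeasible singleton case and the non-singleton case (where the corresponding entry is never touched and remains $0_A$).

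For the hereditary step, assume the invariant holds after iteration $k+1$. Let $S_p,S_{p+1}$ be the pair contracted in $G_{k+1}\rightarrow G_k$, with merged vertex $S_0=S_p\uplus S_{p+1}$, and recall that the only $\red$-connected component of $G_k$ that is not already a $\red$-connected component of $G_{k+1}$ is the one containing $S_0$, say $C=\{S_0,S_1,\dots,S_{p-1}\}$; write $C_1\uplus\dots\uplus C_q$ for the decomposition of $(C\setminus\{S_0\})\cup\{S_p,S_{p+1}\}$ into $\red$-connected components of $G_{k+1}$ and $I_j\subseteq[p+1]$ for the index set of $C_j$. By the induction hypothesis, after iteration $k+1$ the entry indexed by $(S_i,T_i)_{i\in I_j}$ holds $\Omega_W(\overline{\R}^{\S_{I_j}}_{\T_{I_j}})$ for every family $(T_i)_{i\in I_j}$ of non-empty subsets of $V_H$. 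Consequently, after iteration $k$, the nested loops over non-empty $T_0,\dots,T_{p-1}$ and over covers $T_p\cup T_{p+1}=T_0$ (with both parts non-empty), guarded by the test $(\T,T_p,T_{p+1})\succeq_{\R}(\S,S_p,S_{p+1})$, set
\[
\overline{\text{OMEGA}}\begin{bmatrix} S_0 & T_0 \\ S_1 & T_1 \\ \vdots & \vdots \\ S_{p-1} & T_{p-1}\end{bmatrix}
=\sum_{\substack{T_p\cup T_{p+1}=T_0\\ (\T,T_p,T_{p+1})\succeq_{\R}(\S,S_p,S_{p+1})}}\ \prod_{j=1}^{q}\Omega_W\!\left(\overline{\R}^{(\S,S_p,S_{p+1})_{I_j}}_{(\T,T_p,T_{p+1})_{I_j}}\right),
\]
which by Theorem~\ref{thm:main cardinal theorem parameterized} equals $\Omega_W(\overline{\R}^{\S,S_0}_{\T,T_0})=\Omega_W(\overline{\R}^{S_0,S_1,\dots,S_{p-1}}_{T_0,T_1,\dots,T_{p-1}})$. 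Since iteration $k$ writes only to the entries indexed by this new component and leaves all entries indexed by the components of $G_k$ that were already components of $G_{k+1}$ untouched, the invariant holds after iteration $k$, which completes the induction; the lemma then follows.

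The only genuinely delicate point is bookkeeping rather than analysis: one must check that iteration $k$ never overwrites an already-correct entry with a stale value, and that the non-emptiness conventions in the definition of $\overline{\R}^{\S}_{\T}$ and in the two inner loops of Algorithm~\ref{algo:main algorithm parameterized} ($\emptyset\notin\{T_0,\dots,T_{p-1}\}$, $\emptyset\notin\{T_p,T_{p+1}\}$, and that $T_p\cup T_{p+1}$ is a cover, not necessarily a disjoint union) match exactly the hypotheses of Lemma~\ref{lemma:split parameterized} and Theorem~\ref{thm:main cardinal theorem parameterized}. Everything substantive is already packaged there (resting in turn on Lemmas~\ref{lemma:split parameterized} and~\ref{lemma:empty set,join lemma parameterized} and on the first and second axioms of pre-morphisms), so the proof reduces to this essentially syntactic verification.
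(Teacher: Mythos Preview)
Your proposal is correct and follows essentially the same approach as the paper's own proof: a downward induction on $k$ with the base case handled by the initialisation loop (singletons of $G_n$) and the hereditary step discharged directly by Theorem~\ref{thm:main cardinal theorem parameterized}. Your additional remarks on the typos in the statement and on the bookkeeping (non-overwriting of prior entries, matching non-emptiness and cover conventions) are accurate refinements that the paper's proof leaves implicit.
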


\begin{proof}

We proceed by induction over $k=m$ downto $1$.

\underline{Initialisation:} For $k=m$, it comes the loop: ``for $T\subseteq V_G$'', noticing that the $\red$-connected components of $G_n$ are exactly its singletons of vertices, ie, the $\{s\}$ for $s\in V_G$, and noticing that, for all $T\subseteq V_H$, $\overline{R}^{\{s\}}_{T} = \{f^{\{s\}}_{\{a\}}\}$ if $T$ is a singleton containing $T=:\{a\}$ such that $\forall (u,v)\in S^2, (l_G(s,s),l_H(a,a))\in \R$, and $\overline{\R}^{\{s\}}_{T}=\emptyset$ otherwise, recalling that $\Omega_W(\emptyset)=0_A$ (third axiom of pre-morphisms).

\underline{Hereditary:} Assume the lemma is true for $k+1$. Assume that the vertices merged in the contraction $G_{k+1}\rightarrow G_k$ are called $S_p$ and $S_{p+1}$ and that the merged vertex is called $S_0$. Notice that the only $\red$-connected component of $G_k$ that is not also a $\red$-connected component of $G_{k+1}$ is the one that contains $S_0$. Call it $C=\{S_0,S_1,\dots,S_{p-1}\}$. Let $C_1 \uplus \ldots\uplus C_q = \{S_1,\ldots,S_{p-1},S_p,S_{p+1}\} $ be the partitionning of $(C\setminus \{S_0\})\cup \{S_p,S_{p+1}\}$ into $\red$-connected component in $G_{k+1}$. For every $j\in [q]$, recalling that $C_j=\{ S_{I_j[1]},\dots,S_{I_j[p_j]}\}$ (by definition of $I_j$, and denoting $p_j:=|I_j|$), we have by induction hypothesis: $\overline{\text{OMEGA}}\begin{bmatrix} S_{I_j[1]} & T_{I_j[1]} \\  \vdots & \vdots  \\ S_{I_j[p_j]} & T_{I_j[p_j]}\end{bmatrix}$ contains $\Omega_W(\overline{\R}^{\S_{I_j}}_{\T_{I_j}})$, where, we recall, $\overline{\R}^{\S_{I_j}}_{\T_{I_j}}$ is a notation for $\overline{\R}^{S_{I_j[1]},\dots,S_{I_j[p_j]}}_{T_{I_j[1]},\dots,T_{I_j[p_j]}}$. Then, we see that, at the end of the loop of index $k$ (and thus at the beginning of the loop of index $k-1$): ``for all $T_0,\dots,T_{p-1}$ subsets of $V_H$'': $\overline{\text{OMEGA}}\begin{bmatrix} S_0 & T_0 \\  S_1 & T_1 \\ \vdots & \vdots  \\ S_{p-1} & T_{p-1}\end{bmatrix}$ contains $\underset{\begin{matrix} T_p\cup T_{p+1} = T_0 \\ (\T,T_p,T_{p+1})  \succeq_{\R}  (\S,S_p,S_{p+1}) \end{matrix}}{\sum}
    (\overset{q}{\underset{j=1}{\prod}} \Omega_W (\overline{\R}^{(\S,S_p,S_{p+1})_{I_j}}_{(\T,T_p,T_{p+1})_{I_j}}))$ which, by Theorem \ref{thm:main cardinal theorem parameterized}, equals $\Omega_W(\overline{\R}^{\S,S_0}_{\T,T_0})$. $\overline{\text{OMEGA}}\begin{bmatrix} S_0 & T_0 \\  S_1 & T_1 \\ \vdots & \vdots  \\ S_{p-1} & T_{p-1}\end{bmatrix}$ contains $\Omega_W(\overline{\R}^{\S,S_0}_{\T,T_0})$ at the end of the loop of index $k$. Recall that $\overline{\R}^{\S,S_0}_{\T,T_0}$ is a notation for $\overline{\R}^{S_1,\dots,S_{p-1},S_0}_{T_1,\dots,T_{p-1},T_0} = \overline{\R}^{S_0,S_1,\dots,S_{p-1}}_{T_0,T_1,\dots,T_{p-1}}$. This concludes the proof.

\end{proof}

\parametrized*

\begin{proof}

The complexity comes from the fact that there are $(2^{|V_H|}-1)^{p+1}$ ways to choose $(T_0,T_1\dots,T_{p-1})$ and $(T_p,T_{p+1})$ such that $(T_0,T_1\dots,T_{p-1})$ are non-empty subsets of $V_H$ and $(T_p,T_{p+1})$ are non-empty subsets of $V_H$ with $T_p\cup T_{p+1} = T_0$.
Note that, using the axiom relative to $\uplus$ and $+$ and the hypothesis that $\Omega$ is corestriction independent, it follows that for all $\F\in T^S$ with $\forall f\in \F, f(S)\subseteq T'$,  denoting $\F|^{T'} = \{f|^{T'} , f\in\F \}$, we have $\Omega_W(\F)=\Omega_W(\F|^{T'})$. 

The soundness of Algorithm \ref{algo:main algorithm parameterized} follows from the observation that, partitioning the set of solution SOL of the instance $G$ of $H$-($\R$-MORPHISM):

\[\text{SOL}=\underset{T\subseteq V_H}{\uplus} \{ f\in \R^{V_G}_{V_H} \mid f(V_G)=T \}\]

which leads to

\[\Omega_W(\text{SOL})= \sum\limits_{T\subseteq V_H} \Omega_W(\{ f\in \R^{V_G}_{V_H} \mid f(V_G)=T \})\]

Using the above remark:

\[\Omega_W(\text{SOL})= \sum\limits_{T\subseteq V_H} \Omega_W(\{ f\in \R^{V_G}_{V_H} \mid f(V_G)=T \}|^{T})\]

from where we deduce:

\[\Omega_W(\text{SOL})= \sum\limits_{T\subseteq V_H} \Omega_W(\overline{\R}^{V_G}_T)\]

To obtain the soundness of Algorithm \ref{algo:main algorithm parameterized}, it only remains to notice that, since $V_G$ is a connected component of $G_1$, we deduce from Lemma \ref{lemma:loop invariant parameterized} that $\overline{\text{OMEGA}}\begin{bmatrix} V_G & T \end{bmatrix}$ contains exactly $\Omega_W(\overline{\R}^{V_G}_T)$ for all $T\subseteq V_H$.

\end{proof}

\section{Examples of Pre-Morphisms and Associated Problems}
\label{app:examples of pre-morphisms}

In this section we introduce additional examples of pre-morphisms and explicitly show how to formulate various BINARY-CSP$(\Gamma)$ problems in the $\Omega$($H$-($\R$-MORPHISM)) framework.
One of the smallest example of a non-trivial pre-morphism is the following Boolean pre-morphism:

\begin{lemma}\label{lemma:decision dioid}

Let $\mathbf{2}=\{ 0 , 1 \}$ be the set of the two Booleans and $\vee$ and $\wedge$ denote disjunction and conjunction over $\mathbf{2}$. Let $\mathbf{1}_{\neq\emptyset}$ be the function that maps $\emptyset$ to $0$ and every other set $\F$ to $1$ (the weight matrix $W$ is inessential).
Then, $(\mathbf{2}, \vee, \wedge,0,1)$ is a semiring (even a dioid), and the function $\mathbf{1}_{\neq\emptyset}$ is a strong poly-time computable $\mathbf{2}$-pre-morphism ignoring weights.
\end{lemma}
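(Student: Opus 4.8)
The plan is to verify, in turn, the two assertions of the lemma: first that $(\mathbf{2},\vee,\wedge,0,1)$ is a dioid, and then that $\mathbf{1}_{\neq\emptyset}$ satisfies the three pre-morphism axioms in their strong form together with the poly-time computability conditions. Essentially everything reduces to elementary Boolean algebra; I would organize the write-up so that the only two mildly non-routine points stand out clearly.

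For the dioid claim I would first recall that $(\mathbf{2},\vee,0)$ is a commutative monoid and $(\mathbf{2},\wedge,1)$ is a monoid (associativity, commutativity where relevant, and neutrality of $0$ and $1$ are each checked on the finitely many cases), that $\wedge$ distributes over $\vee$, and that $0\wedge a=0$ for $a\in\mathbf{2}$; these are textbook identities that can be dispatched by a truth table. To upgrade ``semiring'' to ``dioid'' I would compute the relation $\le_{\mathbf{2}}$ explicitly: $a\le_{\mathbf{2}}b$ iff $a\vee c=b$ for some $c\in\mathbf{2}$, which yields exactly $0\le_{\mathbf{2}}0$, $0\le_{\mathbf{2}}1$, $1\le_{\mathbf{2}}1$ and not $1\le_{\mathbf{2}}0$, i.e.\ the usual order on $\{0,1\}$. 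Reflexivity follows from $a\vee 0=a$ and transitivity from closure of $\vee$; antisymmetry must be verified directly (it need not hold for an arbitrary semiring), but here it is immediate from the four-element description.

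For $\mathbf{1}_{\neq\emptyset}$, the key elementary observation is that the three set operations involved interact with emptiness in the ``expected'' Boolean way. For the first axiom, when $\F_1\cap\F_2=\emptyset$ we have $\F_1\uplus\F_2=\F_1\cup\F_2$, and $\F_1\cup\F_2=\emptyset$ iff $\F_1=\emptyset$ and $\F_2=\emptyset$, so $\mathbf{1}_{\neq\emptyset}(\F_1\uplus\F_2)=\mathbf{1}_{\neq\emptyset}(\F_1)\vee\mathbf{1}_{\neq\emptyset}(\F_2)$. For the strong form of the second axiom I would note that $\F_1\Join\F_2$ is nonempty iff $\F_1\times\F_2$ is nonempty iff both $\F_1$ and $\F_2$ are nonempty: if $f_1\in\F_1$ and $f_2\in\F_2$ then $f_1\Join f_2\in\F_1\Join\F_2$, and conversely emptiness of either factor makes the product empty. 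Crucially this argument never uses that $T_1$ and $T_2$ are disjoint, which is exactly what ``strong'' demands; hence $\mathbf{1}_{\neq\emptyset}(\F_1\Join\F_2)=\mathbf{1}_{\neq\emptyset}(\F_1)\wedge\mathbf{1}_{\neq\emptyset}(\F_2)$. The third axiom $\mathbf{1}_{\neq\emptyset}(\emptyset)=0$ holds by the definition of $\mathbf{1}_{\neq\emptyset}$. Finally $\vee$ and $\wedge$ are computable in constant time, and $(G,H,W,S,a)\mapsto\mathbf{1}_{\neq\emptyset}(\{f^S_{\{a\}}\})$ is constantly equal to $1$ since a singleton is nonempty, so $\mathbf{1}_{\neq\emptyset}$ is poly-time computable; it manifestly does not reference $W$, so it ignores weights.

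I do not expect a genuine obstacle: the statement is a sanity check that the Boolean semiring and the ``is the (partial) solution set nonempty?'' functional fit the framework. The only two points requiring a moment of care are (i) that antisymmetry of $\le_{\mathbf{2}}$ has to be checked directly rather than inferred from the semiring axioms, and (ii) that the second pre-morphism axiom must be established in its strong form, i.e.\ without assuming $T_1\cap T_2=\emptyset$; both are settled at once by inspecting the finitely many cases.
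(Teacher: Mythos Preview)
Your proposal is correct and carefully organized; every step checks out, including the two points you flag (antisymmetry of $\le_{\mathbf{2}}$ and the strong form of the $\Join$ axiom). The paper itself does not supply a proof of this lemma at all, treating it as a routine verification, so there is nothing to compare against beyond noting that your argument is exactly the natural one.
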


Note that computing the value by the function $\mathbf{1}_{\neq\emptyset}$ of the set of solutions of a BINARY-CSP (or a MORPHISM) instance is equivalent to solving its decision version. Therefore, the $\mathbf{1}_{\neq\emptyset}$(BINARY-CSP($\Gamma$)) and $\mathbf{1}_{\neq\emptyset}$($H$-($\R$-MORPHISM)) problems are exactly the BINARY-CSP($\Gamma$) and $H$-($\R$-MORPHISM) problems.
We can do better and implement the list version:

\begin{lemma}\label{lemma:list dioid}

Let list be the function that associates to every $\F\in\P(T^S)$ (for all sets $S$ and $T$) and $W\in \mathbf{2}^{S\times T}$: $0$ if $\{ f\in\F\mid \forall u\in S, W(u,f(u))=1 \}$ is empty, and $1$ otherwise.
Then ${\text{list}}$ is a poly-time computable strong $\mathbf{2}$-pre-morphism with weights in $\mathbf{2}$.

\end{lemma}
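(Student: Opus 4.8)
The plan is to verify each of the three pre-morphism axioms directly for the function $\text{list}$, together with the ``strong'' variant of the second axiom and poly-time computability, all relative to the semiring $(\mathbf{2},\vee,\wedge,0,1)$. First I would introduce, for a fixed weight matrix $W\in\mathbf{2}^{S\times T}$, the notation $\F^{W}:=\{f\in\F \mid \forall u\in S,\ W(u,f(u))=1\}$ for the set of functions in $\F$ ``compatible with the list constraint'', so that $\text{list}_W(\F)=\mathbf{1}_{\neq\emptyset}(\F^{W})$. The key structural observation — which I would state and prove as the first step — is that the map $\F\mapsto\F^{W}$ commutes with the two operations $\uplus$ and $\Join$: namely $(\F_1\uplus\F_2)^{W}=\F_1^{W}\uplus\F_2^{W}$ when $\F_1\cap\F_2=\emptyset$ (immediate, since the list condition is checked pointwise on each $f$ individually), and $(\F_1\Join\F_2)^{W}=\F_1^{W}\Join\F_2^{W}$ when the domains $S_1,S_2$ are disjoint, because a joined function $f_1\Join f_2$ satisfies $W(u,\cdot)=1$ on all of $S_1\uplus S_2$ if and only if $f_1$ does so on $S_1$ and $f_2$ does so on $S_2$ (here the list condition on $S_1$ depends only on values in $S_1$, and crucially does \emph{not} constrain the target side, so no disjointness of $T_1,T_2$ is needed — this is exactly what gives the ``strong'' property). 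I should be slightly careful that when restricting $W$ to the relevant sub-block $S_i\times T_i$ the notation is consistent, but this is routine.

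Given this observation, the three axioms follow by pushing everything through $\mathbf{1}_{\neq\emptyset}$, which is already known to be a strong pre-morphism by Lemma \ref{lemma:decision dioid}. Concretely: for the first axiom, $\text{list}_W(\F_1\uplus\F_2)=\mathbf{1}_{\neq\emptyset}((\F_1\uplus\F_2)^{W})=\mathbf{1}_{\neq\emptyset}(\F_1^{W}\uplus\F_2^{W})=\mathbf{1}_{\neq\emptyset}(\F_1^{W})\vee\mathbf{1}_{\neq\emptyset}(\F_2^{W})=\text{list}_W(\F_1)\vee\text{list}_W(\F_2)$, using that $\F_1^{W}\cap\F_2^{W}\subseteq\F_1\cap\F_2=\emptyset$. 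For the (strong) second axiom, the same computation with $\Join$ in place of $\uplus$ and $\wedge$ in place of $\vee$ gives $\text{list}_W(\F_1\Join\F_2)=\text{list}_W(\F_1)\wedge\text{list}_W(\F_2)$, valid even when $T_1,T_2$ overlap. The third axiom is trivial: $\emptyset^{W}=\emptyset$, so $\text{list}_W(\emptyset)=\mathbf{1}_{\neq\emptyset}(\emptyset)=0=0_{\mathbf{2}}$.

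Finally, for poly-time computability, $\vee$ and $\wedge$ are clearly constant-time, and the singleton evaluation $\text{list}_W(\{f^S_{\{a\}}\})$ is just $1$ if $W(u,a)=1$ for every $u\in S$ and $0$ otherwise, which is computable in $O(|S|)$ time by scanning the column of $W$ indexed by $a$. I expect no genuine obstacle here; the only mild subtlety, and the step most worth stating carefully, is the asymmetry in the second axiom — that the list condition is a constraint purely on the \emph{domain side} and therefore is preserved by $\Join$ without any hypothesis relating $T_1$ and $T_2$ — since that is precisely what upgrades the pre-morphism from ordinary to strong.
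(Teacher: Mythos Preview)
Your proof is correct and complete. The paper actually states this lemma without proof (it is presented in the appendix as one of several routine verifications left to the reader), so there is no ``paper proof'' to compare against. Your approach of factoring $\text{list}_W$ as $\mathbf{1}_{\neq\emptyset}\circ(\F\mapsto\F^W)$ and then checking that the filtering map $\F\mapsto\F^W$ commutes with $\uplus$ and $\Join$ is exactly the natural argument, and your emphasis on why the second axiom holds in its \emph{strong} form (the list constraint is purely domain-side, so no disjointness hypothesis on $T_1,T_2$ is needed) is the right point to highlight.
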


It is easy to see that the list(BINARY-CSP($\Gamma$)) and list($H$-($\R$-MORPHISM)) problems are exactly the list-BINARY-CSP($\Gamma$) and list-$H$-($\R$-MORPHISM) problems.
We can also define a pre-morphism around the function computing the cardinal of the set involved:

\begin{lemma}\label{lemma:counting dioid}

Let $\#$ be the function that associate to every set $\F$ the cardinal $|\F|$ of $\F$ (the weight matrix $W$ is inessential).
Then $(\N,+,\times,0_{\N},1_{\N})$ is a semiring (it is even a dioid), and \# is a poly-time computable $\N$-pre-morphism ignoring weights.

\end{lemma}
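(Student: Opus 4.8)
The statement to prove is Lemma~\ref{lemma:counting dioid}: that $(\N,+,\times,0_\N,1_\N)$ is a semiring (indeed a dioid), and that the cardinality function $\#$ is a poly-time computable $\N$-pre-morphism ignoring weights.

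The plan is to verify the three bullet points in order, each being a short and elementary check. First I would observe that $(\N,+,\times,0,1)$ is the textbook example of a commutative semiring: $(\N,+,0)$ is a commutative monoid, $(\N,\times,1)$ is a (commutative) monoid, multiplication distributes over addition, and $0$ is absorbing for $\times$; moreover the induced order $a \leq_\N b \iff \exists c\in\N, a+c=b$ is just the usual order on $\N$, which really is an order relation (antisymmetry holds because $\N$ has no nontrivial additive inverses), so $(\N,+,\times,0,1)$ is a dioid. Since $\#$ does not use the weight matrix $W$, it suffices to check the three pre-morphism axioms for the assignment $\F \mapsto |\F|$, independently of any $B$.

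For the first axiom, given disjoint $\F_1,\F_2 \in \P(T^S)$, the set $\F_1 \uplus \F_2$ is their disjoint union, so $|\F_1 \uplus \F_2| = |\F_1| + |\F_2|$; this is exactly the additivity of cardinality over disjoint unions of finite sets (all sets here are finite since $S,T$ are finite). For the second (join) axiom, given $\F_1 \in \P(T_1^{S_1})$ and $\F_2 \in \P(T_2^{S_2})$ with $S_1 \cap S_2 = \emptyset$, I claim the map $(f_1,f_2) \mapsto f_1 \Join f_2$ is a bijection from $\F_1 \times \F_2$ onto $\F_1 \Join \F_2$: it is surjective by definition of $\Join$ on sets of functions, and injective because $(f_1\Join f_2)|_{S_1} = f_1$ and $(f_1\Join f_2)|_{S_2} = f_2$ recover the preimage (this is where disjointness of $S_1$ and $S_2$ is used; note the codomains $T_1,T_2$ need not be disjoint, so $\#$ is in fact a \emph{strong} pre-morphism). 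Hence $|\F_1 \Join \F_2| = |\F_1 \times \F_2| = |\F_1|\cdot|\F_2|$. For the third axiom, $|\emptyset| = 0 = 0_\N$, which is immediate.

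Finally, for poly-time computability: $+$ and $\times$ on $\N$ are computable in constant time in the model used here (or at worst polynomial in the bit-length, which is subsumed by the polynomial factors), and for a constant function $f^S_{\{a\}}$ we have $\{f^S_{\{a\}}\}$ of cardinality $1$, so $\Omega_W(\{f^S_{\{a\}}\}) = 1$ is computable in polynomial (indeed constant) time. This verifies all clauses. I do not expect any genuine obstacle here; the only point requiring the slightest care is being explicit that the join map is a bijection and that disjointness of the \emph{domains} (not codomains) is what makes it well-defined and injective, which also yields strongness for free.
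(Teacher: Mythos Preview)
Your proof is correct and complete; the paper itself states this lemma without proof, treating it as an immediate verification. Your argument covers exactly the checks one would expect, and you even observe that $\#$ is in fact a \emph{strong} pre-morphism (since the bijection $(f_1,f_2)\mapsto f_1\Join f_2$ only requires disjointness of the domains $S_1,S_2$), which the lemma as stated does not claim but which is true and useful.
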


It follows that $\#$(BINARY-CSP($\Gamma$))  and $\#$($H$-($\R$-MORPHISM))  are alternative formulations of the $\#$BINARY-CSP($\Gamma$) and $\#H$-($\R$-MORPHISM) counting problems. 

To go even further, we can even add a list to the counting version.

\begin{lemma}\label{lemma:list counting dioid}

Let $\#_{\text{list}}$ be the function that associates to every $\F\in\P(T^S)$ and $W\in \mathbf{2}^{S\times T}$ (for all sets $S$ and $T$ and $W\in\mathbf{2}^{S\times T}$) the cardinal of the set of the functions $f$ of $\F$ satisfying $\forall u\in S, W(u,f(u))=1$: ${\#_{\text{list}}}_W(\F) = |\{ f\in\F\mid \forall u\in S, W(u,f(u))=1 \}|$.
Then $\#_{\text{list}}$ is a poly-time computable strong $\N$-pre-morphism with weights in $\mathbf{2}$.

\end{lemma}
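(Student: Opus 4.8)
The plan is to view $\#_{\text{list}}$ as the plain counting pre-morphism $\#$ of Lemma~\ref{lemma:counting dioid} precomposed with a ``list-filtering'' operation, and to reduce each pre-morphism axiom to an elementary counting fact. For $\F\in\P(T^S)$ and a weight matrix $W$ (restricted to $S\times T$), write $\F^W:=\{f\in\F\mid \forall u\in S,\ W(u,f(u))=1\}$, so that ${\#_{\text{list}}}_W(\F)=|\F^W|$. Since $(\N,+,\times,0_{\N},1_{\N})$ is already a semiring (Lemma~\ref{lemma:counting dioid}), it suffices to verify the three pre-morphism axioms in their strong form and poly-time computability, and for each the strategy is to show that $\F\mapsto\F^W$ commutes with $\uplus$ and with $\Join$.

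The empty-set axiom is immediate: $\emptyset^W=\emptyset$, hence ${\#_{\text{list}}}_W(\emptyset)=0_{\N}$. For additivity, let $\F_1,\F_2\in\P(T^S)$ with $\F_1\cap\F_2=\emptyset$; since $\F_i^W\subseteq\F_i$ the filtered sets stay disjoint, and evidently $(\F_1\uplus\F_2)^W=\F_1^W\uplus\F_2^W$, so ${\#_{\text{list}}}_W(\F_1\uplus\F_2)=|\F_1^W|+|\F_2^W|={\#_{\text{list}}}_W(\F_1)+{\#_{\text{list}}}_W(\F_2)$. For poly-time computability, $+$ and $\times$ on $\N$ are taken to be constant-time, and for a constant function $f^S_{\{a\}}$ one checks directly that ${\#_{\text{list}}}_W(\{f^S_{\{a\}}\})$ equals $1$ if $W(u,a)=1$ for every $u\in S$ and $0$ otherwise, which is decidable in $O(|S|)$ time.

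The core of the argument is the (strong) join axiom. Let $S_1,S_2$ be disjoint subsets of $V_G$, let $T_1,T_2$ be subsets of $V_H$ that are \emph{not} assumed disjoint, and let $\F_1\in\P((T_1)^{S_1})$, $\F_2\in\P((T_2)^{S_2})$. Because $S_1\cap S_2=\emptyset$, the map $(f_1,f_2)\mapsto f_1\Join f_2$ is a bijection from $\F_1\times\F_2$ onto $\F_1\Join\F_2$, with inverse $g\mapsto(g|_{S_1},g|_{S_2})$. I would then check that this bijection restricts to a bijection $\F_1^W\times\F_2^W\to(\F_1\Join\F_2)^W$: for $g=f_1\Join f_2$, the condition ``$\forall u\in S_1\uplus S_2,\ W(u,g(u))=1$'' splits, using $g|_{S_i}=f_i$, into ``$\forall u\in S_1,\ W(u,f_1(u))=1$'' and ``$\forall u\in S_2,\ W(u,f_2(u))=1$'', i.e.\ into $f_1\in\F_1^W$ and $f_2\in\F_2^W$; crucially this uses only that $u$ ranges over the disjoint \emph{domains}, so disjointness of $T_1$ and $T_2$ is never invoked. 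Hence $(\F_1\Join\F_2)^W=\F_1^W\Join\F_2^W$ and ${\#_{\text{list}}}_W(\F_1\Join\F_2)=|\F_1^W|\cdot|\F_2^W|={\#_{\text{list}}}_W(\F_1)\times{\#_{\text{list}}}_W(\F_2)$. The only step that needs any care is exactly this one --- making sure the list-filter factors through $\Join$ without secretly using codomain-disjointness --- and it is nothing more than the splitting computation just sketched; everything else is immediate from the definitions.
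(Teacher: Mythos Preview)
Your proof is correct. The paper states this lemma without proof (Appendix~\ref{app:examples of pre-morphisms} just lists it among the examples of pre-morphisms), so there is nothing to compare against; your argument supplies exactly the routine verification the paper omits. The decomposition via the filter $\F\mapsto\F^W$ and the observation that the list condition on $f_1\Join f_2$ splits over the \emph{domains} $S_1,S_2$ (hence needs no hypothesis on $T_1\cap T_2$) is precisely the right way to see the strong join axiom, and the remaining checks are indeed immediate.
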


The next semiring and pre-morphism are more subtle, since they actually takes into account the weight matrix with weights in $\Rbar$, and additionaly make use of a Cartesian product. Define \[\oplus:\begin{matrix} (\Rbar\times\N)^2 & \mapsto & \Rbar\times\N \\ (m_1,c_1),(m_2,c_2) & \mapsto & (\min(m_1,m_2), \left\{
    \begin{array}{ll}
        c_1 \text{ if } m_1<m_2 \\
        c_2 \text{ if } m_2<m_1 \\
        c_1+c_2\text{ if } m_1=m_2
    \end{array}\right\}  ) \end{matrix}\]

and \[\otimes:\begin{matrix} (\Rbar\times\N)^2 & \mapsto & \Rbar\times\N \\ (m_1,c_1),(m_2,c_2) & \mapsto & (m_1+m_2, c_1\times c_2  ) \end{matrix}\]

We immediately obtain the following lemma.
\begin{lemma}

$((\Rbar\times\N),\oplus,\otimes,(+\infty,0_{\N}),(0_{\Rbar},1_{\N}))$ is a semiring (it is even a dioid).

\end{lemma}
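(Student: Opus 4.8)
The plan is to verify, in turn, the semiring axioms for $(\Rbar\times\N,\oplus,\otimes,(+\infty,0_\N),(0_\Rbar,1_\N))$ — namely that $\oplus$ makes it a commutative monoid, $\otimes$ makes it a monoid, $\otimes$ distributes over $\oplus$, and $0_A=(+\infty,0_\N)$ is absorbing for $\otimes$ — and then to check that the canonical relation $\le_A$ is a partial order, which is what promotes the semiring to a dioid.

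First I would dispatch the additive part. Commutativity of $\oplus$ is immediate, since $\min$, the comparison of $m_1$ with $m_2$, and addition on $\N$ are all symmetric; and $(+\infty,0_\N)$ is neutral because $(+\infty,0_\N)\oplus(m,c)$ reduces to $(m,c)$ both when $m<+\infty$ (second coordinate $c$) and when $m=+\infty$ (second coordinate $0_\N+c=c$). The only mildly technical point here is associativity of $\oplus$: I would first prove the general description that an iterated sum of pairs $(m_1,c_1),\dots,(m_k,c_k)$ equals $\bigl(\min_i m_i,\ \sum_{i:\,m_i=\min_j m_j}c_i\bigr)$, independently of how it is bracketed, by a short case analysis according to which of the $m_i$ attain the minimum; associativity is then the instance $k=3$. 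The multiplicative part is lighter: associativity of $\otimes$ is just coordinatewise associativity of $+$ on $\Rbar$ and of $\times$ on $\N$; $(0_\Rbar,1_\N)$ is visibly neutral for $\otimes$; and $0_A$ is absorbing because $+\infty$ is absorbing for $+$ in $\Rbar$ and $0$ is absorbing for $\times$ in $\N$. Note also that $\otimes$ is commutative, so only one side of distributivity needs to be checked.

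The step I expect to be the main obstacle is distributivity, $x\otimes(y\oplus z)=(x\otimes y)\oplus(x\otimes z)$. Writing $x=(m,c)$, $y=(m_1,c_1)$, $z=(m_2,c_2)$ and assuming without loss of generality $m_1\le m_2$, the heart of the argument is that translating the first coordinate by $m$ preserves the order between $m_1$ and $m_2$ and maps equal values to equal values — valid whenever $m\in\mathbb{R}$ — while the degenerate case $m=+\infty$, i.e.\ $x=0_A$, is treated on its own, with both sides collapsing to $0_A$. This is the place where one must be careful about the entries of $\Rbar\times\N$ whose first coordinate is $+\infty$: for the purposes of $\oplus$ and $\otimes$ they all behave like $0_A$ (such an entry only ever appears starting from $0_A$), so it is harmless — and, I would argue, cleanest — to regard the $+\infty$-layer of the carrier as the single point $0_A$. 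Granting that, the two subcases $m_1<m_2$ and $m_1=m_2$ each come down to elementary arithmetic on the first coordinate together with the identity $c\cdot(c_1+c_2)=c\cdot c_1+c\cdot c_2$ in $\N$ on the second.

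Finally, for the dioid claim I would make the relation $\le_A$ explicit: $(m_1,c_1)\le_A(m_2,c_2)$ holds iff some $(m,c)$ satisfies $(m_1,c_1)\oplus(m,c)=(m_2,c_2)$, which unwinds to ``$m_2<m_1$, or else $m_1=m_2$ and $c_1\le c_2$''. Reflexivity follows from neutrality of $0_A$ for $\oplus$, transitivity from associativity of $\oplus$, and antisymmetry is read straight off the description: if $a\le_A b$ and $b\le_A a$, the first coordinates cannot be strictly ordered either way, hence are equal, and then the second coordinates satisfy $c_1\le c_2\le c_1$. Hence $\le_A$ is a partial order and the structure is a dioid; apart from the distributivity check and the attendant bookkeeping around the $+\infty$ first coordinates, every step is a direct unwinding of the definitions.
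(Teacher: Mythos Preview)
The paper offers no proof of this lemma --- it is introduced with ``We immediately obtain the following lemma'' --- so there is nothing to compare your argument against. Your verification of the axioms is careful and correct on the substance.

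You are right to single out the $+\infty$-layer as the delicate point; in fact the lemma as literally stated is \emph{false} on the full carrier $\Rbar\times\N$. A concrete counterexample to distributivity: with $x=(+\infty,1)$, $y=(0,1)$, $z=(1,1)$ one gets $x\otimes(y\oplus z)=x\otimes(0,1)=(+\infty,1)$, whereas $(x\otimes y)\oplus(x\otimes z)=(+\infty,1)\oplus(+\infty,1)=(+\infty,2)$. Your proposed repair --- collapsing every pair $(+\infty,c)$ to the single zero element, equivalently working in the sub-carrier $(\mathbb{R}\times\N)\cup\{(+\infty,0)\}$ --- is exactly what is needed, and it is closed under both operations. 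It also matches how the paper actually uses this structure: by definition $\#\text{ArgMinCost}$ returns $0$ whenever the minimum cost is $+\infty$, so the pre-morphism $(\text{MinCost},\#\text{ArgMinCost})$ never produces an element outside that sub-carrier. What you have really proven, then, is that this corrected carrier is a semiring and a dioid, which is what the applications require; the discrepancy is a minor imprecision in the paper's statement rather than a gap in your reasoning. Your description of $\le_A$ and the check of antisymmetry are correct as written.
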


\begin{definition}
We define the following functions.
\begin{itemize}
    \item Let $\text{MinCost}$ be the function that maps, for $G$ and $H$ two edge-labelled graphs, $S\subseteq V_G$, $T\subseteq V_H$, any $(\F,W)$ to $\min W_{\sum}(\F)$, for $\F\in\P(T^S)$ and $W\in \Rbar^{V_G\times V_H}$ denoting, for all $f\in T^S,\  W_{\sum}(f)=\sum\limits_{u\in S} W(u,f(u))$, and $W_{\sum}(\F)=\{ W_{\sum}(f) \mid f\in\F \}$: $\text{MinCost}$ outputs values in $\Rbar$).

\item Let $\text{ArgMinCost}$ be the function that maps, for $G$ and $H$ two edge-labelled graphs, $S\subseteq V_G$, $T\subseteq V_H$, any $(\F,W)$ to the set $\left\{
    \begin{array}{ll}
        \emptyset \text{ if } \min W_{\sum}(\F)=+\infty
        \\
        \{f\in\F\mid W_{\sum}(f)=\min W_{\sum}(\F) \}\text{ otherwise }
    \end{array}\right\}$ with $\F\in\P(T^S)$ and $W\in\Rbar^{V_G\times V_H}$
    
\item Last, let $\#\text{ArgMinCost}$ be the composition of \# and $\text{ArgMinCost}$ (i.e., the function that outputs the cardinal of the set described above): $\#\text{ArgMinCost}$ outputs values in $\N$.
\end{itemize}
\end{definition}

\begin{lemma}\label{lemma:MinCost dioid}

$(\text{MinCost},\#\text{ArgMinCost})$ is a poly-time computable strong $(\Rbar\times\N)$-pre-morphism with weights in $\Rbar$.

\end{lemma}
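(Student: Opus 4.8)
# Proof Proposal for Lemma \ref{lemma:MinCost dioid}

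The plan is to verify, in turn, the three axioms of a strong $(\Rbar\times\N)$-pre-morphism with weights in $\Rbar$, plus poly-time computability, directly from the definitions of $\text{MinCost}$, $\#\text{ArgMinCost}$, $\oplus$ and $\otimes$. Throughout, the key observation is a bookkeeping identity: for a finite family $\F\subseteq T^S$ and weight matrix $W$, the pair $(\text{MinCost}_W(\F), \#\text{ArgMinCost}_W(\F))$ records simultaneously the minimum value of $W_{\sum}$ over $\F$ and the number of $f\in\F$ attaining that minimum (with the convention that an empty minimum is $+\infty$ and is attained $0$ times, matching $\min\emptyset=+\infty$ and $|\emptyset|=0$). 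So the whole lemma reduces to checking that $\oplus$ is exactly ``merge two (min, count-of-argmin) records'' and $\otimes$ is exactly ``combine two such records under addition of costs and multiplication of counts''.

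\textbf{Third axiom ($\Omega_W(\emptyset)=0_A$).} For $\F=\emptyset$ we have $W_{\sum}(\emptyset)=\emptyset$, so $\text{MinCost}_W(\emptyset)=\min\emptyset=+\infty$, and $\text{ArgMinCost}_W(\emptyset)=\emptyset$ (the first case of its definition fires), so $\#\text{ArgMinCost}_W(\emptyset)=0_\N$. Hence the value is $(+\infty,0_\N)=0_A$, the additive neutral of the semiring.

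\textbf{First axiom ($\uplus$ becomes $\oplus$).} Let $\F_1,\F_2\in\P(T^S)$ be disjoint. Then $W_{\sum}(\F_1\uplus\F_2)=W_{\sum}(\F_1)\cup W_{\sum}(\F_2)$, so $\min W_{\sum}(\F_1\uplus\F_2)=\min(\text{MinCost}_W(\F_1),\text{MinCost}_W(\F_2))$, which is the first coordinate of $\oplus$. For the count: an $f\in\F_1\uplus\F_2$ attains the joint minimum $m:=\min(m_1,m_2)$ iff it lies in whichever of $\F_1,\F_2$ realises $m$ and attains $m$ there; since $\F_1\cap\F_2=\emptyset$, the argmin sets are disjoint, so the counts add exactly when $m_1=m_2$, equal $c_1$ when $m_1<m_2$, and equal $c_2$ when $m_2<m_1$ — precisely the second coordinate of $\oplus$. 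Thus $\Omega_W(\F_1\uplus\F_2)=\Omega_W(\F_1)\oplus\Omega_W(\F_2)$.

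\textbf{Second axiom, strong version ($\Join$ becomes $\otimes$).} Let $S_1,S_2\subseteq V_G$ be disjoint (the codomains $T_1,T_2\subseteq V_H$ need not be disjoint, since we prove the strong variant), and let $\F_1\in\P(T_1^{S_1})$, $\F_2\in\P(T_2^{S_2})$. Every $f\in\F_1\Join\F_2$ is uniquely $f_1\Join f_2$ with $f_i\in\F_i$, and since $S_1,S_2$ are disjoint, $W_{\sum}(f_1\Join f_2)=\sum_{u\in S_1}W(u,f_1(u))+\sum_{u\in S_2}W(u,f_2(u))=W_{\sum}(f_1)+W_{\sum}(f_2)$. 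Therefore $\min W_{\sum}(\F_1\Join\F_2)=\min_{f_1,f_2}\bigl(W_{\sum}(f_1)+W_{\sum}(f_2)\bigr)=\text{MinCost}_W(\F_1)+\text{MinCost}_W(\F_2)$ (using that $+\infty$ is absorbing for $+$ on $\Rbar$, which correctly handles the case where one of $\F_1,\F_2$ has only infinite-cost members or is empty). For the count: $f_1\Join f_2$ attains this minimum iff $f_1$ attains $\text{MinCost}_W(\F_1)$ in $\F_1$ \emph{and} $f_2$ attains $\text{MinCost}_W(\F_2)$ in $\F_2$ (because a sum of two quantities each bounded below equals the sum of the lower bounds iff both are tight); so by the bijection $f\leftrightarrow(f_1,f_2)$ the number of argmin functions is $\#\text{ArgMinCost}_W(\F_1)\cdot\#\text{ArgMinCost}_W(\F_2)$. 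This is exactly $\otimes$, so $\Omega_W(\F_1\Join\F_2)=\Omega_W(\F_1)\otimes\Omega_W(\F_2)$.

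\textbf{Poly-time computability.} The operations $\oplus$ and $\otimes$ do a bounded number of comparisons, additions and a multiplication in $\Rbar\times\N$, hence run in constant time (in the usual arithmetic model). For a constant function $f^S_{\{a\}}$ we have $W_{\sum}(f^S_{\{a\}})=\sum_{u\in S}W(u,a)$, computable in $O(|S|)$ time; then $\text{MinCost}_W(\{f^S_{\{a\}}\})$ is that value and $\#\text{ArgMinCost}_W(\{f^S_{\{a\}}\})$ is $1$ if it is finite and $0$ otherwise. Hence $\Omega$ is poly-time computable, completing the proof.

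The routine part is purely the arithmetic of $\min$/$+$ and of counting argmins, which I would present compactly; the only mild subtlety — and thus the ``main obstacle'' such as it is — is making sure all the $+\infty$ conventions ($\min\emptyset=+\infty$, absorption of $+\infty$ under $+$, and the empty-argmin-has-count-$0$ convention) line up with the definitions of $\oplus$, $\otimes$ and of $\text{ArgMinCost}$ in every degenerate case (empty $\F$, all-infinite $\F$), so that no coordinate of the output is ever ill-defined.
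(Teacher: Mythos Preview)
The paper states this lemma without proof (as it does for every example pre-morphism in Appendix~\ref{app:examples of pre-morphisms}), so there is no argument to compare against. Your proof is correct and complete: you verify the three pre-morphism axioms directly from the definitions of $\oplus$, $\otimes$, $\text{MinCost}$ and $\#\text{ArgMinCost}$, you correctly justify the \emph{strong} version of the $\Join$ axiom by noting that $W_{\sum}(f_1\Join f_2)=W_{\sum}(f_1)+W_{\sum}(f_2)$ depends only on the disjointness of $S_1,S_2$ (not of $T_1,T_2$), and you handle all the $+\infty$ edge cases consistently with the paper's conventions. This is precisely the routine verification the authors omit.
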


Notice that computing the value by the function $(MinCost,\#ArgMinCost)$ of the set of solutions of a BINARY-CSP or a MORPHISM instance means computing the minimal weight $\sum\limits_{u\in V_G} w(u,f(u))$ of the solutions $f$ satisfying $\forall (u,v)\in V_G\times V_H, w(u,v)=+\infty \implies f(u)\neq v$, and determining the number of such solutions of minimal weights, which answers a problem that subsumes together the counting, the list, and a weighted version of BINARY-CSP($\Gamma$) and $H$-($\R$-MORPHISM).
We also remark that instead of considering the weights in $\Rbar$, considering any totally ordered set $B$ with an increasing, binary, associative and commutative operation $b:B^2\mapsto B$ (instead of $+$) would have been possible.
We present a variant of this pre-morphism, with the goal of modeling another weighted version of BINARY-CSP($\Gamma$) and $H$-($\R$-MORPHISM).

\begin{definition}

Let $\text{MinWeight}$ be the function that maps, for $G$ and $H$ two edge-labelled graphs, $S\subseteq V_G$, $T\subseteq V_H$, any $(\F,W)$ to $\min W_{\max}(\F)$, for $\F\in\P(T^S)$ and $W=(w(u,v))_{u\in V_G,v\in V_H}\in \Rbar^{V_G\times V_H}$ denoting, for all $f\in T^S,\  W_{\max}(f)=\sum\limits_{v\in T} \max\limits_{u\in f^{-1}(v)} W(u,v)$, and $W_{max}(\F)=\{ W_{\max}(f) \mid f\in\F \}$: $\text{MinWeight}$ outputs values in $\Rbar$).

Let $\text{ArgMinWeight}$ be the function that maps, for $G$ and $H$ two edge-labelled graphs, $S\subseteq V_G$, $T\subseteq V_H$, any $(\F,W)$ to the set $\left\{
    \begin{array}{ll}
        \emptyset \text{ if } \min W_{\max}(\F)=+\infty
        \\
        \{f\in\F\mid W_{\max}(f)=\min W_{\max}(\F) \}\text{ otherwise }
    \end{array}\right\}$ with $\F\in\P(T^S)$ and $W\in\Rbar^{V_G\times V_H}$.
    
and let $\#\text{ArgMinWeight}$ be the composition of $\#$ and $\text{ArgMinWeight}$ (ie the function that outputs the cardinal of the set described above): $\#\text{ArgMinWeight}$ outputs values in $\N$.

\end{definition}

Similarly, computing the value by $\text{MinWeight}$ of the set of solutions of an instance of BINARY-CSP($\Gamma$) or $H$-($\R$-MORPHISM), solves another weighted version of the problems, which is the one defined by Escoffier et al.~\cite{escoffier2006weighted} (restricted to the less general case of $q$-COLORING) while computing both the minimal weight of a solution and the number of solutions with such a minimal weight.

\begin{lemma}\label{lemma:MinWeight dioid}

$((\Rbar_+\times\N),\oplus,\otimes,(+\infty,0_{\N}),(0_{\Rbar_+},1_{\N}))$ is a semiring, and $(\text{MinWeight},\#\text{ArgMinWeight})$ is a poly-time computable $(\Rbar_+\times\N)$-pre-morphism (restricting and corestricting $\oplus$ and $\otimes$ to $(\Rbar_+\times\N)^2$ and $(\Rbar_+\times\N)$). It is not a strong pre-morphism.

\end{lemma}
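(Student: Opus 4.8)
The plan is to prove the three assertions in turn, reusing as much as possible the structure already established for $\Rbar\times\N$.

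\textbf{Semiring structure.} I would first argue that $\Rbar_+\times\N$, with the operations $\oplus,\otimes$ restricted, is a sub-semiring of $(\Rbar\times\N,\oplus,\otimes,(+\infty,0_{\N}),(0_{\Rbar},1_{\N}))$, whose (di)semiring status is the lemma immediately preceding Lemma~\ref{lemma:MinCost dioid}. Closure is immediate: the first coordinate of $(m_1,c_1)\oplus(m_2,c_2)$ is $\min(m_1,m_2)\in\Rbar_+$ and of $(m_1,c_1)\otimes(m_2,c_2)$ is $m_1+m_2\in\Rbar_+$ (no cancellation can occur, as $\Rbar_+$ has no $-\infty$), while the second coordinates stay in $\N$; both neutral elements also lie in $\Rbar_+\times\N$. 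Since associativity, commutativity, distributivity and absorption are universally quantified identities, they descend to the subset, so $(\Rbar_+\times\N,\oplus,\otimes,(+\infty,0_{\N}),(0_{\Rbar_+},1_{\N}))$ is a semiring.

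\textbf{Pre-morphism axioms.} The key observation is that, for \emph{disjoint} $T_1,T_2\subseteq V_H$ and $f_i\in(T_i)^{S_i}$,
\[W_{\max}(f_1\Join f_2)=\sum_{v\in T_1\uplus T_2}\max_{u\in(f_1\Join f_2)^{-1}(v)}W(u,v)=W_{\max}(f_1)+W_{\max}(f_2),\]
because each $v\in T_1$ satisfies $(f_1\Join f_2)^{-1}(v)=f_1^{-1}(v)\subseteq S_1$, and symmetrically for $T_2$; also $W_{\max}(f)\ge 0$ since $W$ has nonnegative entries, so $\text{MinWeight}$ indeed takes values in $\Rbar_+$. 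With this in hand I would verify the axioms exactly as for Lemma~\ref{lemma:MinCost dioid}: for $\uplus$, $\min W_{\max}(\F_1\cup\F_2)=\min(\min W_{\max}(\F_1),\min W_{\max}(\F_2))$ and, as $\F_1\cap\F_2=\emptyset$, the set of minimisers of $\F_1\uplus\F_2$ is that of strictly smaller minimum, or the disjoint union of both when the minima agree --- precisely the second-coordinate rule of $\oplus$; for $\Join$ (disjoint codomains), $\min_{(f_1,f_2)}(W_{\max}(f_1)+W_{\max}(f_2))=\min W_{\max}(\F_1)+\min W_{\max}(\F_2)$, and since $S_1\cap S_2=\emptyset$ the map $(f_1,f_2)\mapsto f_1\Join f_2$ is injective, so a function in $\F_1\Join\F_2$ is a minimiser iff both of its restrictions are, hence the minimiser count is the product of the counts (and is $0$ once one factor has infinite minimum, matching the absorbing first coordinate of $\otimes$); the third axiom holds as $\min W_{\max}(\emptyset)=\min\emptyset=+\infty$ and $\#\text{ArgMinWeight}_W(\emptyset)=|\emptyset|=0$, i.e.\ $(+\infty,0_{\N})=0_A$. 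For poly-time computability, $\oplus$ and $\otimes$ are constant-time, and $\Omega_W(\{f^S_{\{a\}}\})$ equals $(\max_{u\in S}W(u,a),c)$ with $c=1$ if $\max_{u\in S}W(u,a)<+\infty$ and $c=0$ otherwise, computable in $O(|S|)$ time.

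\textbf{Failure of strongness and the main obstacle.} To see that the pre-morphism is not strong, I would exhibit a violation of the $\Join$-axiom with overlapping codomains: take $V_G=\{s_1,s_2\}$ with $S_1=\{s_1\}$, $S_2=\{s_2\}$, $V_H=\{a\}$, $T_1=T_2=\{a\}$, $W(s_1,a)=W(s_2,a)=1$, and $\F_i=\{f_i\}$ with $f_i\colon s_i\mapsto a$. Then $W_{\max}(f_i)=1$, so each factor evaluates to $(1,1)$ and the product is $(1,1)\otimes(1,1)=(2,1)$; but $f_1\Join f_2$ is the constant map $\{s_1,s_2\}\to\{a\}$ with $W_{\max}(f_1\Join f_2)=\max(1,1)=1$, whence $(\text{MinWeight},\#\text{ArgMinWeight})_W(\F_1\Join\F_2)=(1,1)\neq(2,1)$. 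The one genuinely delicate point is the second paragraph: unlike $W_{\sum}$, which aggregates over the \emph{domain} and therefore produces a strong pre-morphism, $W_{\max}$ aggregates over the \emph{codomain}, so the identity $W_{\max}(f_1\Join f_2)=W_{\max}(f_1)+W_{\max}(f_2)$ --- and hence the $\Join$-axiom --- truly needs $T_1\cap T_2=\emptyset$; recognising that this is the exact obstruction, and that it is already witnessed by the two-vertex instance above, is the crux of the argument.
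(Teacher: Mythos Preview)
Your proposal is correct. The paper itself states Lemma~\ref{lemma:MinWeight dioid} without proof (as it does for all the example pre-morphisms in Appendix~\ref{app:examples of pre-morphisms}), so there is no argument to compare against; your verification---the sub-semiring argument, the crucial identity $W_{\max}(f_1\Join f_2)=W_{\max}(f_1)+W_{\max}(f_2)$ under disjoint codomains, and the two-point counterexample to strongness---fills in exactly what the paper leaves implicit.
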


Finally, we describe a last dioid, that adds a constraint on the preimages of the vertices of the target graph.

\begin{definition}
    
    Let $\#restrictive-list$ be the function that maps, for $G$ and $H$ two edge-labelled graphs, $S\subseteq V_G$, $T\subseteq V_H$, any $(\F,W)$ to $|\{ f\in\F \mid \forall v\in T, w(v)\neq +\infty \implies |f^{-1}(\{v\})|=W_1(v), \forall (u,v)\in S\times T, f(u)=v\implies W_2(u,v)=1 \}|$, for $\F\in\P(T^S)$ and $W=(W_1(v),W_2(u,v))_{u\in V_G,v\in V_H}\in (\overline{\N}\times\mathbf{2})^{V_G\times V_H}$ with $W_1=(W_1(v))_{u\in V_G,v\in V_H}\in \overline{\N}^{V_G\times V_H}$ a matrix of constant columns and $W_2\in\mathbf{2}^{V_G\times V_H}$.

\end{definition}

\begin{lemma}\label{lemma:restrictive dioid}

The $\#restrictive-list$ is a poly-time computable $\N$-pre-morphism with weights in $\overline{\N}\times\mathbf{2}$.

\end{lemma}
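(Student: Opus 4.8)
The plan is to check the three pre-morphism axioms for $\Omega:=\#\text{restrictive-list}$, valued in the semiring $(\N,+,\times,0_\N,1_\N)$ (already known to be a semiring, in fact a dioid, by Lemma~\ref{lemma:counting dioid}), and then the poly-time computability condition. Throughout, for subsets $S\subseteq V_G$, $T\subseteq V_H$, a weight $W=(W_1,W_2)$ (where the constant-columns hypothesis on $W_1$ is exactly what makes the value $W_1(v)$ well defined), and $f\in T^S$, I will say that $f$ is \emph{$W$-valid relative to $(S,T)$} if (i) $|f^{-1}(\{v\})|=W_1(v)$ for every $v\in T$ with $W_1(v)\neq+\infty$, and (ii) $W_2(u,f(u))=1$ for every $u\in S$; note (ii) is just the list condition rewritten using $f(u)\in T$, so that $\Omega_W(\F)=|\{f\in\F:\ f\text{ is }W\text{-valid}\}|$. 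The empty-set axiom $\Omega_W(\emptyset)=0=0_\N$ is immediate, and the $\uplus$-axiom follows because for disjoint $\F_1,\F_2\in\P(T^S)$ the set of $W$-valid functions of $\F_1\uplus\F_2$ is the disjoint union of the $W$-valid functions of $\F_1$ and of $\F_2$, so the cardinalities add.

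The $\Join$-axiom carries all the content. Fix disjoint $S_1,S_2\subseteq V_G$, disjoint $T_1,T_2\subseteq V_H$, $\F_1\in\P(T_1^{S_1})$, $\F_2\in\P(T_2^{S_2})$, and write $f=f_1\Join f_2\in (T_1\uplus T_2)^{S_1\uplus S_2}$ for $f_i\in\F_i$. The key observation is that $W$-validity decomposes over the join: since $T_1\cap T_2=\emptyset$, for $v\in T_1$ one has $f^{-1}(\{v\})=f_1^{-1}(\{v\})$ (the $S_2$-part contributes nothing) and symmetrically for $v\in T_2$, so condition (i) for $f$ relative to $(S_1\uplus S_2,T_1\uplus T_2)$ is equivalent to the conjunction of (i) for $f_1$ relative to $(S_1,T_1)$ and (i) for $f_2$ relative to $(S_2,T_2)$; likewise $f(u)=f_1(u)$ for $u\in S_1$ and $f(u)=f_2(u)$ for $u\in S_2$, so condition (ii) splits the same way. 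Hence $f_1\Join f_2$ is $W$-valid iff both $f_1$ and $f_2$ are. Since $S_1\cap S_2=\emptyset$, the map $(f_1,f_2)\mapsto f_1\Join f_2$ is a bijection $\F_1\times\F_2\to\F_1\Join\F_2$ carrying pairs of $W$-valid functions exactly onto $W$-valid functions, and taking cardinalities gives $\Omega_W(\F_1\Join\F_2)=\Omega_W(\F_1)\times\Omega_W(\F_2)$.

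I expect the decomposition step for condition (i) to be the only subtle point, and it is precisely where disjointness of $T_1$ and $T_2$ is essential: when $T_1\cap T_2\neq\emptyset$ the preimage cardinalities of a shared $v$ mix and $W$-validity no longer decomposes (e.g.\ with $|T_1|=|T_2|=1$, $W_1(v)=1$, and $f_1,f_2$ both hitting $v$, the product of the two singleton values is $1$ while the join value is $0$), which is why $\Omega$ is a pre-morphism but not a strong one.

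Finally, for poly-time computability: $+$ and $\times$ on $\N$ are constant-time, and for the constant function $f^S_{\{a\}}$ we have $(f^S_{\{a\}})^{-1}(\{a\})=S$, so $\Omega_W(\{f^S_{\{a\}}\})=1$ if $W_1(a)=+\infty$ or $|S|=W_1(a)$, and $W_2(u,a)=1$ for all $u\in S$, and $\Omega_W(\{f^S_{\{a\}}\})=0$ otherwise; this is decidable in time polynomial in $|S|$ and the bit-length of $W_1(a)$. Assembling the axiom checks with this computation establishes that $\#\text{restrictive-list}$ is a poly-time computable $\N$-pre-morphism with weights in $\overline{\N}\times\mathbf{2}$.
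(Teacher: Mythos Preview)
Your proof is correct. The paper states this lemma without proof (all the lemmas in Appendix~\ref{app:examples of pre-morphisms} are left as routine verifications), and what you have written is exactly the intended direct check of the three pre-morphism axioms together with the singleton computation; your observation that disjointness of $T_1,T_2$ is essential for the preimage decomposition, and hence that $\#\text{restrictive-list}$ is not strong, is also consistent with the paper's treatment (cf.\ Table~\ref{tab:semiring generalisations}).
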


Computing the value by $\#restrictive-list$ of the set of solutions of an instance of BINARY-CSP($\Gamma$) or $H$-($\R$-MORPHISM), solves a stronger version of the list generalisation, which is the one defined by Diaz et. al \cite{diaz2005restrictive} (restricted to the less general case of $q$-COLORING), while also counting the number of solutions. Notice that, by taking weights in $\Nbar\times\Rbar$ (instead of simply $\Nbar\times\mathbf{2}$), we could have, similarly as previously, encoded a weighted version.

\bibliographystyle{abbrv}
\bibliography{biblio}

\begin{thebibliography}{10}

\bibitem{barto2021algebraic}
L.~Barto, J.~Bul{\'\i}n, A.~Krokhin, and J.~Opr{\v{s}}al.
\newblock Algebraic approach to promise constraint satisfaction.
\newblock {\em Journal of the ACM (JACM)}, 68(4):1--66, 2021.

\bibitem{DBLP:conf/icalp/BergeBD22}
P.~Berg{\'{e}}, {\'{E}}.~Bonnet, and H.~D{\'{e}}pr{\'{e}}s.
\newblock Deciding twin-width at most 4 is np-complete.
\newblock In M.~Bojanczyk, E.~Merelli, and D.~P. Woodruff, editors, {\em
  Proceedings of the 49th International Colloquium on Automata, Languages, and
  Programming, (ICALP-2022)}, volume 229 of {\em LIPIcs}, pages 18:1--18:20.
  Schloss Dagstuhl - Leibniz-Zentrum f{\"{u}}r Informatik, 2022.

\bibitem{bistarelli1997semiring}
S.~Bistarelli, U.~Montanari, and F.~Rossi.
\newblock Semiring-based constraint satisfaction and optimization.
\newblock {\em Journal of the ACM (JACM)}, 44(2):201--236, 1997.

\bibitem{bjorklund2009set}
A.~Bj{\"o}rklund, T.~Husfeldt, and M.~Koivisto.
\newblock Set partitioning via inclusion-exclusion.
\newblock {\em SIAM Journal on Computing}, 39(2):546--563, 2009.

\bibitem{bodlaender2010faster}
H.~L. Bodlaender, E.~J.~v. Leeuwen, J.~M. Van~Rooij, and M.~Vatshelle.
\newblock Faster algorithms on branch and clique decompositions.
\newblock In {\em International Symposium on Mathematical Foundations of
  Computer Science}, pages 174--185. Springer, 2010.

\bibitem{bonnet2022twin8}
{\'E}.~Bonnet, D.~Chakraborty, E.~J. Kim, N.~K{\"o}hler, R.~Lopes, and
  S.~Thomass{\'e}.
\newblock Twin-width {VIII}: delineation and win-wins.
\newblock {\em arXiv preprint arXiv:2204.00722}, 2022.

\bibitem{bonnet2022twin-exponential-treewidth}
{\'E}.~Bonnet and H.~D{\'e}pr{\'e}s.
\newblock Twin-width can be exponential in treewidth.
\newblock {\em arXiv preprint arXiv:2204.07670}, 2022.

\bibitem{bonnet2021twin2}
{\'E}.~Bonnet, C.~Geniet, E.~J. Kim, S.~Thomass{\'e}, and R.~Watrigant.
\newblock Twin-width ii: small classes.
\newblock In {\em Proceedings of the 32nd ACM-SIAM Symposium on Discrete
  Algorithms (SODA-2021)}, pages 1977--1996. SIAM, 2021.

\bibitem{bonnet2020twin3}
E.~Bonnet, C.~Geniet, E.~J. Kim, S.~Thomass\'{e}, and R.~Watrigant.
\newblock {Twin-width {III}: Max Independent Set, Min Dominating Set, and
  Coloring}.
\newblock In N.~Bansal, E.~Merelli, and J.~Worrell, editors, {\em Proceedings
  of the 48th International Colloquium on Automata, Languages, and Programming
  (ICALP-2021)}, volume 198 of {\em Leibniz International Proceedings in
  Informatics (LIPIcs)}, pages 35:1--35:20, Dagstuhl, Germany, 2021. Schloss
  Dagstuhl -- Leibniz-Zentrum f{\"u}r Informatik.

\bibitem{bonnet2022twin7}
{\'E}.~Bonnet, C.~Geniet, R.~Tessera, and S.~Thomass{\'e}.
\newblock Twin-width {VII}: groups.
\newblock {\em arXiv preprint arXiv:2204.12330}, 2022.

\bibitem{bonnet2022twin4}
{\'E}.~Bonnet, U.~Giocanti, P.~Ossona~de Mendez, P.~Simon, S.~Thomass{\'e}, and
  S.~Toru{\'n}czyk.
\newblock Twin-width {IV}: ordered graphs and matrices.
\newblock In {\em Proceedings of the 54th Annual ACM SIGACT Symposium on Theory
  of Computing (STOC-2022)}, pages 924--937, 2022.

\bibitem{bonnet2022twin6}
{\'E}.~Bonnet, E.~J. Kim, A.~Reinald, and S.~Thomass{\'e}.
\newblock Twin-width vi: the lens of contraction sequences.
\newblock In {\em Proceedings of the 33rd Annual ACM-SIAM Symposium on Discrete
  Algorithms (SODA-2022)}, pages 1036--1056. SIAM, 2022.

\bibitem{bonnet2022twin-poly-kernels}
{\'E}.~Bonnet, E.~J. Kim, A.~Reinald, S.~Thomass{\'e}, and R.~Watrigant.
\newblock Twin-width and polynomial kernels.
\newblock {\em Algorithmica}, pages 1--38, 2022.

\bibitem{bonnet2020twin1}
{\'E}.~Bonnet, E.~J. Kim, S.~Thomass{\'e}, and R.~Watrigant.
\newblock Twin-width {I}: tractable {FO} model checking.
\newblock In {\em Proceedings of the 61 61st Annual Symposium on Foundations of
  Computer Science (FOCS-2020)}, pages 601--612. IEEE, 2020.

\bibitem{bonnet2022reduced}
{\'E}.~Bonnet, O.-j. Kwon, D.~R. Wood, et~al.
\newblock Reduced bandwidth: a qualitative strengthening of twin-width in
  minor-closed classes (and beyond).
\newblock {\em arXiv preprint arXiv:2202.11858}, 2022.

\bibitem{bonnet2021twin-permutations}
{\'E}.~Bonnet, J.~Ne{\v{s}}et{\v{r}}il, P.~O. de~Mendez, S.~Siebertz, and
  S.~Thomass{\'e}.
\newblock Twin-width and permutations.
\newblock {\em arXiv preprint arXiv:2102.06880}, 2021.

\bibitem{DBLP:journals/corr/abs-2007-09099}
A.~A. Bulatov.
\newblock A dichotomy theorem for nonuniform csps simplified.
\newblock {\em CoRR}, abs/2007.09099, 2020.

\bibitem{cai2017}
J.-Y. Cai and X.~Chen.
\newblock Complexity of counting csp with complex weights.
\newblock {\em Journal of the ACM (JACM)}, 64(3), jun 2017.

\bibitem{courcelle1990monadic}
B.~Courcelle.
\newblock The monadic second-order logic of graphs. i. recognizable sets of
  finite graphs.
\newblock {\em Information and computation}, 85(1):12--75, 1990.

\bibitem{DBLP:journals/mst/CourcelleMR00}
B.~Courcelle, J.~A. Makowsky, and U.~Rotics.
\newblock Linear time solvable optimization problems on graphs of bounded
  clique-width.
\newblock {\em Theory Comput. Syst.}, 33(2):125--150, 2000.

\bibitem{diaz2002counting}
J.~D{\'\i}az, M.~Serna, and D.~M. Thilikos.
\newblock Counting h-colorings of partial k-trees.
\newblock {\em Theoretical Computer Science}, 281(1-2):291--309, 2002.

\bibitem{diaz2005restrictive}
J.~D{\'\i}az, M.~Serna, and D.~M. Thilikos.
\newblock The restrictive h-coloring problem.
\newblock {\em Discrete Applied Mathematics}, 145(2):297--305, 2005.

\bibitem{downey2012parameterized}
R.~G. Downey and M.~R. Fellows.
\newblock {\em Parameterized complexity}.
\newblock Springer Science \& Business Media, 2012.

\bibitem{dyer2000complexity}
M.~Dyer and C.~Greenhill.
\newblock The complexity of counting graph homomorphisms.
\newblock {\em Random Structures \& Algorithms}, 17(3-4):260--289, 2000.

\bibitem{escoffier2006weighted}
B.~Escoffier, J.~Monnot, and V.~T. Paschos.
\newblock Weighted coloring: further complexity and approximability results.
\newblock {\em Information Processing Letters}, 97(3):98--103, 2006.

\bibitem{fomin2007exact}
F.~V. Fomin, P.~Heggernes, and D.~Kratsch.
\newblock Exact algorithms for graph homomorphisms.
\newblock {\em Theory of Computing Systems}, 41(2):381--393, 2007.

\bibitem{DBLP:conf/icalp/GanianHKOS22}
R.~Ganian, T.~Hamm, V.~Korchemna, K.~Okrasa, and K.~Simonov.
\newblock The fine-grained complexity of graph homomorphism parameterized by
  clique-width.
\newblock In M.~Bojanczyk, E.~Merelli, and D.~P. Woodruff, editors, {\em 49th
  International Colloquium on Automata, Languages, and Programming, {ICALP}
  2022, July 4-8, 2022, Paris, France}, volume 229 of {\em LIPIcs}, pages
  66:1--66:20. Schloss Dagstuhl - Leibniz-Zentrum f{\"{u}}r Informatik, 2022.

\bibitem{hell1990complexity}
P.~Hell and J.~Ne{\v{s}}et{\v{r}}il.
\newblock On the complexity of h-coloring.
\newblock {\em Journal of Combinatorial Theory, Series B}, 48(1):92--110, 1990.

\bibitem{kobler2003edge}
D.~Kobler and U.~Rotics.
\newblock Edge dominating set and colorings on graphs with fixed clique-width.
\newblock {\em Discrete Applied Mathematics}, 126(2-3):197--221, 2003.

\bibitem{okrasa2020finegrained}
K.~Okrasa and P.~R. ażewski.
\newblock Fine-grained complexity of the graph homomorphism problem for
  bounded-treewidth graphs.
\newblock {\em SIAM Journal on Computing}, 50(2):487--508, 2021.

\bibitem{rossi2006handbook}
F.~Rossi, P.~Van~Beek, and T.~Walsh.
\newblock {\em Handbook of constraint programming}.
\newblock Elsevier, 2006.

\bibitem{wahlstrom2011new}
M.~Wahlstr{\"o}m.
\newblock New plain-exponential time classes for graph homomorphism.
\newblock {\em Theory of Computing Systems}, 49(2):273--282, 2011.

\bibitem{DBLP:conf/ijcai/Wilson05}
N.~Wilson.
\newblock Decision diagrams for the computation of semiring valuations.
\newblock In L.~P. Kaelbling and A.~Saffiotti, editors, {\em Proceedings of the
  19th International Joint Conference on Artificial Intelligence (IJCAI-2005)},
  pages 331--336. Professional Book Center, 2005.

\bibitem{DBLP:journals/jacm/Zhuk20}
D.~Zhuk.
\newblock A proof of the {CSP} dichotomy conjecture.
\newblock {\em Journal of the ACM (JACM)}, 67(5):30:1--30:78, 2020.

\end{thebibliography}
\end{document}